\documentclass[a4paper,UKenglish,cleveref, autoref,thm-restate,numberwithinsect]{lipics-v2021}

\usepackage[T1]{fontenc}
\usepackage{graphicx}

\usepackage[utf8]{inputenc}

\usepackage{longtable}
\usepackage{rotating}
\usepackage{textcomp}
\usepackage{version}
\usepackage{hyperref}
\usepackage{amsmath}
\usepackage{amsfonts}
\usepackage{amssymb,dsfont}
\usepackage{mathtools}

\usepackage[inline, shortlabels]{enumitem}
\usepackage{appendix}
\usepackage[svgnames]{xcolor}
\usepackage{tikz}
\usetikzlibrary{automata,positioning, backgrounds}
\usetikzlibrary{arrows,automata}
\usetikzlibrary{shapes,snakes}
\usetikzlibrary{fit}
\usepackage{caption}
\usepackage{subcaption}
\usepackage{MnSymbol,wasysym}
\usepackage{ stmaryrd }
\usepackage{environ}
\usepackage{float}
\usepackage{comment}
\usepackage{framed}

\usepackage{tcolorbox}

\usetikzlibrary{shapes.geometric,arrows}
\usetikzlibrary{decorations.pathmorphing}
\usepackage{lineno}
\linenumbers

\usepackage{tabularx,lipsum,environ}

\newcommand{\N}{\mathbb{N}}
\newcommand{\tww}[1]{\mathrm{tww}(#1)}

\DeclareMathOperator{\tw}{tw}
\DeclareMathOperator{\twww}{tww}
\DeclareMathOperator{\dtw}{dtw}
\DeclareMathOperator{\dtww}{dtww}
\DeclareMathOperator{\ttw}{tw\ensuremath{_{\rightarrow}}}
\DeclareMathOperator{\ttwww}{tww\ensuremath{_{\rightarrow}}}
\DeclareMathOperator{\qr}{qr}
\DeclareMathOperator{\dist}{d}

\newcommand{\unionG}[1]{\ensuremath{#1_\downarrow}}
\newcommand{\expG}[1]{\ensuremath{#1_\rightarrow}}
\newcommand{\tempE}{\ensuremath{E^\textrm{temp}}}
\newcommand{\adjE}{\ensuremath{E^\textrm{loc}}}

\newcommand{\localstatic}[1]{\ensuremath{G}^r}

\newcommand{\FO}{\ensuremath{\textsf{FO}}}
\newcommand{\MSO}{\ensuremath{\textsf{MSO}}}
\newcommand{\MSOT}{\ensuremath{\textsf{MSOT}}}
\newcommand{\MSOone}{\ensuremath{\textsf{MSOT}_1}}
\newcommand{\FOone}{\ensuremath{\textsf{FOT}_1}}
\newcommand{\MSOtwo}{\ensuremath{\textsf{MSOT}_2}}

\newcommand{\inc}{\operatorname{inc}}
\newcommand{\app}{\operatorname{app}}
\newcommand{\adj}{\operatorname{adj}}
\newcommand{\suc}{\operatorname{succ}}
\newcommand{\simd}{\operatorname{sim}}
\newcommand{\ar}{\ensuremath{\textrm{ar}}}
\newcommand{\struct}[1]{\ensuremath{\langle #1 \rangle}}
\newcommand{\structone}[1]{\ensuremath{\lceil #1\rceil}}
\newcommand\Aset{\mathcal{A}}

\newcommand{\graphstruct}[1]{\lfloor #1 \rfloor}
\newcommand\Tbb{\mathbb{T}}
\newcommand{\width}{\operatorname{width}}
\newcommand{\Gset}{\mathcal{G}}

\newcommand{\closedsuc}{\ensuremath{\mathtt{ClosedSucc}}}
\newcommand{\sucstar}{\ensuremath{\mathtt{succ}^*}}
\newcommand{\closedtempsuc}{\ensuremath{\mathtt{ClosedTempSucc}}}
\newcommand{\temppath}{\ensuremath{\mathtt{TempPath}}}

\newcommand{\timevertex}{\ensuremath{\mathtt{{TimeVertex}}}}
\newcommand{\timeformula}{\ensuremath{\mathtt{{Time}}}}
\newcommand{\timeadj}{\ensuremath{\mathtt{TimeAdj}}}

\newcommand{\tempmatch}{\ensuremath{\mathtt{TempMatch}}}
\newcommand{\gammaedge}{\ensuremath{\gamma-\mathtt{Edge}}}

\newcommand{\structwo}[1]{\ensuremath{\lbrack #1 \rbrack}}
\newcommand{\Gaif}{\ensuremath{\textit{Gaif}}}
\newcommand{\neigh}[1]{\ensuremath{\mathcal{N}^{#1}}}
\newcommand{\ie}{i.e.\ }

\newcommand{\resp}{resp.\ }

\usepackage[textsize=small]{todonotes}

\makeatletter
\newcommand{\problemtitle}[1]{\gdef\@problemtitle{#1}}
\newcommand{\probleminput}[1]{\gdef\@probleminput{#1}}
\newcommand{\problemquestion}[1]{\gdef\@problemquestion{#1}}
\NewEnviron{problembox}{
  \problemtitle{}\probleminput{}\problemquestion{}
  \BODY
  \par\addvspace{.5\baselineskip}
  \noindent
  \begin{tabularx}{\textwidth}{@{\hspace{\parindent}} l X c}
    \multicolumn{2}{@{\hspace{\parindent}}l}{\@problemtitle} \\
    \hline\\
    \textbf{Input:} & \@probleminput \\
    \textbf{Output:} & \@problemquestion
  \end{tabularx}
  \par\addvspace{.5\baselineskip}
}
\makeatother

\usepackage{stackengine}
\stackMath
\usepackage{ifthen}

\newcounter{toutpetit}

\title{Model checking with temporal graphs and their derivative}

\author{Binh-Minh Bui-Xuan}{LIP6, CNRS, Sorbonne Universit\'e, France}{buixuan@lip6.fr}{}{}
\author{Florent Krasnopol}{Universit\'e de Lorraine, CNRS, Inria, LORIA}{florent.krasnopol@loria.fr}{}{}
\author{Bruno Monasson}{ENS Lyon}{bruno.monasson@ens-lyon.fr}{}{}
\author{Nathalie Sznajder}{LIP6, CNRS, Sorbonne Universit\'e, France}{nathalie.sznajder@lip6.fr}{}{}

\authorrunning{Binh-Minh Bui-Xuan, Florent Krasnopol, Bruno Monasson, and Nathalie Sznajder} 

\Copyright{Binh-Minh Bui-Xuan, Florent Krasnopol, Bruno Monasson, and Nathalie Sznajder} 

\ccsdesc[500]{Theory of computation~Logic}
\ccsdesc[500]{Theory of computation~Fixed parameter tractability}
\ccsdesc[500]{Mathematics of computing~Graph theory}

\keywords{temporal graphs, dynamic network, tree decomposition, monadic second order logic, first order logic, derivative}

\category{} 

\relatedversion{} 

\acknowledgements{We are grateful to the anonymous reviewers of past submissions for their helpful comments which greatly improved the paper.}

\nolinenumbers 

\EventEditors{John Q. Open and Joan R. Access}
\EventNoEds{2}
\EventLongTitle{42nd Conference on Very Important Topics (CVIT 2016)}
\EventShortTitle{CVIT 2016}
\EventAcronym{CVIT}
\EventYear{2016}
\EventDate{December 24--27, 2016}
\EventLocation{Little Whinging, United Kingdom}
\EventLogo{}
\SeriesVolume{42}
\ArticleNo{23}

\begin{document}

\maketitle
\begin{abstract}
Temporal graphs are graphs where the presence or properties of their vertices and edges change over time.
When time is discrete, a temporal graph can be defined as a sequence of static graphs over a discrete time span, called lifetime, or as a single graph where each edge is associated with a specific set of time instants where the edge is alive.
For static graphs, Courcelle's Theorem asserts that any graph problem expressible in monadic second-order logic can be solved in linear time on graphs of bounded tree-width.
We propose the first adaptation of Courcelle's Theorem for monadic second-order logic on temporal graphs that does not explicitly rely on a parameter proportional to the lifetime, or defined as the maximum number of time-edges incident with any vertex which in the worst case is higher than the lifetime.
We then introduce the notion of derivative over a sliding time window of a chosen size, and define the tree-width and twin-width of the temporal graph's derivative.
We exemplify its usefulness with meta-theorems with respect to a temporal variant of first-order logic.
The resulting logic expresses a wide range of temporal graph problems including a version of temporal cliques, an important notion when querying time series databases for community structures.
\end{abstract}

\section{Introduction}
\label{section.introduction}
The concept of algorithmic meta-theorems (AMT)~\cite{G08,K11} has arisen as a broad solution framework for entire classes of (static) graph problems.
A typical statement for AMT is: any graph problem expressible in X-logic can be solved in polynomial time on graphs of bounded Y-parameter, where X and Y are to be specified.
A weaker statement for AMT is: any graph problem expressible in X-logic can be solved in polynomial time on graphs of bounded Y-parameter under the condition that an appropriate decomposition is given as part of the input.
Three cornerstones have been settled for: weak AMT with First Order logic $\FO$ (playing~X) and twin-width (playing Y)~\cite{BKTW22}; typical AMT with Monadic Second Order logic $\MSO_1$ and clique-width~\cite{CMR00}; typical AMT with $\MSO_2$ and tree-width~\cite{CER93}.
Note that more general AMTs using $\FO$ logic exist, such as with nowhere dense graphs~\cite{GKS17} or those with bounded local clique-width~\cite{BDGKMST22}.
We focus on the three first results whose proper definitions are given in Section~\ref{section.definitions}, or in Refs.~\cite{BKTW22,CE12}.
The three results are incomparable as $\FO$ is less expressive than $\MSO_1$ which is less expressive than $\MSO_2$, whereas graphs of bounded tree-width are also of bounded clique-width which in turn are of bounded twin-width.

Although these results are not new, a number of AMT have been developed in the past few years.
For instance, $\FO$ can be enriched with an atomic predicate called \texttt{conn}, expressing the existence of a path joining two vertices while avoiding a given set of vertices. In turn, $\FO$+\texttt{conn} is less expressive than $\FO$+\texttt{DP}, where \texttt{DP} is a predicate expressing the existence of internally vertex-disjoint paths between a set of source vertices. These lead to results tying $\FO$+\texttt{conn} (again, playing X) with a parameter (playing Y) called the Haj\'os number~\cite{PSSTV22}, and a result tying $\FO$+\texttt{DP} with a parameter called the Hadwiger number in Ref.~\cite{GST23}.
Many other AMT for static graphs have been discovered too, from both centralised~\cite{BDJ23,BFLPST16} and distributed~\cite{CKM25,FMRT24} contexts.

A temporal graph is a sequence of static graphs over a discrete time span, called lifetime.
Unfortunately, far fewer positive results have been found for AMT on temporal graphs.
To begin with, it is notoriously a tricky question to find polynomial algorithms solving natural generalisations of graph problems on temporal graphs.
Most relevant polynomial cases are about paths, walks, backup-routes~\cite{BFJ03,VBP22,TAPBP25}, and path-based measures such as closeness~\cite{CMM20} and betweenness centrality~\cite{N25}.
Contrasting these few cases, the temporal version of spanning tree is widely known to be $NP$-complete~\cite{BF03}, leading to important current works on spanners, even when the (temporal) graphs are reduced to be trees themselves~\cite{CCS24}.
Temporal cliques and their generalizations to $\Delta$-cliques and $(\Delta,\gamma)$-cliques are $NP$-complete~\cite{VLM16,HMNS17,BP19}.
The duo of max-flow and min-cut translates into several $NP$-complete temporal flavours~\cite{ZFMN20}.
Even classically polynomial matching translates to various $NP$-complete problems~\cite{BB18,BBR20,MMNZZ20,MMNZZ23} which remain $NP$-complete on extremely restricted temporal graphs, notably with very small tree-width~\cite{MMNZZ20}.
Since matching can naturally be expressed in $\MSO_1$, this is intuitively a major obstruction for obtaining an AMT for temporal graphs based on tree-width:
any AMT tying a tree-width based parameter with a temporal version of $\MSO_1$ would have to either exclude the $NP$-complete problems introduced in~\cite{BB18,BBR20,MMNZZ20,MMNZZ23} or have the value of the parameter unbounded on the restricted temporal graphs used in the hardness results in~\cite{MMNZZ20}.

Our work strives to rectify the very lack of AMT for temporal graphs.
A temporal graph is a sequence of static graphs $\Gset=(G_t)_{t\in\llbracket0,\tau-1\rrbracket}$ over the same set of $n$ vertices.
Every graph $G_t$ in the sequence is called a snapshot graph.
Parameter $\tau$ is called the lifetime of $\Gset$.
The union graph $\Gset_\downarrow$ is the graph whose edge set contains all edges appearing at any snapshot in the sequence.
The static expansion graph $\Gset_\rightarrow$ is the graph where, in addition to the snapshot edges, $G_t$ and $G_{t+1}$ are joined by edges between the occurrences at $t$ and at $t+1$ of every vertex.
Three popular ways to lift a parameter $k(G)$ of a static graph $G$ to temporal graphs are, respectively, when defining $k_\infty$ as the maximum $k(G_t)$ over every snapshot $G_t$; when defining $k_\downarrow$ as the parameter $k(\Gset_\downarrow)$ of its union graph; and when defining $k_\rightarrow$ as the parameter $k(\Gset_\rightarrow)$ of its static expansion graph.
For parameter tree-width $\tw$, Ref.~\cite{FMNRZ20} sketched a high-level proof that no AMT can be found tying a temporal version of monadic second-order logic ($\MSOT$) with $\tw_\infty$, unless $P=NP$.
The cases of $\tw_\rightarrow$ and $\tw_\downarrow$ are left open, which we address in the sequel.
On the positive side, an AMT exists for $\MSOT$ when also the lifetime $\tau$ is bounded in addition to bounded $\tw_\downarrow$~\cite{ZFMN20}.
Up to our knowledge, no attempt in finding an AMT for temporal graphs has been successful without bounding the lifetime~\cite{ZFMN20,DELS_arxiv}, or the maximum number of time-edges incident with any vertex~\cite{EMMZ21}.
The latter parameter is $\Theta(n\times\tau)$.

We describe now our contributions, along with the organisation of the paper.
We formally define in Section~\ref{section.definitions} Monadic Second Order logic for temporal graphs, $\MSOtwo$ along with a restriction called $\MSOone$, and some temporal graph width parameters.
We show in Section~\ref{section.first_amt}
the first AMT for temporal graphs where the lifetime is not a parameter, as in the sense of fixed-parameter tractable (FPT) complexity~\cite{DF99,FG06,N06,CFKLMPPS15}:
Every problem expressible by a sentence in $\MSOtwo$ can be solved on input $\Gset$ in FPT time parameterized by the size of the sentence and the expanded tree-width $\tw_\rightarrow(\Gset)=\tw(\Gset_\rightarrow)$.
While the static expansion graph $\Gset_\rightarrow$ has a large number of vertices, $n\times\tau$, it is a globally sparse graph where the maximum clique has size at most $n$.
Besides, $0\leq\tw_\rightarrow(\Gset)\leq 2n$.
When for instance $\Gset$ has at most one edge per snapshot and $\tw_\rightarrow(\Gset)\leq2$, our AMT results in a polynomial time algorithm while any result with parameter $\tau$ similar to Ref.~\cite{ZFMN20} would have time complexity exponential in $\tau$ unless ETH fails (Exponential Time Hypothesis: $3$-SAT cannot be solved in subexponential time).
Even though $\tw_\rightarrow(\Gset)=O(n)$, to control its explosive growth in the large graph $\Gset_\rightarrow$, which easily forms grid minors from the snapshots' path minors, we below introduce smaller slices called differentials.
These differentials allow for bounding grid-like structures while preserving essential information about the graph's structural evolution over time.

We show some hardness results in the remainder of Section~\ref{section.first_amt}.
First we show that no AMT can be found tying neither $\tw_\infty$ nor $\tw_\downarrow$ with $\MSOone$, hence not with $\MSOtwo$ neither, unless $P=NP$.
Moreover, for $\tw_\infty$ we obtain a hardness result even for a less expressive version of $\MSOone$, generalising the sketched high-level proof in Ref.~\cite{FMNRZ20}.
We then address a similar formalism for temporal twin-width, that are $\twww_\infty$, $\twww_\downarrow$ and $\twww_\rightarrow$.
We show the following result which can also be seen as the consequence of a recent result~\cite{Bonnet_arxiv} obtained independently and in parallel with our work.
We show that no AMT (even in weak version) can be found tying neither $\twww_\infty$, $\twww_\downarrow$ nor $\twww_\rightarrow$ with $\MSOone$, unless $P=NP$.
Note that no AMT in the typical version can be found for twin-width since deciding whether the twin-width of a graph is $4$ is already an $NP$-complete problem~\cite{BBD22}.
Note also that the same result where $\MSOone$ is replaced by $\MSOtwo$ follows from the static twin-width hardness, since \textsc{HamiltonPath} is a $MSO_2$ problem which remains $NP$-complete on graphs of twin-width $4$~\cite{BKTW22,IPS82}.
The above mentioned recent result proving the $NP$-completeness of \textsc{Coloring} on graphs of twin-width $4$~\cite{Bonnet_arxiv} can also be used to deduce our result.
Essentially, these results hint about a certain tightness of our above AMT result tying $\tw_\rightarrow$ with $\MSOtwo$.

The rest of our results are described in Section~\ref{section.dtw}. Turning our attention back on the positive side, we relax $\MSOT$ down to First Order logic and define a local version called $\FOone^\Delta$.
This is a less expressive logic than First Order logic.
Nonetheless, we exemplify its expressivity by writing formulas in $\FOone^\Delta$ for problems related to temporal cliques, which are $NP$-complete and have received a lot of recent research interests~\cite{VLM16,HMNS17,BP19}.
We then show an AMT (weak version) tying $\twww_\rightarrow$ with $\FOone^\Delta$, that is, every problem expressible by a sentence in $\FOone^\Delta$ can be solved on input $\Gset$ and an appropriate decomposition for twin-width in FPT time parameterized by the size of the sentence and the expanded twin-width $\twww_\rightarrow(\Gset)=\twww(\Gset_\rightarrow)$.

We further investigate a finer definition of width parameter for temporal graphs, with tree-width and twin-width of smaller graphs than the static expansion graph $\Gset_\rightarrow$.
We introduce the derivative of $\Gset$ as subgraphs of $\Gset_\rightarrow$ induced by consecutive snapshots within a sliding time window.
Formally, the differential of $\Gset$ at $t$ over $dt=\Delta$ is a static graph, defined as the static expansion graph of the $\Delta$ snapshots of $\Gset$ starting at $t$ and ending at $t+\Delta-1$.
We then show how every problem expressible by a sentence in $\FOone^\Delta$ can be solved on input $\Gset$ and an appropriate decomposition for twin-width in FPT time parameterized by the size of the sentence and the $\Delta$-twin-width, defined as the maximum twin-width over all differentials of $\Gset$.
As a by-product we also obtain the same result for $\Delta$-tree-width (albeit we do not need the assumption about the decomposition in the input, using \textit{e.g.}~\cite{RS95,B96,OS06}).

\section{Temporal Graphs Model Checking}
\label{section.definitions}
For a proper introduction to graph theory, refer to any standard textbook, \textit{e.g.}~\cite{BM08,D25,KT06}.
Unless stated otherwise, graphs in this paper are loopless simple undirected graphs.
For any graph $G$ we note $V(G)$ and $E(G)$ its vertex and edge sets.
If an edge exists between two distinct vertices $u\neq v$, we sometimes abusively note $uv=vu=\{u,v\}\in E(G)$.
A \textit{temporal graph} is a tuple $\Gset = (V, E_0, E_1, \ldots, E_{\tau-1})$, where $\tau\in\mathbb N$ is called the \emph{lifetime}, and every $G_t = (V, E_t)$ for $0\leq t<\tau$ is a graph called the \emph{snapshot at $t$}.
Equivalently, we also note $\Gset=(G_t)_{0\leq t<\tau}$ when the accent is put on the snapshots. 
The sizes of $G$ and $\Gset$ are $||G||=|V(G)|+|E(G)|$ and $||\Gset||=|V|+\Sigma_{t=0}^{\tau-1}|E_t|$, respectively.

The \emph{subgraph of $G$ induced by $V'\subseteq V(G)$} is noted $G[V']=(V',E')$ where $E'=\{uv\in E(G):u\in V'\land v\in V'\}$.
A \emph{path} of length $p\geq$ in $G$ is a sequence of $p$ successively adjacent edges joining $p+1$ distinct vertices.
A graph is \emph{connected} if there is a path joining any two distinct vertices.
A \emph{tree} is a connected graph over $n$ vertices and $n-1$ edges, for some $n\in\mathbb N$.

\paragraph*{Monadic Second Order Logic for Temporal Graphs}
A \emph{signature} $\sigma$ is a finite set of relation
symbols $R$, each with an arity $\ar(R)$.
A \emph{$\sigma$-structure} $\Aset$ consists of a non-empty set $D$, called \emph{universe} (or \emph{domain}), and,
for each $R\in\sigma$, a relation $R^\Aset\subseteq D^{\ar(R)}$.
For $D' \subseteq D$, the \emph{substructure of $\Aset$ induced by $D'$}, denoted $\struct{D'}^{\Aset}$, is the $\sigma$-structure with domain $D'$ and relations
$R^{\struct{D'}^{\Aset}}=R^\Aset\cap {D'}^{\ar(R)}$, for every symbol $R\in \sigma$.
The \emph{Gaifman graph} of $\Aset$ is the graph $\Gaif(\Aset)=(D, A_\Aset)$,where
$(a,b)\in A_\Aset$ if and only if $a\neq b$ and $a$ and $b$ appear together in some tuple of a relation $R^\Aset$.
Intuitively, it is the graph where two elements are linked by an edge if they are linked by some relation in the structure.

Model-Checking asks whether a structure satisfies a given property. In this
work, structures are (temporal) graphs. Monadic Second Order Logic has proven to be powerful enough to express many properties on graphs while staying decidable. 
Given two disjoint countable sets of symbols of variables,
$\mathcal{X}_0$ and $\mathcal{X}_1$ (first order and second order variables), an \MSO~ formula over a signature $\sigma$ (noted $\MSO(\sigma)$) is defined by
\begin{displaymath}
\varphi:= x=x'~|~ x\in X ~|~R(x_1,\dots, x_k) ~|~ \neg\varphi ~|~\varphi\land \varphi ~|~\exists x.\varphi ~|~\exists X.\varphi
\end{displaymath}
for $x,x',x_1,\dots, x_k\in\mathcal{X}_0$, $X\in \mathcal{X}_1$ and for all $R\in\sigma$ such that $\ar(R)=k$.

The semantics of an $\MSO(\sigma)$~formula over a structure $\Aset$ of domain $D$ is defined using two \emph{valuation functions} $\nu_0:\mathcal{X}_0\rightarrow D$ and $\nu_1:\mathcal{X}_1\rightarrow 2^{D}$. Given a valuation $\nu$, a variable $\alpha \in \mathcal{X}_0$ (\resp $\mathcal{X}_1$) and $m \in D$ (\resp $2^{D}$), we define the new valuation $\nu[\alpha\leftarrow m]$ by 
$\nu[\alpha\leftarrow m](\alpha)=m$, and $\nu[\alpha\leftarrow m](\beta)=\nu(\beta)$ for all $\beta$ in the definition domain of $\nu$.
 Then, a structure $\Aset$ over a domain $D$ is said to satisfy a formula $\varphi$ in $\MSO(\sigma)$, written $\Aset\models_{\nu_0, \nu_1}\varphi$ according to the following inductive definition:
 \begin{align*}
    \Aset&\models_{\nu_0,\nu_1} x=x' & \textrm{ if and only if } \nu_0(x)=\nu_0(x')\\
    \Aset&\models_{\nu_0,\nu_1} x\in X & \textrm{ if and only if } \nu_0(x)\in\nu_1(X)\\
    \Aset&\models_{\nu_0,\nu_1} R(x_1,\dots, x_k) & \textrm{ if and only if } (\nu_0(x_1),\dots,\nu_0(x_k))\in R^\Aset
     \textrm{ for all relation $R\in\sigma$}\\
    \Aset&\models_{\nu_0,\nu_1} \exists x.~\varphi & \textrm{ if and only if there exists $v\in D$ such that } \Aset\models_{\nu_0[x\leftarrow v],\nu_1}\varphi\\
    \Aset&\models_{\nu_0,\nu_1} \exists X.~\varphi & \textrm{ if and only if there exists $U\subseteq D$ such that }\Aset\models_{\nu_0,\nu_1[X\leftarrow U]}\varphi.
\end{align*}
The Boolean connectives $\neg\varphi$ and $\varphi\wedge \varphi'$ are interpreted by their usual meaning. We use the classical shortcuts $\varphi_1\vee \varphi_2:=\neg(\neg\varphi_1\wedge \neg \varphi_2)$, $\varphi_1\rightarrow \varphi_2:=
\neg \varphi_1\vee\varphi_2$ and $\forall \chi.~\varphi:= \neg(\exists \chi.~\neg \varphi)$, for $\chi\in\mathcal{X}_0\cup\mathcal{X}_1$.
Also, we will use the following shortcut
 $\exists! x.~ \varphi(x)$, for any formula $\varphi(x)$ to express the fact that there is a unique $x$ that satisfies $\varphi$.

Monadic Second Order logic over graphs is usually presented in two variants: $\MSO_1$, which quantifies
only the vertices, and $\MSO_2$, which quantifies over both vertices and edges.
$\MSO_2$ is strictly more expressive than $\MSO_1$.
We let $\MSOtwo$ be the logic $\MSO(\sigma_2)$ with $\sigma_2=\{\inc, \suc,\app\}$.
 A temporal graph $\Gset=(V,E_0,\dots, E_{\tau-1})$ is associated to a $\sigma_2$-structure $\structwo{\Gset}$ with domain 
$D_\Gset=V\cup (V\times \llbracket0, \tau-1 \rrbracket)\cup \bigcup_{0\leq t<\tau} (E_t\times \{t\})$. In particular, variables of the formula can designate either vertices or edges at different snapshots. 
The symbol $\suc$ is interpreted by $\suc^{\structwo{\Gset}}=\{((v,t),(v,t+1))\mid v\in V\textrm{ and }0\leq t<\tau-1\}$, and relates two successive time vertices, and $\app^{\structwo{\Gset}}=\{(v,(v,t))\mid v\in V, 0\leq t\leq \tau-1\}$ relates an untime vertex to all of its time representatives. 
The interpretation of the relation $\inc$ associates a temporal edge with a time vertex to which it is incident, i.e., $\inc^{\structwo{\Gset}}=\{((vv',t), (v,t))\mid 0\leq t<\tau_1\textrm{ and } vv'\in E_t\}$.

$\MSOone$ uses the signature $\sigma_1=\{\adj,\suc,\app\}$.
We interpret a temporal graph $\Gset=(V,E_0,\dots, E_{\tau-1})$ by a $\sigma_1$-structure
$\structone{\Gset}$ over the domain $D_\Gset^1=V\cup (V\times\llbracket 0, \tau-1\rrbracket)$. Indeed, we use avatars of the vertices, that represent them in the different snapshots. 
We refer to vertices from $V$ by
\emph{untime vertices}, and to their avatars from $V\times \llbracket0, \tau-1 \rrbracket$ by \emph{time vertices}. 
The relation $\suc$ is interpreted by $\suc^{\structone{\Gset}}=\{((v,t),(v,t+1))\mid v\in V\textrm{ and }0\leq t<\tau-1\}$, and relates two successive time vertices, and $\app^{\structone{\Gset}}=\{(v,(v,t))\mid v\in V, 0\leq t\leq \tau-1\}$ relates an untime vertex to any of its time representatives. 
The relation $\adj$ is interpreted by $\adj^{\structone{\Gset}}=\{((u,t),(v,t))\mid u,v\in V, 0\leq t\leq \tau-1\textrm{ and } uv\in E_t\}$, simply relating two time vertices when they are adjacent in the snapshot in which they stand; it is hence a symmetric relation. 
Elements of a $\sigma_1$-structure are shown in Figure~\ref{fig:rep_expansion_time_graph}.
\begin{figure}
\begin{center}
\scalebox{.8}{
\begin{minipage}{.5\textwidth}
        \begin{center}

    \def\hsep{3.5}
    \def\vsep{1.5}
	\begin{tikzpicture}[scale=.5,dot/.style={draw,circle,minimum size=0.15cm,inner sep=0pt,outer sep=0pt,fill=black}]
	\path \foreach \i in {0,...,4}
			\foreach \j in {0,...,3}
			{coordinate [dot] (u\i\j) at (\hsep*\j,-\vsep*\i)};

    \draw[dashed, line width=0.2ex] \foreach \i in {0,...,4}
	\foreach \j in {0,...,2}
	{(u\i\j)--(u\i\the\numexpr\j+1)};

	\draw[line width=0.3ex] (u00) to[bend right=45] (u20)
	(u01)--(u11)
	(u10) to[bend left=45] (u30)
	(u22) to[bend right=45] (u42)
	(u21)--(u31)
	(u22)--(u32);

	\foreach \i in {0,...,4}
		\node[draw = purple!80] (u\i)at(-\hsep,-\vsep*\i) {$u_\i$};
	\foreach \j in {1,...,4}
		\node (t\j)at(\hsep*\j - \hsep, -5*\vsep) {$t = \the\numexpr\j-1$};


    \draw[line width=0.2ex, purple!80, ->] (u2) to [] node[below] {$\app$} (u20);
    \draw[line width=0.2ex, purple!80, ->] (u2) to [bend right=30] node[right=0.7cm] {$\app$} (u21);
    \draw[line width=0.2ex, purple!80, ->] (u2) to [bend left=20] node[below] {$\app$} (u22);
    \draw[line width=0.2ex, purple!80, ->] (u12) to [bend left=20] node[above] {$\suc$} (u13);
    \draw[line width=0.2ex, purple!80, ->] (u01) to[bend left=45] node[right] {$\adj$} (u11);
    \draw[line width=0.2ex, purple!80, ->] (u11) to[bend left=45] node[left] {$\adj$} (u01);

	\end{tikzpicture}
        \end{center}
\end{minipage}
\begin{minipage}{.4\textwidth}
        \begin{center}
      	\begin{tikzpicture}[scale =0.35, state/.style={draw=black!50,circle,very thick,fill=orange!30}]
		\node[state] (u0) at (0,-1.5) {$u_0$};
		\node[state] (u1) at (3,0) {$u_1$};
		\node[state] (u2) at (3,-3) {$u_2$};
		\node[state] (u3) at (6,-1.5) {$u_3$};
		\node[state] (u4) at (3,-6) {$u_4$};

		\path[black, line width=0.2ex]
		(u0) edge (u1)
		(u1) edge (u3)
		(u0) edge (u2)
		(u2) edge (u3)
		(u2) edge (u4);
	\end{tikzpicture}
        \end{center}
\end{minipage}}
\caption{(left, black edges) A temporal graph $\Gset$; (left, purple edges) some relations of its $\sigma_1$-structure $\structone{\Gset}$,
its static expansion graph is pictured when taking both black and dashed edges;
(right) the union graph of $\Gset$.}
\label{fig:rep_expansion_time_graph}
\end{center}
\end{figure}
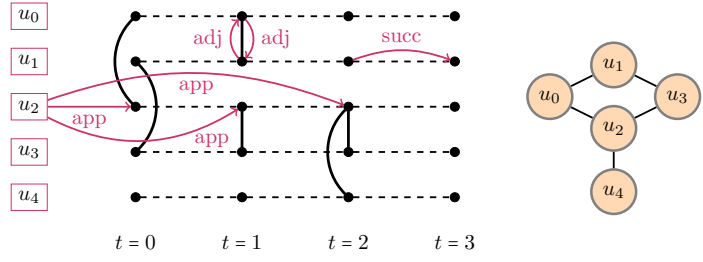

The \emph{free variables} of a formula $\varphi$ are the variables $x\in\mathcal{X}_0\cup\mathcal{X}_1$ that are not under the scope of a quantifier $\exists$. 
When $x_1,\dots, x_n$ are the free variables of $\varphi$, we may make them explicit by writing $\varphi(x_1,\dots, x_n)$.
When $\varphi$ has no free variable, it is called a \emph{sentence}. In that case, we do not need valuation functions to interpret the formula and simply write $\Aset\models\varphi$.
The quantifier rank of a formula $\varphi$, noted $\qr(\varphi)$ is the maximum number of nested quantifiers appearing in the formula. The size of a formula $\varphi$, denoted by $|\varphi|$,
is the total number of symbols used to build it : $|x=x'|=|x\in X|=|R(x,x')|=1$, for all $R$ in the signature, and $|\neg \varphi|=|\exists x.~\varphi|=|\exists X.~\varphi|=|\varphi|+1$, and 
$|\varphi\wedge\varphi'|=|\varphi|+|\varphi'|+1$.
In Section~\ref{subsec:mso-ex}, we give sentences to express two fundamental extensions of classical graph matching and $1$-connectivity.
As recalled in the introduction, temporal matching has many definitions which lead to $NP$-complete problems.
At the same time, deciding the complexity class of the various temporal versions of graph cuts is a quite involved task.
To give an idea of these problems, the following definition is one specific definition of temporal matching, resp.\ temporal cut, that is used in the literature.
\begin{itemize}
\item \textbf{Temporal matching:} does there exist a set of $k$ pairwise independent $\gamma$-edges in a given temporal graph? A $\gamma$-edge is a set of consecutive time edges over $\gamma$ snapshots between two same vertices. Two $\gamma$-edges are independent if they do not share any time vertex.
\item \textbf{Temporal cut of size $1$:} does there exist a vertex whose removal makes an input temporal graph not a temporally connected graph? A temporal graph is temporally connected if every pair of source and target vertices are starting and ending points of a temporal path.
\end{itemize}

\paragraph*{FPT Complexity and Width Parameters}
For a proper introduction to FPT complexity, refer to any standard textbook, \textit{e.g.}~\cite{DF99,FG06,N06,CFKLMPPS15}.
A parameterized problem is a language $L \subseteq \Sigma^* \times \mathbb N$, where $\Sigma$ is a finite alphabet.
It is FPT parameterized by $k$ if deciding whether $(x,k) \in L$ can be done in time $f(k) \cdot |x|^{\mathcal{O}(1)}$, where $f$ is an arbitrary function depending only on $k$.
Let $G$ be a (static) graph.

	Let $T=(V_T,E_T)$ be a tree.
        For each $u\in V_T$, let $B_u\subseteq V(G)$ be a subset called a \emph{bag}.
	The tuple $\Tbb = (T, \{B_u:u \in V_T\})$
	is a \emph{tree decomposition} of $G$ if it respects the following
	three conditions:
	        $(i)$ $V(G)=\cup_{u \in V_T} B_u$;
	        $(ii)$ for every edge $vw \in E(G)$, there exists a node $u\in V_T$
		such that $\{v, w\} \subseteq B_u$;
                $(iii)$ and for every $v \in V(G)$,
		the subgraph of $T$ induced by $\{u\in V_T: v\in B_u\}$ is a tree.
        The \emph{width} of $\Tbb$ is
	$\width(\Tbb)=\max\{|B_u| - 1:u \in V_T\}$.
	The \emph{tree-width} of $G$ is the minimum width over all
	tree decompositions of $G$, that is,
	$\tw(G) = \min\{\width(\Tbb):\Tbb\textrm{ is a tree decomposition of } G \}$.
        If $G$ is a $\sigma$-structure the tree-width of $G$ is defined as the tree-width of its Gaifman graph.
Numerous important results have stemmed from Courcelle's Theorem, which is:
\begin{theorem}[\cite{Courcelle90:graphrewriting}]\label{th:courcelle}
For every integer $w\in\mathbb N$ and every sentence $\varphi$ of \MSO{} over a signature $\sigma$ there
is a linear time algorithm that decides whether a given $\sigma$-structure $\Aset$ of tree-width at most $w$ satisfies $\varphi$.
That is, the complexity of \MSO{} model checking is linear-time FPT parameterized by $w$ and $|\varphi|$.
\end{theorem}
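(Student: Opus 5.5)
The plan is the classical automata-theoretic route. It has three steps: compute a tree decomposition, encode the structure as a labelled tree over a finite alphabet, and translate the sentence into a tree automaton.

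\textbf{Step 1: decomposition.} Given $\Aset$ of tree-width at most $w$, compute a tree decomposition of $\Gaif(\Aset)$ of width at most $w$. Bodlaender's algorithm does this in time $f(w)\cdot\|\Aset\|$. Then turn it, still in linear time, into a \emph{nice} tree decomposition: rooted, binary, with $O(|D|)$ nodes, each node being a leaf, introduce, forget or join node.

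Each element $a\in D$ gets a colour $c(a)\in\{0,\dots,w\}$ such that elements sharing a bag have distinct colours. This is possible greedily, since each bag has at most $w+1$ elements. Every tuple of a relation $R^\Aset$ forms a clique in $\Gaif(\Aset)$, so it lies entirely in some bag. We attach each tuple to one topmost such bag.

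\textbf{Step 2: encoding as a labelled tree.} Label each node $u$ of the nice decomposition with a letter from a finite alphabet $\Sigma_{w,\sigma}$. The letter records:
\begin{itemize}
\item the node type;
\item which colour is introduced or forgotten;
\item for each $R\in\sigma$ and each $\ar(R)$-tuple of colours present in $B_u$, whether the corresponding tuple belongs to $R^\Aset$ and is attached at $u$.
\end{itemize}
The alphabet size depends only on $w$ and $\sigma$.

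An element $a$ of colour $i$ is then represented by the unique node where $a$ is introduced. A set $X\subseteq D$ is represented by $w+1$ sets of nodes $X_0,\dots,X_w$, where $X_i$ collects the introduce nodes of elements of $X$ having colour $i$.

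\textbf{Step 3: formula translation and automaton.} Translate $\varphi$ inductively into an MSO sentence $\varphi^*$ over labelled binary trees, so that $\Aset\models\varphi$ if and only if the encoding tree satisfies $\varphi^*$.
\begin{itemize}
\item Equality becomes equality of introduce nodes.
\item Membership $x\in X$ becomes membership in the appropriate $X_i$.
\item The atom $R(x_1,\dots,x_k)$ asserts that there is a node $u$ such that each $x_j$'s introduce node lies below $u$, no forget node for $x_j$'s colour occurs on the path in between, and the label of $u$ asserts the relation on the corresponding colours.
\end{itemize}
All of this is expressible in MSO over trees, using the descendant relation, which is MSO-definable from the child relations. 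The translation is computable from $\varphi$, $w$ and $\sigma$ alone.

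By the Thatcher--Wright/Doner theorem, $\varphi^*$ is equivalent to a deterministic bottom-up tree automaton $\mathcal{A}_{\varphi,w}$. Running it on the encoding tree takes time linear in the tree size, hence linear in $\|\Aset\|$.

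An alternative, equivalent route replaces the automaton by a Feferman--Vaught-style dynamic programming over rank-$\qr(\varphi)$ MSO types of boundaried structures. This uses the facts that there are finitely many types and that types compose along introduce, forget and join operations.

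\textbf{Main obstacle.} The delicate step is the correctness of the encoding in Step 3. The translation must faithfully represent elements and subsets by node sets, each element exactly once, and the connectivity condition $(iii)$ of tree decompositions is exactly what makes "same element" MSO-definable along a forget-free path. The automaton's size is non-elementary in $|\varphi|$, but this only affects the function of $w$ and $|\varphi|$, not linearity in $\|\Aset\|$.
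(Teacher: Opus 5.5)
The paper does not prove this statement. It is Courcelle's theorem, imported by citation and used only as a black box (in Theorem~\ref{th:mso1ttw} and Lemma~\ref{lem:local-sen}). Your outline is the standard automata-theoretic proof, and most of it is fine: Bodlaender's algorithm, the greedy colouring, the Helly argument placing each tuple in some bag, the finite alphabet, and the Thatcher--Wright/Doner step.

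\textbf{The step that fails.} There is no ``unique node where $a$ is introduced'' in a nice tree decomposition. If $a$ lies in the bag of a join node, it lies in the bags of both children. Since leaf bags are empty (or singletons), $a$ must be introduced separately in each branch. In general an element is introduced many times but forgotten only once. This breaks exactly the ``each element exactly once'' property you single out as the crux:
\begin{itemize}
\item Two distinct introduce nodes can denote the same element, so translating $x=y$ as equality of introduce nodes is unsound. A formula $\exists x\exists y.\,\neg(x=y)\wedge\ldots$ can be witnessed by two copies of one element.
\item A set of introduce nodes need not contain all or none of the copies of an element, so your encoding of $X$ is not well defined.
\item In your translation of $R(x_1,\dots,x_k)$, the introduce node taken for $x_j$ may lie in a branch of a join that is not below the node $u$ carrying the tuple. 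A true atom can then be evaluated as false.
\end{itemize}

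\textbf{The repair.} Use the other end of the subtree of bags containing $a$. Pad the root with forget nodes so that its bag is empty, and represent $a$ by its unique forget node $f_a$: the node whose child contains $a$ and which itself does not. Equality and membership then translate exactly; a single node set already suffices for $X$. The atom $R(x_1,\dots,x_k)$ becomes: there is a node $u$ whose label records that the tuple of elements of $B_u$ with colours $c_1,\dots,c_k$ is in $R^\Aset$ and attached at $u$, and for each $j$, $x_j$ is the first forget node of colour $c_j$ strictly above $u$.

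This is MSO-definable on the tree, and it is correct. By condition $(iii)$, every node on the path from $u$ up to the child of $f_{a_j}$ contains $a_j$. Any forget node strictly between $u$ and $f_{a_j}$ forgets an element sharing its child's bag with $a_j$. Since colours inside a bag are distinct, that element has a colour different from $c_j$.

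Alternatively, you could keep introduce nodes, define ``same element'' as an MSO-definable equivalence between them, and relativise all quantifiers to it, but this is clumsier. With the forget-node representation, the rest of your argument goes through and gives the $f(w,|\varphi|)\cdot\|\Aset\|$ bound.
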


We now address sequences of graphs $G_i=(V_i,E_i)$, for $n\geq i>0$, with $E_i=B_i\uplus R_i$ a partition into two sets of the edge set into edges called black and red edges, respectively.
The red-degree of a vertex is the number of red edges incident to this vertex.
The red-degree of $G_i$ is the maximum red-degree of its vertices.
Such a sequence $(G_i)_{n\geq i>0}$ is called a $d$-\textit{contraction} of $G=(V,E)$ if $V_n=V$, $B_n=E$, $|V_1|=1$, and for $n>i>0$ we have that $G_i$ has red-degree at most $d$ and also that $G_i$ is obtained from $G_{i+1}$ by the following process.
Replace two and only two vertices $u,v$ of $G_{i+1}$ by a new vertex $w$ in $G_i$ where:
$wx \in B_{i}$ if $\{ux, vx\} \subseteq B_{i+1}$;
$xy\in B_i$ if $xy\in B_{i+1}$ and $\{x,y\}\cap\{u,v\}=\emptyset$;
otherwise, $xy\in R_i$ if (it does not belong to the previous two cases and if) $xy\in B_{i+1}\cup R_{i+1}$.
The \textit{twin-width of $G$}, $\tww{G}$, is the minimum value of $d$ taken over all possible $d$-contractions of $G$.

Let $\Gset=(V,E_0,\dots, E_{\tau-1})$ be a temporal graph.
The \emph{union graph of $\Gset$}, $\unionG{\Gset}$, is the graph containing all edges of the snapshots, that is,
$\unionG{\Gset} = (V,\cup_{0\leq t<\tau}E_t)$.
The \emph{static expansion graph of $\Gset$}, $\expG{\Gset}$, has as vertices all avatars of $V$ appearing in the snapshots, and as edges two types that we sometimes refer to as \emph{temporal edges} and \emph{local edges}, \textit{cf.}~Figure~\ref{fig:rep_expansion_time_graph}.
Formally, $\expG{\Gset}=(V\times\llbracket 0,\tau-1\rrbracket,\tempE\cup \adjE)$ where
$\adjE=\{(u,t)(v,t)\mid 0\leq t<\tau,  u,v\in V\textrm{ and }uv\in E_t\}$
and
$\tempE=\{(u,t)(u,t+1)\mid0\leq t<\tau-1,u\in V\}$.
Three temporal extensions of tree-width and twin-width follow, namely:
\begin{itemize}
\item \textbf{Infinite tree-width and twin-width:}
$\tw_{\infty}(\Gset)$ and $\twww_{\infty}(\Gset)$ which are the maximum widths over
any snapshot $G_t$, that is, $\tw_{\infty}(\Gset)=\max\{\tw(G_t):0\leq t<\tau\}$ and $\twww_{\infty}(\Gset)=\max\{\twww(G_t):0\leq t<\tau\}$,
\item \textbf{Underlying tree-width and twin-width:}
$\tw_\downarrow(\Gset)=\tw(\Gset_{\downarrow})$ and
$\twww_\downarrow(\Gset)=\twww(\Gset_{\downarrow})$
which are the widths of its union graph,
\item \textbf{Expanded tree-width and twin-width:}
$\ttw(\Gset)= \tw(\Gset_{\rightarrow})$ and
$\ttwww(\Gset)= \twww(\Gset_{\rightarrow})$
which are the widths of its static expansion graph.
\end{itemize}

\section{$\MSOT$ Model Checking}
\label{section.first_amt}
The following theorem's statement holds for $\MSOtwo$, hence for $\MSOone$ too.
Then, we show in the remainder of the section a number of other negative results.
We leave open the question whether any of these results implies the para-$NP$-hardness of the corresponding problem.
(An FPT problem $L \subseteq \Sigma^* \times \mathbb N$ parameterized by $k$ is said to be para-$NP$-hard if it is $NP$-complete to decide whether $(x,k)\in L$ for $k$ bounded by a constant.)
We first make useful remarks for some shortcuts in its proof.

 We will make use of the classical reflexive and transitive closure of a relation, especially for the relation $\suc$: let $\closedsuc(X):=\forall u\forall v.~ (u\in X\wedge\suc(u,v))\rightarrow v\in X$. This formula expresses the fact that the set $X$ is closed by the relation $\suc$. Now we let 
 $\sucstar(x,x'):=\forall X.~(x\in X\wedge\closedsuc(X))\rightarrow x'\in X$ that is true whenever $x$ corresponds to the vertex 
 $(v,t)$ and $x'$ to $(v,t')$ with $t\leq t'$. For $i\geq 2$, we will also make use of the formula $\suc_i(x,x'):=\exists x_1\exists x_2\dots \exists x_{i-1}.~ (\suc(x,x_1)\wedge \bigwedge_{k=1}^{i-2} \suc(x_k,x_{k+1})\wedge
 \suc(x_{i-1}, x'))$ to express the fact that $x'$ represents the same untime vertex than $x$, but in a snapshot appearing $i$ instants later, and $\suc_{\leq i}(x,x'):=(x=x')\vee \suc(x,x')\vee \bigvee_{k=2}^i
 \suc_k(x,x')$. 
 We can also use $\timevertex(x):=\exists x'.~\app(x',x)$ to express the fact that the element $x$ is a time vertex of the structure.

Observe that all these formulas are also in $\MSOone$. 
\begin{theorem} \label{th:mso1ttw}
	Given an $\MSOtwo$ sentence $\varphi$ and a temporal graph $\Gset$, we can decide if
	$\structone{\Gset} \models \varphi$ in 
	$O(f(\tw_\rightarrow(\Gset),|\varphi|)\cdot ||\Gset||)$, with $f$ a computable function only depending on $|\varphi| \text{ and } \allowbreak \tw_\rightarrow(\Gset)$.
\end{theorem}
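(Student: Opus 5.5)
The plan is to reduce to Courcelle's Theorem (Theorem~\ref{th:courcelle}) on a structure whose Gaifman graph is essentially the static expansion graph $\Gset_\rightarrow$. The obstacle to applying the theorem directly to $\structwo{\Gset}$ is the relation $\app$. Each untime vertex $v$ is adjacent, in the Gaifman graph, to all of $(v,0),\dots,(v,\tau-1)$. These $n$ extra vertices act like apices on the paths of avatars and can increase tree-width without any bound in terms of $\tw_\rightarrow(\Gset)$. So the first step removes untime vertices from the structure and simulates them syntactically.

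\textbf{Step 1: a reduced structure.} Define a structure $\Aset_\Gset$ whose domain consists of the time vertices and the time edges $E_t\times\{t\}$. Its relations are $\suc$, $\inc$, and a unary predicate $F$ marking the time vertices $(v,0)$. An untime vertex $v$ will be represented by its first avatar $(v,0)$.

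\textbf{Step 2: translating the sentence.} Translate $\varphi$ into a sentence $\varphi'$ over this signature by sort relativisation.
\begin{itemize}
\item Each first-order quantifier $\exists x$ is replaced by a disjunction over three sorts (untime, time, edge), and the chosen sort is recorded syntactically in the subformula.
\item An untime-sorted variable ranges over elements satisfying $F$. A time-sorted variable ranges over elements satisfying $\exists y.~\inc(y,x)\vee\suc(x,y)\vee\suc(y,x)\vee F(x)$, or more simply over elements that are not time edges. An edge-sorted variable ranges over elements having an $\inc$-successor.
\item Each set variable $X$ is split into three set variables $X_u,X_t,X_e$, one per sort.
\item $\app(x,y)$ with $x$ untime becomes $\sucstar(x,y)$, written with $\closedsuc$ exactly as in the shortcuts above.
\item Equality and membership are rewritten sort-wise. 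An untime $x$ and a time $y$ are never equal, even though $x$ is interpreted by the element $(v,0)$.
\end{itemize}
One checks by induction on $\varphi$ that $\structwo{\Gset}\models\varphi$ iff $\Aset_\Gset\models\varphi'$. Since the blow-up is at most exponential in $|\varphi|$, $|\varphi'|$ is bounded by a computable function of $|\varphi|$. (If the intended structure is $\structone{\Gset}$, with $\adj$ in place of $\inc$, the same translation works with $\adj$ kept as is.)

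\textbf{Step 3: bounding the tree-width.} The Gaifman graph of $\Aset_\Gset$ is $\Gset_\rightarrow$ with every local edge $(u,t)(v,t)$ additionally joined to a new vertex $(uv,t)$ adjacent to both endpoints. From an optimal tree decomposition of $\Gset_\rightarrow$, for each such edge we pick a bag containing $\{(u,t),(v,t)\}$ and attach a new leaf bag $\{(u,t),(v,t),(uv,t)\}$. This gives width at most $\max(\tw_\rightarrow(\Gset),2)$. A width-$O(\tw_\rightarrow)$ decomposition can be computed in linear FPT time by Bodlaender's algorithm, and Theorem~\ref{th:courcelle} then decides $\Aset_\Gset\models\varphi'$ in time $f(\tw_\rightarrow(\Gset),|\varphi|)\cdot\|\Aset_\Gset\|$.

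\textbf{Expected main obstacle.} The delicate point, beyond the apex issue handled in Step 1, is the final running-time claim. The structure has size $\Theta(n\tau+\sum_t|E_t|)$, whereas $\|\Gset\|$ only counts $|V|+\sum_t|E_t|$. I would first try to read the statement as linear in the size of the expansion, since $n\tau$ is the input size in the snapshot-sequence representation. If the stated $\|\Gset\|$ bound is needed literally, I would compress long runs of snapshots where a vertex has no incident edge, contracting maximal $\suc$-chains through edgeless avatars. Doing this requires a further syntactic adaptation of $\suc$ and of quantification over time vertices, and I expect this compression, rather than the logic translation, to need the most care.
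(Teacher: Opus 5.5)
Your proof is correct and rests on the same idea as the paper's. You simulate $\app$ by $\sucstar$ anchored at $(v,0)$, then run Courcelle's Theorem on a structure whose Gaifman graph is $\Gset_\rightarrow$ up to pendant or subdivision vertices.

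The two routes differ in how untime vertices are encoded. The paper keeps each $v$ as a separate element and replaces its $\app$-fan by the single pair $\suc(v,(v,0))$. So $v$ is just a pendant vertex, the translation only rewrites $\app(x,y)$ as $\neg\timeformula(x)\wedge\timeformula(y)\wedge\sucstar(x,y)$, and the formula grows only linearly. Strictly, $\suc$-atoms must then also be guarded by $\timeformula$, since $\suc(v,(v,0))$ is new; the paper glosses over this. You instead delete untime vertices and let $(v,0)$ stand for $v$. This costs a three-sorted relativisation, the splitting $X\mapsto(X_u,X_t,X_e)$, and an exponential but harmless blow-up of $|\varphi|$. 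In exchange, your sort bookkeeping makes all such guards explicit. Do state that $\suc$, $\inc$ and $\adj$ atoms with an untime-sorted argument become false, since $v$ is now physically the element $(v,0)$.

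The motivation for your Step 1 is false, however: untime vertices cannot make the tree-width unbounded. Each $v$ is adjacent only to its own avatars, which form a connected path in $\Gset_\rightarrow$, and distinct untime vertices have disjoint avatar sets. Adding $v$ to every bag containing some $(v,t)$ therefore turns a width-$w$ decomposition of $\Gset_\rightarrow$ into one of $\Gaif(\structone{\Gset})$ of width at most $2w+1$. This is the paper's footnote, and leaf bags handle the edge elements of $\structwo{\Gset}$. Hence Courcelle's Theorem applies to the original structure with no translation at all, and your Step 1 is a convenience rather than a necessity.

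On running time, the paper's structure also has $\Theta(n\tau+\sum_t|E_t|)$ elements, and the paper simply identifies this with the input size. So your first reading is exactly what is proved, and no compression is required. If you pursue the compression anyway, note that plainly contracting edgeless $\suc$-runs is unsound. MSO can detect the length of a $\suc$-chain modulo a constant (e.g.\ the parity of $\tau$), so you would need a type-preserving replacement of each run instead.
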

\begin{proof}\footnote{An alternative proof would be as follows.
The tree-width of $\structone{\Gset}$ is at most twice the expanded tree-width of $\Gset$:
given a tree decomposition for $\Gset_\rightarrow$, add in every bag of this tree decomposition all the untime vertices connected to some timed vertex already contained in the bag.
The resulting tree is a tree decomposition for $\structone{\Gset}$:
the set of untime vertices is an independent set;
the set of all timed vertices of a given untime vertex is a connected set, hence their bags induce a connected subtree.
Then Courcelle's Theorem (Theorem~\ref{th:courcelle}) allows to conclude for both $\MSOone$ and $\MSOtwo$.}
We first argue for $\MSOone$ then for $\MSOtwo$.

We turn $\varphi$ into an $\MSO$ sentence over the signature $\sigma'=\{\suc, \adj\}$. A temporal graph $\Gset$ is turned into a $\sigma'$-structure $\struct{\Gset}$ over the domain $D_\Gset=V\cup (V\times \llbracket0, \tau-1 \rrbracket)$. We interpret the relations as follows: $\adj^{\struct{\Gset}}=\adj^{\structwo{\Gset}}=\{((v,t), (v',t))\mid 0\leq t<\tau\textrm{ and } vv'\in E_t\}$, and $\suc^{\struct{\Gset}}=\suc^{\structwo{\Gset}}\cup \{(v,(v,0))\mid v\in V\}$. While the relation $\adj$ is 
interpreted as before, the relation $\suc$ is now interpreted as relating two successive time vertices, as in the static expansion graph, \emph{and} relating each \emph{untime} vertex $v$ only to $(v,0)$, its representative in the first snapshot.  We explain how to transform $\varphi$ into a formula $\varphi'\in \MSO$ over the signature
$\sigma'$ such that, for any temporal graph $\Gset$, $\structone{\Gset}\models\varphi$ if and only if $\struct{\Gset}\models\varphi'$. This is done inductively.

The only case that needs to be transformed
is when $\varphi=\app(x,y)$. We do not have this relation anymore, but we can check that the time vertex $y$ is indeed a representative of the untime vertex $x$ by checking if $y$ is accessible from
$x$ via a formula $\suc^*(x,y)$, that expresses the reflexive and transitive closure of $\suc$, described in above remarks,
along with the formula $\timeformula(x) = \exists x'.~\suc(x',x)$ to express the fact that the element $x$ is a time vertex of the structure.

Hence, we can let $(\app(x,y))'=\neg\timeformula(x)\wedge\timeformula(y)\wedge\suc^*(x,y)$. The rest of the inductive transformation from $\varphi$
to $\varphi'$ is straightforward. Moreover, it is easy to see that $\ttw(\Gset)=\tw(\struct{\Gset})$. Courcelle's Theorem (Theorem~\ref{th:courcelle}) allows to conclude.

The extension of the theorem to $\MSOtwo$ comes for free because Courcelle's theorem holds for $\MSO_2${} too. 
We now turn $\varphi$ into an $\MSO$ sentence over the signature $\sigma''=\{\suc, \inc\}$. We transform now a temporal graph $\Gset$ into a $\sigma''$-structure $\struct{\Gset}_2$ over the domain 
$D_\Gset=V\cup (V\times \llbracket0, \tau-1 \rrbracket)\cup \bigcup_{0\leq t<\tau} (E_t\times \{t\})$. We interpret the relations as follows: $\inc^{\struct{\Gset}}=\inc^{\structwo{\Gset}}=\{((vv',t), (v,t))\mid 0\leq t<\tau\textrm{ and } vv'\in E_t\}$, and $\suc^{\struct{\Gset}}=\{((v,t),(v,t+1))\mid v\in V\textrm{ and }0\leq t\leq \tau-1\}\cup \{(v,(v,0))\mid v\in V\}$. While the relation $\inc$ is 
interpreted as before, as the incidence relation between an edge and a time vertex, the relation $\suc$ is again interpreted as relating two successive time vertices, as in the static expansion graph, \emph{and} relating each \emph{untime} vertex $v$ only to $(v,0)$, its representative in the first snapshot.  
The transformation of $\varphi$ into $\varphi'\in \MSO(\sigma'')$ is done inductively and follows the same scheme than the proof of Theorem~\ref{th:mso1ttw}: again, the only case that needs to
be transformed is when $\varphi=\app(x,y)$, and we still let $(\app(x,y))'=\neg\timeformula(x)\wedge\timeformula(y)\wedge\suc^*(x,y)$. Again, classical calculations allow to see that 
$\ttw(\Gset)=\tw(\struct{\Gset}_2)$.
Courcelle's Theorem (Theorem~\ref{th:courcelle}) allows to conclude.
\end{proof}
\begin{figure}
    \centering
\scalebox{.8}{
    \begin{minipage}{.5\textwidth}
        \def\hsep{2.5}
        \def\vsep{2}
        \def\bending{45}
	\begin{tikzpicture}[scale=.4,dot/.style={draw,circle,minimum size=0.15cm,inner sep=0pt,outer sep=0pt,fill=black}]
	\path \foreach \i in {0,...,3}
			\foreach \j in {0,...,3}
			{coordinate [dot] (u\i\j) at (\hsep*\j,-\vsep*\i)};

    \foreach \i in {0,...,3}
		\node[] (u\i)at(-2,-\vsep*\i) {$u_\i$};
    
    \draw[line width=0.2ex] \foreach \i in {0,...,3}
	\foreach \j in {0,...,2}
	{(u\i\j)--(u\i\the\numexpr\j+1)};

	\draw[line width=0.3ex] (u00) to[bend right=45] (u20)
	(u01)--(u11)
	(u10) to[bend left=45] (u30)
	(u03) to[bend left=45] (u33)
	(u21)--(u31)
	(u22)--(u32);

	\end{tikzpicture}
    \end{minipage}%
    \begin{minipage}{.4\textwidth}
        \def\hsep{2.5}
        \def\vsep{2}
        \def\bending{45}
        \def\posa{0.3}
        \def\posb{0.7}
    \begin{tikzpicture}[scale=.4,dot/.style={draw,circle,minimum size=0.15cm,inner sep=0pt,outer sep=0pt,fill=black}]
	\path \foreach \i in {0,...,3}
			\foreach \j in {0,...,3}
			{coordinate [dot] (u\i\j) at (\hsep*\j,-\vsep*\i)};

    \foreach \i in {0,...,3}
		\node[draw = purple!80] (u\i) at(-3,-\vsep*\i) {$u_\i$};
    
    \foreach \i in {0,...,3}{
	    \foreach \j in {0,...,2}
	        {\draw[dashed, ->, line width=0.2ex] (u\i\j) -- (u\i\the\numexpr\j+1);}}

    \foreach \i in {0,...,3}
	        {\draw[dashed, ->, line width=0.2ex] (u\i) -- (u\i0);}

    
    \draw[line width=0.2ex, blue, ->] (u20) to[bend left=45] (u00);
    \draw[line width=0.2ex, blue, ->] (u00) to[bend left=45] (u20);

    \draw[line width=0.2ex, blue, ->] (u11) to[bend left=45] (u01);
    \draw[line width=0.2ex, blue, ->] (u01) to[bend left=45] (u11);

    \draw[line width=0.2ex, blue, ->] (u30) to[bend left=45] (u10);
    \draw[line width=0.2ex, blue, ->] (u10) to[bend left=45] (u30);

    \draw[line width=0.2ex, blue, ->] (u31) to[bend left=45] (u21);
    \draw[line width=0.2ex, blue, ->] (u21) to[bend left=45] (u31);

    \draw[line width=0.2ex, blue, ->] (u32) to[bend left=45] (u22);
    \draw[line width=0.2ex, blue, ->] (u22) to[bend left=45] (u32);

    \draw[line width=0.2ex, blue, ->] (u03) to[bend left=30] (u33);
    \draw[line width=0.2ex, blue, ->] (u33) to[bend left=30] (u03);

	\end{tikzpicture}

    \end{minipage}}
    \caption{(left) A temporal static expanded graph $\Gset_\rightarrow$;
    (right) Its associated structure $\struct{\Gset}$;
    (right, blue edges) Its predicate $\adj$;
    (right, dashed edges) Its predicate $\suc$.}
    \label{fig:oriented-graph}
\end{figure}

\paragraph*{Obstructions to Meta-Theorems on $\MSOone$}
In the sequel, Theorem~\ref{absence_meta_extended}, Corollary~\ref{corollary_infinite_twinwidth}, Theorem~\ref{thm:fpt-underlying-twww} also follow from an independent result in parallel with our work, stating $NP$-completeness of \textsc{Coloring} on graphs of twin-width $4$~\cite{Bonnet_arxiv}.
For the sake of completeness, we also give an independent proof for each case.

We show now that the expanded tree-width is the only measure of the graphs among the ones we introduced that allows to exhibit such a meta-theorem on temporal graphs. 
Then, we show in the remainder of the section a number of other negative results.
We leave open the question whether any of these results implies the para-$NP$-hardness\footnote{An FPT problem $L \subseteq \Sigma^* \times \mathbb N$ parameterized by $k$ is said to be para-$NP$-hard if it is $NP$-complete to decide whether $(x,k)\in L$ for $k$ bounded by a constant.} of the corresponding problem.
\begin{theorem}\label{th:neg_tw_inf}
	Unless P = NP, there exists no FPT algorithm parameterized by the combined value (or the sum) of the infinite tree-width $\tw_{\infty}$, the lifetime of the input temporal graph, and the size of the formula for $\MSOone{}$ model checking, even if we restrain the
	formula to the class of \MSOone{} formulas written
	without the symbol $\app$, or to formulas written without the symbol $\suc$. 
\end{theorem}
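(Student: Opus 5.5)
We prove the statement by a reduction from an $NP$-complete static problem that is expressible in $\MSO_1$. The idea is to spread a hard static graph over a constant number of snapshots of constant tree-width. Then lifetime, $\tw_\infty$ and formula size are all bounded by constants. An FPT algorithm for the combined parameter would therefore solve the $NP$-complete problem in polynomial time, hence $P=NP$.

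\textbf{Choice of problem.} We take \textsc{3-Coloring}, which is expressible in $\MSO_1$ on static graphs. We use the fact that every graph $H$ can be edge-decomposed into a bounded number of forests in a simple way. Orient $H=(V,E)$ arbitrarily and order the vertices $v_1,\dots,v_n$. The cleanest option is to bound the number of parts by restricting to a class where \textsc{3-Coloring} stays $NP$-complete. \textsc{3-Coloring} is $NP$-complete on planar graphs of maximum degree $4$. By Vizing's theorem the edges of such a graph split into at most $5$ matchings $M_0,\dots,M_4$, computable in polynomial time by Misra--Gries. Each matching is a graph of tree-width at most $1$.

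\textbf{Construction.} We set $\Gset=(V,M_0,\dots,M_4)$, so $\tau=5$ and $\tw_\infty(\Gset)\le 1$. We then write a constant-size formula stating that $H=\Gset_\downarrow$ is 3-colourable, in the two restricted fragments.

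\emph{Fragment without $\app$.} We use $\suc$ to identify the time vertices of a single vertex. We quantify sets $X_1,X_2,X_3$ of time vertices. We require that each $X_i$ is closed under $\suc$ in both directions, i.e.\ $\suc(x,y)\rightarrow(x\in X_i\leftrightarrow y\in X_i)$, so that membership is constant along each vertex's timeline. We further require that every time vertex lies in exactly one $X_i$, and that no $\adj$-edge joins two elements of the same $X_i$. Untime vertices have no incident relations, so we restrict all quantifiers by "has a $\suc$-neighbour", which is possible since $\tau\ge2$.

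\emph{Fragment without $\suc$.} We instead colour the untime vertices $v$ with sets $X_1,X_2,X_3$. The colouring condition becomes: for all $x,y,u,w$ with $\app(u,x)$, $\app(w,y)$ and $\adj(x,y)$, the vertices $u$ and $w$ are not in the same $X_i$.

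\textbf{Correctness and main obstacle.} Both formulas are fixed, so their size is constant. Correctness is routine: a proper colouring of $H$ corresponds exactly to a satisfying assignment, since every edge of $H$ appears in some snapshot. The main obstacle is the first fragment. There we must ensure the colour sets are consistent across time using only $\suc$. Two-directional $\suc$-closure suffices because the timeline of each vertex is a $\suc$-path, and this path is connected. This uses the same reasoning as $\closedsuc$ in the preliminaries.
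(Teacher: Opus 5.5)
Your proposal is correct and matches the paper's proof. Both reduce from \textsc{3-Coloring} on graphs of maximum degree $4$ and place the five Vizing colour classes in five snapshots, so $\tau=5$ and $\tw_\infty\le 1$. Both then express $3$-colourability of the union graph by a $\suc$-closed partition in the $\app$-free fragment, and by colouring untime vertices via $\app$ in the $\suc$-free fragment. Your planarity restriction and two-directional closure are harmless but unnecessary; the paper uses forward $\closedsuc$ plus a partition of the whole domain. Delete the abandoned opening claim that every graph edge-decomposes into a bounded number of forests, which is false and unused. Also state explicitly in the $\suc$-free formula that every untime vertex lies in some $X_i$.
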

\begin{proof}
	The proof is inspired by the arguments given by Fluschnik et.al.~\cite{FMNRZ20}.
	Assume for the sake of contradiction that we have such an FPT algorithm.
	Consider an instance $G=(V,E)$ of the $3$-coloring problem such that
	the degree of $G$ is at most 4. Vizing's theorem~\cite{MJG92} ensures that
	in polynomial time w.r.t. $||G||$, there exists an application $c:E\rightarrow \{0,1,2,3,4\}$
	such that, if $e$ and $e'$ are two different edges incident to the same vertex, $c(e)\neq c(e')$.
	We build the temporal graph
	$g(G) = (V, E_0, E_1, E_2, E_3, E_4)$ such that for all $i \in \{0,1,2,3,4\}$,
	$uv \in E_i$ if and
	only if $c(uv)=i$.
	Thus,
	we have $g(G)_\downarrow = G$.
	We show in below paragraph how this coloring gives $\tw_{\infty}(g(G)) = 1$, and how to build
	an $\MSOone$ formula $\varphi_{3\text{col}\downarrow}$ which expresses
	the $3$-colorability of the union graph of a given temporal graph, without using the symbol $\app$. 
	Note that it is possible to write another formula that expresses $3$-colorability without the symbol $\suc$. 
	Then, for every graph $G$ having a degree bounded by $4$,
	we could decide $3$-colorability of $G$ in
	$O(f(\tw_\infty(g(G)),|\varphi_{3\text{col}\downarrow}|, 5)\cdot ||g(G)||^d)$,
	\ie $O(f(1,|\varphi_{3\text{col}\downarrow}|, 5)\cdot ||G||^d)$.
	This would result in a polynomial time algorithm to decide 3-colorability
	on the class of graphs of degrees bounded by $4$, which is known to
	be $NP$-complete~\cite{D80}.

Consider an instance $G=(V,E)$ of the $3$-coloring problem such that
	the degree of $G$ is at most 4. Vizing's theorem~\cite{MJG92} ensures that
	in polynomial time w.r.t. $||G||$, there exists an application $c:E\rightarrow \{0,1,2,3,4\}$
	such that, if $e$ and $e'$ are two different edges incident to the same vertex, $c(e)\neq c(e')$.
	We build the temporal graph
	$g(G) = (V, E_0, E_1, E_2, E_3, E_4)$ such that for all $i \in \{0,1,2,3,4\}$,
	$uv \in E_i$ if and
	only if $c(uv)=i$.
	Thus,
	we have $g(G)_\downarrow = G$.
	This coloring gives an efficient tree decomposition of each snapshot $g(G)_t$ of $g(G)$:
	for each edge $e=uv\in E_t$, we build a bag $B_e=\{u,v\}$ that trivially covers $e$.
	For each node $v\in V$ such that $v$ is not in any edge of $E_t$, we build a bag $B_v$. Then,
	by the property of the coloring application $c$, all the bags are pairwise disjoint, ensuring that any tree $T_t$
	over these bags is a tree decomposition of $g(G)_t$. It results in $\tw_{\infty}(g(G)) = 1$.

We explain how to build $\varphi_{3\text{color}\downarrow}$.
First of all, we define $\texttt{Partition}(X_1, X_2, X_3) = \forall x.~(x \in X_1 \wedge x \notin X_2 \wedge x \notin X_3) \vee (x \notin X_1 \wedge x \in X_2 \wedge x \notin X_3) \vee (x \notin X_1 \wedge x \notin X_2 \wedge x \in X_3)$ that is true if and only if the disjoint union of $X_1, X_2$ and $X_3$ is equal to the whole space. Also,
we define $\psi(X) = \exists s\exists s'.~ s \in X \wedge s' \in X \wedge \adj(s,s')$, which is true if and only if there is an edge between two time vertices of $X$. Then, we can define $\varphi_{3\text{col}\downarrow}= \exists X_1 \exists X_2\exists X_3.~ (\texttt{Partition}(X_1, X_2, X_3) \wedge \closedsuc(X_1) \wedge \closedsuc(X_2) \wedge \closedsuc(X_3) \wedge \neg \psi(X_1) \wedge \neg \psi(X_2) \wedge \neg \psi(X_3))$. \\
The fact that each part $X_i$ of the set of vertices and time vertices is closed by the relation $\suc$ implies that all the avatars of a given untime vertex are in
the same part $X_i$ (the untime vertex itself is not bound to be in the same set $X_i$ but it does not matter). 
It then stands that $\structone{\Gset}\models\varphi_{3\text{col}\downarrow}$ if and only if the set of vertices can be divided into three disjoint sets such that there exist no edge, in any snapshot, 
between vertices of the same set, which is exactly $\Gset_\downarrow$ being $3$-colorable.

Another way to write $\varphi_{3\text{color}\downarrow}$, without the symbol $\suc$ but using the relation $\app$ is the following:
	$\varphi_{3\text{col}\downarrow} =\exists X_1\exists X_2\exists X_3.~\texttt{Partition}(X_1, X_2, X_3)\wedge(\forall u\forall v\forall s\forall t.~
		 \app(u,s) \land \app(v,t) \land
					\adj(s,t) \rightarrow
							\displaystyle\bigwedge_{i = 1}^3 \neg 
								(u \in X_i \land v \in X_i))$

Here, we partition all the vertices, time and untime ones, but we are only concerned with how the \emph{untime vertices} are partitioned. The formula requires that
if two avatars of two untime vertices are linked in some snapshot, then the two untime vertices should not appear in the same set.
\end{proof}

\begin{theorem}\label{th:neg_rtb}
	Unless $P = NP$, there exists no FPT algorithm parameterized by the combined value (or the sum) of the underlying tree-width $\tw_\downarrow$ of the input temporal graph and the size of the input formula for \MSOone{} model checking.
\end{theorem}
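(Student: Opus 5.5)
We argue by contradiction: assuming an FPT algorithm parameterized by $\tw_\downarrow$ and $|\varphi|$, we derive a polynomial-time algorithm for an $NP$-complete problem. The plan is to push the hardness into the time dimension. We keep the union graph trivially simple, say a star or even a single edge, so that $\tw_\downarrow\leq 1$. The combinatorial content of an arbitrary instance is then encoded along the lifetime, which is unbounded here, unlike in Theorem~\ref{th:neg_tw_inf}. The target formula is a fixed $\MSOone$ sentence whose size does not depend on the instance.

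\textbf{Reduction.} We reduce from $3$-colorability of a graph $G=(V_G,E_G)$, $V_G=\{x_1,\dots,x_n\}$. The idea is to let time stand in for the vertices of $G$. Take the temporal graph $\Gset$ on two untime vertices $a,b$ with lifetime $\tau=n$, so that the time vertices $(a,i)$ represent $x_{i+1}$. Edges of $G$ must be expressed between time vertices $(a,i),(a,j)$ for $i\neq j$. However, $\adj$ only relates time vertices of the same snapshot. We therefore have to encode adjacency of $x_i,x_j$ through a gadget that the formula can recognise.

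A cleaner option uses a bounded number of untime vertices and lets time enumerate the edges of $G$. Let $\tau=|E_G|+n$, with untime vertices $a$ and $b$ and one untime ``marker'' vertex $c$. The first $n$ snapshots represent the vertices of $G$; mark them by having edge $ac$ present. The later snapshots represent the edges $x_ix_j$ of $G$, marked by edge $bc$.

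Incidence between the edge snapshot $t_e$ and the vertex snapshots $i,j$ still needs a definable link. The difficulty is that $\MSOone$ can only talk about time through $\suc$, which forms a single linear order per untime vertex. Incidence is therefore not directly definable without extra structure. I would therefore instead put the incidence into the static structure by allowing more untime vertices. Take the union graph to be a star with centre $c$ and leaves $v_1,\dots,v_n$, one per vertex of $G$. It has tree-width $1$ regardless of how the edges appear over time.

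For each edge $x_ix_j\in E_G$, devote one snapshot $t$ in which exactly the two edges $cv_i$ and $cv_j$ are present. The $3$-colorability sentence is then
\[\exists X_1\exists X_2\exists X_3.~\texttt{Partition}(X_1,X_2,X_3)\wedge\bigwedge_{k}\closedsuc(X_k)\wedge\forall s\forall s'\forall r.~(\adj(s,r)\wedge\adj(s',r)\wedge s\neq s')\rightarrow\bigwedge_{k=1}^3\neg(s\in X_k\wedge s'\in X_k).\]
Closure under $\suc$ colours every untime leaf consistently across time. Two distinct leaves $v_i,v_j$ are adjacent to the same time vertex $(c,t)$ exactly when $x_ix_j\in E_G$. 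The centre $c$ can be excluded from the colouring constraint by only constraining $s,s'$, so its own colour is irrelevant.

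\textbf{Correctness and parameters.} Correctness is then a direct check: $\structone{\Gset}$ satisfies the sentence iff $G$ is $3$-colorable. The construction has polynomial size with $\tau=|E_G|$. It gives $\tw_\downarrow(\Gset)=1$ and a constant-size formula, so the assumed algorithm would run in polynomial time, contradicting $P\neq NP$.

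\textbf{Main obstacle.} I expect the main obstacle to be verifying that the adjacency-via-common-neighbour trick does not create spurious constraints. Only the edges $cv_i$ exist in each snapshot, so every adjacent pair contains $(c,t)$. It remains to check that a common neighbour $r$ of two distinct time vertices must be $(c,t)$ with $s=(v_i,t)$ and $s'=(v_j,t)$. This holds because in each snapshot the edges form a star centred at $c$. If a shared neighbour could also be a leaf, e.g.\ when $s=s'=(c,t)$ is excluded by $s\neq s'$, the case analysis has to rule it out; requiring $\adj(s,r)$ with $s$ a leaf of degree $1$ per snapshot handles it.
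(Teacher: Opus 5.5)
Your final construction is correct, and it takes a genuinely different route from the paper.

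\textbf{The paper's route.} The paper reduces from Return-To-Base Temporal Graph Exploration, importing its NP-hardness for $\tw_\downarrow\le 2$ from \cite{AMSR21,BVZ19}. It then writes a fairly involved $\MSOone$ formula that encodes a strict temporal walk through $\timeadj$ and uniqueness of predecessors and successors. Finally it needs an extra vertex $\bar u$ to eliminate the free variable $u$, which ends with $\tw_\downarrow\le 3$.

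\textbf{Your route.} You reduce directly from static $3$-colourability. The hub $c$ acts as a subdivision vertex for every edge of $G$, reused at a different instant for each edge. All the combinatorics thus moves into the unbounded lifetime, while the union graph stays a star. The verification is short:
\begin{itemize}
\item In the snapshot $t$ devoted to $x_ix_j$, the only $\adj$-pairs are $\{(c,t),(v_i,t)\}$ and $\{(c,t),(v_j,t)\}$. So the only distinct time vertices with a common $\adj$-neighbour are $(v_i,t)$ and $(v_j,t)$, via $r=(c,t)$. If $r$ is a leaf copy, then $s=s'=(c,t)$, which is excluded.
\item $\closedsuc$ together with $\texttt{Partition}$ forces every copy of $v_i$ into the class of $(v_i,0)$.
\item Conversely, a proper colouring $\chi$ yields $X_k=\{(v_i,t):\chi(x_i)=k,\ 0\le t<\tau\}$. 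All copies of $c$ and all untime vertices can go into $X_1$, since they are unconstrained and have no outgoing $\suc$ leaving their class.
\end{itemize}

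\textbf{What your route buys.} It is self-contained and uses no external hardness result. The sentence is tiny and uses only $\adj$ and $\suc$; an $\app$-based variant as in Theorem~\ref{th:neg_tw_inf} works equally well. It also proves a stronger statement: hardness already holds when the union graph is a star. That means $\tw_\downarrow=1$, and even vertex cover number $1$, with simultaneously $\tw_\infty=1$.

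\textbf{What the paper's route buys.} Mainly, the hard property is a natural temporal one rather than a static problem in disguise.

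\textbf{Clean-up.} For the write-up, delete the two abandoned attempts: the two-vertex time encoding and the marker version with $\tau=|E_G|+n$. State $\tau=|E_G|$ once, and handle the trivial case $E_G=\emptyset$ separately.
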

\begin{proof}
The proof of the next theorem makes use of a problem called Return-To-Base Temporal Graph Exploration (RTB-TGE), which is NP-hard even if the input temporal graph $\Gset$
	is such that $\tw_\downarrow(\Gset)\leq 2$~\cite{AMSR21,BVZ19}. 
		We give in the below paragraph the precise definition of this problem and show how to define a formula $\varphi_{\text{RTB-TGE}}(u)$, in $\MSOone$, that is true if and only if the graph $\Gset$ with vertex $u$ is a positive instance of RTB-TGE. We also explain there that the existential version of this problem, where we ask, given a graph $\Gset$, whether there exists a vertex $u$ such that 
		$(\Gset,u)$ is a positive instance of RTB-TGE is also NP-hard, even if $\tw_\downarrow(\Gset)\leq 3$. %
	An FPT algorithm would decide if a temporal graph $\Gset$ such that $\tw_\downarrow(\Gset)\leq 3$ is a solution of existential RTB-TGE in $O(f(\tw_\downarrow(\Gset),|\exists u.~ \varphi_{\text{RTB-TGE}}(u)|)\cdot ||\Gset||^d)$, i.e. in $O(f(3,|\exists u.~\varphi_{\text{RTB-TGE}}(u)|)\cdot ||\Gset||^d)$, 
	contradicting NP-hardness of the problem. 

Given a temporal graph $\Gset = (V, E_0,\dots, E_{\tau-1})$ and two vertices $v,v'\in V$,
	a \emph{strict temporal walk} from $v$ to $v'$ is a sequence of temporal edges
	$v_1v_2, v_2v_3, \ldots, \allowbreak v_pv_{p+1}$, such that $v=v_1$, $v'=v_{p+1}$,
	and for each $1\leq i\leq p$, $v_iv_{i+1}\in E_{t_i}$ with
	$t_1 < t_2 < \ldots < t_p$.
	The \textsc{Return-To-Base Temporal Graph Exploration (RTB-TGE)}, asks, given a temporal graph $\Gset = (V, E_0,\dots, E_{\tau-1})$ and a designated vertex $u \in V$, if there is a strict
	temporal walk starting and ending at $u$ and visiting all the vertices of $V$.

We define the formula $\varphi_{\text{RTB-TGE}}$.
Note that the size of $\varphi_{\text{RTB-TGE}}$ is independent from the size of any temporal graph satisfying it.

We use the \MSOone{} formula $\timeadj(s,s')=\exists t.~(\suc^*(s,t)\wedge s\neq t \wedge \adj(t,s'))$, expressing the fact that $s$ and $s'$ are interpreted by two vertices $(u,t)$ and $(v,t')$ such that $t<t'$ and $uv\in E_{t'}$.
Recall that $\suc^*$ has been defined in the beginning of Section~\ref{section.first_amt}, and is also a formula of \MSOone.
We also use $\timevertex(x)=\exists x'.~\app(x',x)$ to express that $x$ is a time vertex. 
\begin{align*}
	\varphi_{\text{RTB-TGE}}(u) = &&\\
	\exists X.~ & [\forall x.~ x \in X \rightarrow \timevertex(x)] \land [\forall v.~(\timevertex(v)\vee(\exists x.~ x \in X \land \app(v, x)))] &\land \\
	& \exists s_\text{start} \exists s_\text{end}.~ [s_{\text{start}} \in X \land s_{\text{end}} \in X \land \app(u,s_{\text{start}}) \land \app(u,s_{\text{end}})] &\land \\
	& \forall s.~ [s \in X \land s \neq s_{\text{start}} \land s \neq s_{\text{end}} \rightarrow \exists!s_{\text{prev}} \exists!s_{\text{suc}}.~ &\\
					&s_{\text{prev}} \in X \land s_{\text{succ}} \in X \land \timeadj(s_{\text{prev}}, s) \land 
					\timeadj(s, s_{\text{suc}})] &\land \\
		& \exists s.~ [s \in X \land \timeadj(s_{\text{start}}, s)] &\land \\
		& \exists s.~ [s \in X \land \timeadj(s, s_f)]
\end{align*}

The two first conjuncts express the fact that, in a temporal graph $\Gset=(V,E_0,\dots, E_{\tau-1})$ satisfying the formula,
the set $X$ contains only time vertices, and that each vertex $v\in V$ is represented in $X$. The second conjunct identified
the time vertices that start and end the temporal walk in $\Gset$, while the last conjuncts express the fact that all the 
time vertices in $X$ form a strict walk in the graph. Since the two identified time vertices that start and end the temporal 
walk represent the vertex $u$, this enforces the existence of a strict temporal walk, starting and ending in $u$, and
visiting all the vertices of $V$.

Consider a RTB-TGE instance $(\Gset, u)$. We show that we can consider equivalently an instance $\Gset'$ of an existential version of the RTB-TGE problem.
	Introducing a new vertex $\bar{u}$, we construct the temporal graph $\Gset' = (V \cup \{\bar{u}\}, \{u\bar{u}\}, E_0, \ldots, E_{\tau-1}, \{u\bar{u}\})$. Observe that $\Gset'$ can be constructed in polynomial time in the size of $\Gset$. If there exists $v \in V \cup \{\bar{u}\}$ such that a strict temporal walk $v_1v_2, v_2v_3, \ldots, v_pv_{p+1}$ starting at $v$, ending at $v$ and visiting all the vertices of $V \cup \bar{u}$ exists in $\Gset'$, then $\bar{u} =v$, since the only edges incident to $\bar{u}$ occur at first and last instants. 
	The sequence $v_2v_3, v_3v_4, \ldots, v_{p-1}v_p$ is then a strict temporal walk in $\Gset$ starting at $u$, ending at $u$ and visiting all the vertices of $V$. Then, Existential-RTB-TGE is also
	NP-hard. 
	
	Moreover, any tree decomposition $\Tbb = (T, \{B_v: v \in T\})$ of $\Gset_\downarrow$ can be adapted into $\Tbb' = (T, \{B'_v : v \in T\})$ where for any $v \in T$, $B'_v = B_v \cup \{\bar{u}\}$ if $u \in B_v$ and $B'_v = B_v$ otherwise. $\Tbb'$ is a tree decomposition of $\Gset'_\downarrow$, resulting in $\tw_\downarrow(\Gset') \leq 3$, since $\tw_\downarrow(\Gset) \leq 2$. 
\end{proof}

For temporal versions of twin-width, the following results show the impossibility of finding meta-theorems for \MSO{} Model-Checking on temporal graphs of bounded temporal twin-widths.
We start with observations on temporal twin-widths. 
\begin{lemma}
\label{snapshottww_smaller_extendedtww}
Every temporal graph $\mathcal{G}$ satisfies $\mathrm{tww}_\infty(\mathcal{G}) \le \mathrm{tww}_\rightarrow(\mathcal{G})$.
\end{lemma}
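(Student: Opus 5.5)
The plan is to use the standard fact that twin-width is monotone under taking induced subgraphs, together with the observation that each snapshot $G_t$ is isomorphic to an induced subgraph of $\Gset_\rightarrow$. Concretely, fix $t\in\llbracket0,\tau-1\rrbracket$ and consider $V_t=V\times\{t\}\subseteq V(\Gset_\rightarrow)$. Edges of $\tempE$ join $(u,t)$ to $(u,t+1)$, so no temporal edge has both endpoints in $V_t$. Hence $\Gset_\rightarrow[V_t]$ has edge set exactly $\{(u,t)(v,t)\mid uv\in E_t\}$, and the map $u\mapsto(u,t)$ is an isomorphism from $G_t$ onto $\Gset_\rightarrow[V_t]$. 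It then suffices to show $\tww{H[X]}\le\tww{H}$ for any graph $H$ and $X\subseteq V(H)$, and to take the maximum over $t$.

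For the monotonicity step, take an optimal $d$-contraction $(H_i)_{n\ge i>0}$ of $H$. Each vertex of $H_i$ corresponds to a part of a partition of $V(H)$. Restrict this partition to $X$ by intersecting every part with $X$ and discarding empty parts. Call the result $P_i$. Consecutive restricted partitions either coincide or differ by merging exactly two parts, so after deleting repetitions we obtain a contraction sequence of $H[X]$ ending with a single vertex.

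The key step, and the main obstacle, is checking that red-degrees do not increase. An edge between two restricted parts $A\cap X$ and $B\cap X$ is red exactly when the pair is neither complete nor anticomplete in $H[X]$, that is, when it is "mixed". I would first prove the invariant that in the paper's iterative definition, $xy\in B_i$ iff the parts are completely joined, $xy$ is absent iff they are anticomplete, and $xy\in R_i$ otherwise (a red edge that is actually homogeneous is harmless, and the invariant holds up to such over-approximation, which suffices). Given this, if $A\cap X$ and $B\cap X$ are mixed in $H[X]$, then $A$ and $B$ are mixed in $H$, so the pair is red in $H_i$. Therefore the red-degree of $A\cap X$ in the restricted sequence is at most the red-degree of $A$ in $H_i$, which is at most $d$.

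The remaining technicality is that the restricted sequence, defined via the paper's merge rule rather than via homogeneity, must also have red edges only where $H_i$ does. This follows by induction on the contraction steps: the red edges produced by the rule at a merge come from mixed pairs or from inherited red edges, both of which are red in $H_i$. This yields $\tww{G_t}\le\tww{\Gset_\rightarrow}$ for all $t$, which proves the lemma.
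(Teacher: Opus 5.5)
Your proposal is correct and is essentially the paper's proof: the paper also observes that $G_t$ is the subgraph of $\Gset_\rightarrow$ induced by $V\times\{t\}$, and it concludes by monotonicity of twin-width under induced subgraphs. The only difference is that the paper cites this monotonicity as a known fact, whereas you also sketch its standard proof by restricting a contraction sequence to $X$. That sketch is sound, and under the paper's merge rule the invariant ``red iff the two parts are mixed'' holds exactly, so your hedge about over-approximation is unnecessary.
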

\begin{proof} Let $0 \le t < \tau$.
The only thing to remark is that $\Gset_\rightarrow$ has $G_t$ as induced subgraph.
It is induced by vertex set $V_t = \{ (u,t) : u \in V \}$.
Since the twin-width of a graph is bigger than the twin-width of its induced subgraphs, the lemma follows.
\end{proof}

\begin{lemma}
\label{extended_twin-width_bounded}
Let $\Gset =(V, E_0, \ldots, E_{\tau-1})$ a temporal graph such that $|E_t| \le k$ for each snapshot~$t$. Then, $\ttwww(\Gset)\leq k+3$.
\end{lemma}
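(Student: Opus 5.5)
We will build an explicit contraction sequence of $\Gset_\rightarrow$ whose red degree never exceeds $k+3$. The idea is to sweep time from left to right, collapsing the whole past into a single vertex, while keeping every vertex of the current snapshot separate.

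\textbf{Phase~1: collapse each row into a path-like structure and merge consecutive columns.} Call the set $V_t=\{(u,t):u\in V\}$ column $t$. Let $M_t$ be the set of vertices of $V$ incident to an edge of $E_t$, so $|M_t|\le 2k$. The vertices of column $t$ that are not in $M_t$ have degree at most $2$ in $\Gset_\rightarrow$, since they carry only temporal edges.

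\textbf{Phase~2: the sweep.} We maintain the invariant that, after processing columns $0,\dots,t-1$, the current trigraph consists of the untouched columns $t,\dots,\tau-1$ plus one merged vertex $P$ representing all of $V_0\cup\dots\cup V_{t-1}$. The vertex $P$ has red edges only to vertices $(u,t)$ of column $t$. To process column $t$, we contract the vertices $(u,t)$ into $P$ one by one.
\begin{itemize}
\item When $(u,t)$ is absorbed, $P$ gains red edges to $(u,t+1)$ and to the snapshot-$t$ neighbours of $(u,t)$.
\item The red degree of $P$ is then bounded by the number of column-$(t+1)$ vertices already exposed, plus the not-yet-absorbed vertices of column $t$.
\end{itemize}
Done naively, this is $n$, not $k+3$. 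So we refine the order: first absorb, one at a time, the at most $2k$ vertices of $M_t$. Then handle a non-$M_t$ vertex $(u,t)$, which has neighbours only $(u,t-1)$ (already in $P$) and $(u,t+1)$. Instead of absorbing it into $P$, we contract it with $(u,t+1)$. This is a merge of two vertices with the same neighbourhood shape along the row, and it creates only red edges of degree at most $2$ near $P$.

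Because the counting around $P$ still looks like $n$, a better invariant is needed. The better plan is to first contract every row segment lying between consecutive appearances of $u$ in some $M_t$. Specifically, the consecutive non-$M$ vertices $(u,t),(u,t+1),\dots$ are merged along the row. Each such merge creates at most $2$ red edges per vertex (left and right row neighbours), which is within the budget. After that, every column contains only $M$-vertices together with row representatives spanning intervals.

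\textbf{Phase~3: absorb into $P$ in order of interval left endpoints.} We then absorb the remaining vertices into $P$ in order of their interval's left endpoint. The invariant to check is that $P$ is red-adjacent only to vertices whose intervals cross the current time. Of these, at most $2k$ come from the edges of the current snapshot, and each row contributes at most one, plus a constant. The \textbf{main obstacle} is exactly here: rows that are not active at time $t$ still have a representative adjacent to $P$, which threatens an $n$ bound.

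To resolve this, I would also merge all inactive row representatives with each other. They are pairwise non-adjacent and have identical black neighbourhoods except for $P$ and their row successors. This lets a single "idle" vertex $I$ carry all rows not touched by $E_t$. The red degrees then become: $P$ and $I$ together see at most the $2k$ endpoints of $E_t$ plus $O(1)$. The remaining work is to check, carefully, that the constants come out to $k+3$ rather than $2k+O(1)$. This should follow by pairing the two endpoints of each edge of $E_t$ (absorbing an edge's endpoints together), and I expect that verification to be the fiddly part.
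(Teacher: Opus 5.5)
There is a genuine gap. The time sweep cannot give a bound independent of $n$, and the $k+3$ count is never actually carried out.

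Your Phase~2 invariant already fails. If a single part $P$ contains all of $V_0\cup\dots\cup V_{t-1}$ while column $t$ is unmerged, then each $(u,t)$ is adjacent to $(u,t-1)\in P$ and to no other element of $P$. So $P$ has red degree $n$ as soon as $|P|\ge 2$.

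Contracting runs of non-$M$ vertices along rows still leaves one frontier part per row. Merging those run representatives into an idle part $I$ only moves the problem to the right ends of the runs. A run representative is adjacent to nothing but its two row neighbours, so $I$ (or $P$, if you absorb it) becomes red-adjacent to the row successors of all merged runs. These are pairwise distinct vertices in \emph{future} snapshots, not endpoints of $E_t$.

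For example, take $k=1$, $\tau=n$ and $E_t=E_{n/2+t}=\{u_{2t+1}u_{2t+2}\}$ for $0\le t<n/2$. Once the sweep reaches time $n/2$, the runs of all rows $u_3,\dots,u_n$ contain time $n/2$ and end just before the second-half activations of these rows. So $P$ and $I$ together have at least $n-2$ red neighbours. Your own tally (``each row contributes at most one'') is exactly this linear bound. Pairing the two endpoints of an edge of $E_t$ cannot help with red neighbours that are not endpoints of $E_t$ at all.

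The missing idea, which is the paper's proof, is to contract across the vertex set rather than across time, performing the same merges in every column. Order $V=\{u_1,\dots,u_n\}$. For $i=1,\dots,n-1$ and then $t=0,\dots,\tau-1$, contract $(u_{i+1},t)$ into the part of column $t$ holding $u_1,\dots,u_i$, producing $X_i^{(t)}$.

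Temporal edges only join copies of the same vertex, and rows are merged identically in all columns. Hence each column contains one non-trivial part. When $X_i^{(t)}$ is created, its red neighbours are $X_i^{(t-1)}$, $X_{i-1}^{(t+1)}$, $(u_{i+1},t+1)$, and those singletons $(u_j,t)$, $j>i+1$, joined by an edge of $E_t$ to $\{u_1,\dots,u_{i+1}\}$.

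Each edge of $E_t$ has at most one endpoint outside the merged set, so there are at most $k$ such singletons. This is where $k$ rather than your $2k$ comes from, and it gives $k+3$. The other parts have red degree at most $k+2$, and singletons at most $2$. The process ends with a red path $X_{n-1}^{(0)},\dots,X_{n-1}^{(\tau-1)}$, which contracts with red degree at most $2$.
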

\begin{proof} If we let $V = \{u_1, \ldots, u_n\}$, then $V(\Gset_\rightarrow) = \{(u_i, t) \mid 1\leq i\leq n, t \in \llbracket0, \tau-1\rrbracket\}$. We will contract vertices in the following order : first we contract $(u_1,0)$ with $(u_2, 0)$ forming $X_1^{(0)}$, then $(u_1, 1)$ and $(u_2, 1)$ forming $X_1^{(1)}$, and so on until $(u_1, \tau-1)$ and $(u_2, \tau-1)$ forming $X_1^{(\tau-1)}$. Afterwards, we contract $X_1^{(0)}$ with $(u_3,0)$, then $X_1^{(1)}$ and $(u_3, 1)$, and so on until $X_1^{(\tau-1)}$ and $(u_3, \tau-1)$. By continuing like this until contracting all vertices $X_{n-2}^{(t)}$ and $(u_{n}, t)$, for all $0\leq t<\tau$, we will obtain a line graph, which is of twin-width bounded by one. The red degree is bounded by $k + 3$ throughout the whole process. In fact, for $0 < t < \tau -1$ and $1 \le i \le n-1$, $X_i^{(t)}$ will have red edges to $X_i^{(t-1)}, X_{i-1}^{(t+1)}, (u_{i+1}, t+1)$ and to no more than to $k$ vertices $(u_j, t)$ with $j > i + 1$, as there are at most $k$ edges in $E_t$ by hypothesis.
Furthermore, for $t=0$ (resp.\ $t=\tau-1$), $X_i^{(t)}$ has only two red edges to $X_{i-1}^{(t+1)}, (u_{i+1}, t+1)$ (resp.\ one red edge to $X_i^{(t-1)}$), and $k$ red edges to vertices on the same snapshot. We have shown that $\mathrm{tww}_\rightarrow(\mathcal{G}) \le k + 3$.
\end{proof}

\begin{theorem}
\label{absence_meta_extended}
Unless P=NP, there exists no FPT algorithm parameterized by the combined value (or the sum) of the expanded twin-width $\mathrm{tww}_\rightarrow$ of the input temporal graph and the size of the input formula for \MSOone{} model checking.
\end{theorem}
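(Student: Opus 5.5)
We reduce from $3$-colorability of graphs of maximum degree at most $4$, which is $NP$-complete~\cite{D80}, reusing the construction from the proof of Theorem~\ref{th:neg_tw_inf} but spreading the edges over time so that Lemma~\ref{extended_twin-width_bounded} applies. Given $G=(V,E)$ with $E=\{e_0,\dots,e_{m-1}\}$, build the temporal graph $h(G)=(V,E_0,\dots,E_{m-1})$ with $E_t=\{e_t\}$. This takes polynomial time, and $h(G)_\downarrow=G$. Since every snapshot has at most one edge, Lemma~\ref{extended_twin-width_bounded} with $k=1$ gives $\ttwww(h(G))\leq 4$, a constant independent of $G$. We do not even need the degree bound here, so plain $3$-coloring of arbitrary graphs suffices.

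Next we reuse the sentence $\varphi_{3\text{col}\downarrow}$ from the proof of Theorem~\ref{th:neg_tw_inf}. It partitions the domain into three $\suc$-closed sets with no $\adj$-edge inside any part. As argued there, $\structone{h(G)}\models\varphi_{3\text{col}\downarrow}$ if and only if $h(G)_\downarrow=G$ is $3$-colorable. This argument does not depend on how the edges are distributed among the snapshots, so it carries over verbatim. The formula has fixed size.

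Suppose an algorithm decided $\MSOone$ model checking in time $f(\ttwww,|\varphi|)\cdot||\Gset||^{O(1)}$. On input $h(G)$ and $\varphi_{3\text{col}\downarrow}$ it would run in time $f(4,|\varphi_{3\text{col}\downarrow}|)\cdot||h(G)||^{O(1)}$. Since $||h(G)||=|V|+|E|$, this is polynomial in $||G||$. We would thus obtain a polynomial algorithm for an $NP$-complete problem, contradicting $P\neq NP$.

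The main point needing care is the definition of FPT model checking for twin-width. If the algorithm is allowed (weak version) to receive a contraction sequence as input, we must supply one in polynomial time. The proof of Lemma~\ref{extended_twin-width_bounded} is explicitly constructive: it contracts row by row and then contracts the resulting line graph. That sequence is computable in polynomial time with red degree at most $4$, so the reduction also rules out the weak version. The only remaining check is that the red-degree count in that lemma holds at the boundary snapshots and in the final line-graph phase, which the lemma already asserts.
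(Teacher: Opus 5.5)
Your proposal is correct and follows the paper's proof essentially verbatim. The ingredients coincide: one edge per snapshot, Lemma~\ref{extended_twin-width_bounded} with $k=1$ giving $\ttwww\leq 4$, the $\suc$-closed partition sentence $\varphi_{3\text{col}\downarrow}$ from Theorem~\ref{th:neg_tw_inf}, and a reduction from unrestricted $3$-colorability. Your extra observation that the lemma's contraction sequence is computable in polynomial time is a useful addition, because it backs the paper's claim (made in the introduction but not argued in the proof) that even the weak, decomposition-given version is ruled out.
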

\begin{proof} Like in the proof of Theorem~\ref{th:neg_tw_inf}, we show that, if such an FPT algorithm existed, we could solve the $3$-colorability problem on static graphs in polynomial time. 
For $G=(V,\{e_1, \ldots, e_m\})$ an arbitrary static graph, we construct $\Gset_G = (V, \{e_1\}, \ldots, \{e_m\})$ in polynomial time in $||G||$. Since $(\Gset_G)_\downarrow=G$, we can decide $3$-colorability on
$(\Gset_G)_\downarrow$. Moreover, by Lemma~\ref{extended_twin-width_bounded}, $\ttwww(\Gset)\leq 4$. We use $\varphi_{\text{color}\downarrow}$ from proof of Theorem~\ref{th:neg_tw_inf}, that
ensures that $\structone{\Gset_G)}\models\varphi_{\text{color}\downarrow}$ if and only if $(\Gset_G)_\downarrow$ is $3$-colorable. 
We could decide it in time $O(f(|\varphi|, 4) \cdot ||\mathcal{G}||^c) = O(||G||^c)$, a contradiction with $3$-colorability of static graphs being an NP-complete problem.
\end{proof}

Together with Lemma~\ref{snapshottww_smaller_extendedtww}, we get the following corollary.
\begin{corollary}\label{corollary_infinite_twinwidth}
Unless P$=$NP, there exists no FPT algorithm parameterized by the combined value (or the sum) of the infinite twin-width $\mathrm{tww}_\infty$ of the input temporal graph and the size of the input formula for \MSOone{} model checking.
\end{corollary}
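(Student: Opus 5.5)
The corollary follows from Theorem~\ref{absence_meta_extended} together with Lemma~\ref{snapshottww_smaller_extendedtww}, by reusing the same reduction from $3$-colorability. We argue by contradiction. Assume there is an algorithm deciding $\structone{\Gset}\models\varphi$ in time $f(\twww_\infty(\Gset),|\varphi|)\cdot||\Gset||^{c}$. We will show it yields a polynomial-time algorithm for $3$-colorability of arbitrary static graphs.

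Given a static graph $G=(V,\{e_1,\dots,e_m\})$, we build in polynomial time the temporal graph $\Gset_G=(V,\{e_1\},\dots,\{e_m\})$ from the proof of Theorem~\ref{absence_meta_extended}. Its union graph is $(\Gset_G)_\downarrow=G$. We then bound $\twww_\infty(\Gset_G)$ by a constant, in either of two ways:
\begin{itemize}
\item \emph{Via the earlier lemmas.} Each snapshot has at most one edge, so Lemma~\ref{extended_twin-width_bounded} with $k=1$ gives $\ttwww(\Gset_G)\leq 4$. Lemma~\ref{snapshottww_smaller_extendedtww} then gives $\twww_\infty(\Gset_G)\leq\ttwww(\Gset_G)\leq 4$.
\item \emph{Directly.} A graph with a single edge plus isolated vertices has twin-width $0$. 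Contract all isolated vertices together, then merge them with one endpoint of the edge: no red edge ever appears, because all these vertices are non-adjacent to everything else except the edge endpoint, which is handled last.
\end{itemize}
Either way, $\twww_\infty(\Gset_G)$ is bounded by a constant independent of $G$.

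We then run the hypothetical algorithm on $\structone{\Gset_G}$ with the fixed sentence $\varphi_{3\text{col}\downarrow}$ from the proof of Theorem~\ref{th:neg_tw_inf}. That sentence satisfies $\structone{\Gset}\models\varphi_{3\text{col}\downarrow}$ if and only if $\Gset_\downarrow$ is $3$-colorable, and its size does not depend on $G$. The running time is therefore $f(4,|\varphi_{3\text{col}\downarrow}|)\cdot||\Gset_G||^c$. Since $||\Gset_G||=|V|+m$, this is polynomial in $||G||$, contradicting the $NP$-completeness of $3$-colorability unless $P=NP$.

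The only point needing care is that the parameter stays bounded on the whole family of reduction instances. This holds because $\twww_\infty\leq\ttwww$ by Lemma~\ref{snapshottww_smaller_extendedtww}, and in any case each snapshot here is trivially of twin-width $0$. No genuine obstacle remains beyond citing these results.
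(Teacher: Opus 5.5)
Your argument is correct and is essentially the paper's. The corollary comes from taking the reduction of Theorem~\ref{absence_meta_extended}, where $\mathrm{tww}_\rightarrow(\Gset_G)\le 4$ by Lemma~\ref{extended_twin-width_bounded}, and applying Lemma~\ref{snapshottww_smaller_extendedtww}, which gives $\mathrm{tww}_\infty(\Gset_G)\le 4$ on the same instances.

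There is one small slip in your optional ``direct'' bullet. Merging the contracted isolated vertices into one endpoint $u$ of the edge $uv$ creates a red edge to $v$, so that order only certifies red degree $1$, not $0$. Contracting $u$ and $v$ first (they are true twins) is genuinely red-free and gives twin-width $0$. Either way the bound is a constant, so the proof still goes through.
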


For a static graph $G$, $\tww{G}\leq \tw(G)$, hence for any temporal graph $\Gset$,
$\twww_\downarrow(\Gset)\leq \tw_\downarrow(\Gset)$. Theorem~\ref{th:neg_rtb} has then the following corollary, that we also give below an independent proof.
Note that the result does not follows from Lemma~\ref{snapshottww_smaller_extendedtww} and Theorem~\ref{absence_meta_extended}.
In particular, we do not know how the underlying twin-width $\mathrm{tww}_\downarrow$ relates to the expanded twin-width $\mathrm{tww}_\rightarrow$.
\begin{theorem}\label{thm:fpt-underlying-twww}
Unless P=NP, there exists no FPT algorithm parameterized by the combined value (or the sum) of the underlying twin-width $\mathrm{tww}_\downarrow$ of the input temporal graph and the size of the input formula for \MSOone{} model checking.
\end{theorem}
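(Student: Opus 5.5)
The quickest route is a reduction from Theorem~\ref{th:neg_rtb}. For every static graph $G$ we have $\tww{G}\leq \tw(G)$, so every temporal graph satisfies $\twww_\downarrow(\Gset)\leq \tw_\downarrow(\Gset)$. Suppose an FPT algorithm existed for $\MSOone$ model checking parameterized by $\twww_\downarrow$ and $|\varphi|$. It would run in time $f(\twww_\downarrow(\Gset),|\varphi|)\cdot||\Gset||^{O(1)}$.

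On the instances used in the proof of Theorem~\ref{th:neg_rtb}, namely Existential-RTB-TGE with $\tw_\downarrow(\Gset')\leq 3$, we get $\twww_\downarrow(\Gset')\leq 3$. The sentence $\exists u.~\varphi_{\text{RTB-TGE}}(u)$ is fixed. The algorithm would therefore solve an NP-hard problem in polynomial time.

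One subtlety needs care. The FPT algorithm may assume $f$ is monotone; otherwise we replace $f$ by $\max_{k'\le k} f(k',\cdot)$, which is still a valid bound. With that, the running time on these instances is at most a constant times a polynomial. This closes the proof via the dependency on Theorem~\ref{th:neg_rtb}.

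For the independent proof the paper promises, I would mimic Theorem~\ref{absence_meta_extended} with $3$-colorability. That requires an NP-hard family with bounded twin-width union graphs, so the key step is choosing such a family. The first thing I would try is $3$-coloring restricted to graphs of bounded twin-width. Planar graphs of maximum degree~$4$ work: $3$-colorability is NP-complete on them~\cite{D80}, and planar graphs have bounded twin-width~\cite{BKTW22}.

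Given such a $G$, I would build any temporal graph $\Gset$ with $\Gset_\downarrow=G$. One option is a single snapshot. Another is the one-edge-per-snapshot construction $\Gset_G$, since only the union graph matters here. Then $\twww_\downarrow(\Gset)$ is bounded by an absolute constant. I would apply the formula $\varphi_{3\text{col}\downarrow}$ from the proof of Theorem~\ref{th:neg_tw_inf}, which expresses $3$-colorability of $\Gset_\downarrow$. The hypothetical FPT algorithm would then decide planar $3$-colorability in polynomial time, contradicting NP-hardness.

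The main obstacle is justifying a concrete constant bound on the twin-width of the hard class. The first option is to cite the planar-graph bound. The fallback is the reduction through Theorem~\ref{th:neg_rtb} above, where the bound $\twww\le\tw$ applied to union graphs of tree-width at most $3$ gives the constant directly.
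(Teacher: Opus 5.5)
Your first route is the paper's own justification (the theorem as a corollary of Theorem~\ref{th:neg_rtb}). However, the inequality $\tww{G}\leq\tw(G)$ that you and the paper invoke is false.

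\emph{Counterexample.} Subdivide each edge of $K_{1,3}$ once. The result is a tree, so it has tree-width $1$, but its twin-width is $2$. The only first contractions of red degree at most $1$ merge a leaf with its neighbour. After such a merge, every further contraction creates a vertex of red degree at least $2$. More generally, twin-width can be exponential in tree-width (Bonnet and D\'epr\'es).

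\emph{Fix.} What is true is $\tww{G}\leq 3\cdot 2^{\tw(G)-1}$ (Jacob and Pilipczuk). The instances with $\tw_\downarrow(\Gset')\leq 3$ therefore satisfy $\twww_\downarrow(\Gset')\leq 12$, not $\leq 3$. Your argument only needs some constant bound, so the route survives once the inequality is replaced.

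Your independent route is correct but differs from the paper's proof.
\begin{itemize}
\item The paper takes an arbitrary $G$ and builds $\Gset_G=(V,E,K_{|V|})$. The union graph is then a clique, so $\twww_\downarrow(\Gset_G)=0$. Its formula checks $3$-colorability of the first snapshot only, identifying first-snapshot time vertices as those with no $\suc$-predecessor. This is elementary, needing only $\tww{K_n}=0$, and it gives hardness already at $\twww_\downarrow=0$.
\item You instead keep $\Gset_\downarrow=G$ and restrict $G$ to planar graphs of maximum degree $4$, which is NP-hard by~\cite{D80}. You then use that proper minor-closed classes have bounded twin-width~\cite{BKTW22}. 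This reuses $\varphi_{3\text{col}\downarrow}$ unchanged, but it depends on a much heavier external theorem.
\end{itemize}
Contrary to your ``main obstacle'', no concrete constant is needed: the existence of a uniform bound for planar graphs suffices. A small bonus of your route: with the one-edge-per-snapshot construction, the same instances also satisfy $\ttwww(\Gset_G)\leq 4$ by Lemma~\ref{extended_twin-width_bounded}. One family then witnesses both Theorem~\ref{absence_meta_extended} and this theorem.
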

\begin{proof} 
We proceed similarly as in the proof of Theorem \ref{absence_meta_extended} but for $G = (V,E)$, we construct $\Gset_G = (V, E, K_{|V|})$. Then, $(\Gset_G)_\downarrow = K_{|V|}$ the complete graph with $|V|$ vertices. So $\mathrm{tww}_\downarrow(\Gset_G) = 0$. We build $\varphi \in$~\MSOone{} such that $\structone{\Gset_G} \models \varphi$ if and only if the first snapshot of 
$\Gset$ is 3-colorable.
First, we define $\psi(X) = \exists s\exists s'.~(s\in X\wedge s'\in X \wedge \forall s''.~(\neg \mathrm{succ}(s'',s) \wedge \neg \mathrm{succ}(s'', s')) \wedge \mathrm{adj}(s, s'))$ that states that there exist two elements in the set $X$ that are adjacent and in the first snapshot.
Then $\varphi = \exists X_1\exists X_2\exists X_3.~(\texttt{Partition}(X_1, X_2, X_3) \wedge \neg \psi(X_1) \wedge \neg \psi(X_2) \wedge \neg \psi(X_3))$.
\end{proof}

\section{$\FOone^\Delta$ Model Checking for bounded Expanded Twin-Width}
\label{section.dtw}
We relax $\MSOT$ down to First Order and aim at proving the following result.
\setcounter{toutpetit}{\value{theorem}}
\begin{theorem}\label{th:amt-tww}
Given $\Gset=(V,E_0,\dots, E_{\tau-1})$ a temporal graph such that $\ttwww(\Gset)\leq d$, $\varphi$ a formula in $\FOone^\Delta$, and a $d$-contraction sequence of $\expG{\Gset}$ (called decomposition for twin-width), one can decide if $\graphstruct{\Gset}_\Delta\models\varphi$
in $O(||\Gset|| \cdot f(|\varphi|, d))$, with $f$ a computable function that depends on $|\varphi|$ and $d$. 
\end{theorem}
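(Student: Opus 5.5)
Since $\FOone^\Delta$ and the structure $\graphstruct{\Gset}_\Delta$ are only defined later, I take them to be a first-order logic over the time vertices of $\Gset$, with binary relations that are first-order definable in $\expG{\Gset}$. These relations would be, for instance, $\adj$, $\suc$, and bounded-distance temporal relations such as $\suc_{\leq \Delta}$ and ``adjacent within a window of $\Delta$ snapshots''. The plan is to reduce to the first-order model checking theorem of Bonnet et al.~\cite{BKTW22}: given a graph (or binary structure) $H$, a $d$-contraction sequence of $H$, and an $\FO$ sentence $\psi$, one decides $H\models\psi$ in time $f(|\psi|,d)\cdot|V(H)|$.

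\textbf{Step 1: remove untimed vertices.} Any untimed vertices in $\graphstruct{\Gset}_\Delta$ are handled as in the proof of Theorem~\ref{th:mso1ttw}. Quantification over untimed vertices $v$ is replaced by quantification over their first avatar, i.e.\ over time vertices $x$ with no $\suc$-predecessor. Equality of untimed vertices becomes ``same vertex'', which needs care in $\FO$ because $\suc^*$ is not first-order. This is where locality matters. I expect $\FOone^\Delta$ to allow comparing avatars only within distance $\Delta$ in time, so ``same vertex'' reduces to $\suc_{\leq\Delta}$, which is $\FO$-definable with $\Delta$ nested quantifiers.

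\textbf{Step 2: build the translation and bound its cost.} Every atom of $\FOone^\Delta$ gets an $\FO$ formula over the signature $\{E\}$ of $\expG{\Gset}$. Edges of $\expG{\Gset}$ are either local or temporal, and I distinguish them in $\FO$ with a 2-colouring of the structure. Concretely, I add unary predicates marking time parity, $t \bmod 2$ (or $t \bmod 3$). An edge between two vertices of different parity is then temporal, and one between vertices of equal parity is local. Unary predicates do not increase twin-width: we refine the contraction sequence by never merging vertices of different colour until the end, at a cost of a constant factor in red degree. Alternatively, I use the version of \cite{BKTW22} for binary structures, with $\adj$ and $\suc$ as separate relations. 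Here, a contraction sequence of $\expG{\Gset}$ yields one for the two-relation structure whose red degree is at most $2d$ (red edges are counted per relation, and the union is bounded). The translated sentence $\varphi'$ has size $g(|\varphi|,\Delta)$. Since $\Delta$ is presumably part of the formula (written in its superscript), $g$ depends only on $|\varphi|$.

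\textbf{Step 3: run the algorithm.} I apply \cite{BKTW22} to $\expG{\Gset}$ with the given contraction sequence and $\varphi'$. The running time is linear in $|V(\expG{\Gset})| = n\tau$. To get $O(||\Gset||)$ rather than $O(n\tau)$, I would note that the algorithm of \cite{BKTW22} runs in time linear in the length of the contraction sequence. If the paper counts $||\Gset||$ as including the $n\tau$ time vertices, this is immediate. Otherwise, maximal runs of snapshots with no edges incident to a vertex must be compressed, and this is where I expect friction.

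The main obstacle is the precise treatment of non-local predicates, such as temporal reachability through $\suc^*$. If $\FOone^\Delta$ contains such a predicate, it is not $\FO$-definable. Adding it as a relation could destroy the twin-width bound, because the transitive closure of $\suc$ forms cliques along each vertex's timeline. I expect the $\Delta$ restriction to exist exactly to bound these predicates to windows of length $\Delta$. If the window is bounded, the predicate is $\FO$-definable and so fits the translation above.
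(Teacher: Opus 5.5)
Your first route is correct, but it differs from the paper at the key step.

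\textbf{Step 1.} This is essentially Lemma~\ref{lem:fo-fodelta}. The structure $\graphstruct{\Gset}_\Delta$ contains only time vertices, and $\app$ is replaced by $\simd_\Delta$. The translation rewrites $\simd_\Delta$ using at most $\Delta$ $\suc$-steps, so your discussion of untimed vertices is moot.

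\textbf{The key step.} The paper proves Theorem~\ref{th:tww-structure}: the given $d$-sequence of $\expG{\Gset}$, replayed on the two-relation structure $\graphstruct{\Gset}$, has red degree at most $d+2$. A case analysis shows that a part has at most two neighbours that are fully adjacent in $\expG{\Gset}$ but mix $\adj$- and $\suc$-pairs. Theorem~7.1 of \cite{BKTW22} is then applied to $\graphstruct{\Gset}$.

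You instead colour $(v,t)$ by $t \bmod 3$ into classes $C_0,C_1,C_2$. You do need $3$ rather than $2$, since $\suc$ is directed. This makes $\adj$ and $\suc$ first-order definable from the edges of $\expG{\Gset}$ and the colours, and you then refine the given sequence along colour classes. This works:
\begin{itemize}
\item Subsets of a homogeneous pair are homogeneous. So a part $X\cap C_j$ can only be red to parts $Y\cap C_k$ with $Y$ red to $X$, or $Y=X$, in the original sequence.
\item This gives red degree at most $3(d+1)$, plus a bounded overhead while one original merge is simulated by up to three colour-respecting merges.
\item The last three classes are merged at the end.
\item Colours can be passed to \cite{BKTW22} as binary relations $\{(x,y): x\in C_j\}$, which create no red edges between monochromatic parts.
\end{itemize}
Your route avoids the case analysis and recovers the orientation of $\suc$ for free. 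The paper's lemma gives the sharper bound $d+2$, and a statement about $\graphstruct{\Gset}$ itself that it reuses for differentials in Lemma~\ref{lem:tw}.

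\textbf{Fallback (b).} This route is not justified. Red edges of $\graphstruct{\Gset}$ are not bounded relation by relation by those of $\expG{\Gset}$. A pair of parts can be black in $\expG{\Gset}$ yet non-homogeneous for both $\adj$ and $\suc$. An example is $\{(u,t)\}$ against $\{(u,t+1),(v,t)\}$ with $uv\in E_t$. Concretely, take $V=\{u,v\}$, $\tau=2$ and $E_0=E_1=\{uv\}$. Then $\expG{\Gset}$ is a $C_4$ with a $0$-sequence, while $\graphstruct{\Gset}$ has no twins. So ``red degree at most $2d$'' already fails at $d=0$. Bounding exactly these mixed black pairs is the content of Theorem~\ref{th:tww-structure}.

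\textbf{Step 3.} The paper's proof stops at $O(\tau\cdot|V|\cdot g(|\varphi'|,d+2))$ and identifies this with the stated bound. It attempts no compression of empty snapshots. The input contraction sequence already has $|V|\tau-1$ steps, so time linear in $|V|\tau$ is linear in the input.
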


First Order logic over temporal graphs (\FOone) is a restriction of \MSOone , in which we only consider variables from $\mathcal{X}_0$.
Given $\Delta \in \N^*$, the $\Delta$-variant of a logic over temporal graphs has the same syntax except that the predicate $\app(x, x')$ is replaced by $\simd_\Delta(x, x')$, which holds when the time vertices $x$ and $x'$ represent the same untime vertex, and appear in two snapshots whose distance is at most $\Delta$.
The resulting signature is $\sigma_\Delta=\{\adj,\suc,\simd_\Delta\}$, and we define $\FOone^\Delta=\FO(\sigma_\Delta)$.
To a temporal graph  $\Gset=(V,E_0,\dots, E_{\tau-1})$ we associate the $\sigma_\Delta$-structure $\graphstruct{\Gset}_\Delta$ with domain $D_{\graphstruct{\Gset}_\Delta}=V\times\llbracket 0, \dots, \tau-1\rrbracket$. The relations $\adj$ and $\suc$ are interpreted as in $\MSOone$, and $\simd_\Delta^{\graphstruct{\Gset}_\Delta}=\{((v,t),(v,t'))\mid v\in V, 0\leq t,t'\leq \tau-1, \textrm{ and }|t-t'|\leq \Delta\}$. The predicate $\simd_\Delta$ does not add expressivity to the fragment, since it can be expressed with $\Delta$ existentially introduced variables and the predicate $\suc$. We introduce it as syntactic sugar, and to shorten the formulas.
We also consider the smaller signature $\sigma=\{\adj,\suc\}$ and write  $\FO=FO(\sigma)$. The $\sigma$-structure associated with $\Gset$ is obtained from $\graphstruct{\Gset}_\Delta$ by omitting the interpretation of $\simd_\Delta$; we denote it by $\graphstruct{\Gset}$.

We show expressivity of $\FOone^\Delta$ in Section~\ref{subsec:mso-ex}, with a formula expressing the existence of a $\Delta$-clique of size $k$ in a temporal graph, for a given $\Delta$. A set of vertices is a $\Delta$-clique if there exists a point in time where every pair of its elements is connected by
an edge \emph{within a timeframe of $\Delta$-snapshots}.
Back to Model Checking, the following lemma is straightforward.
\begin{lemma}\label{lem:fo-fodelta}
For any $\varphi$ a formula in $\FOone^\Delta$, we can build a formula $\varphi'$ in $\FO$ such that for any $\sigma_\Delta$-structure $\Aset$, 
$\Aset\models\varphi$ if and only if $\Aset\models\varphi'$.
Moreover, $|\varphi'|\leq|\varphi|\cdot\Delta$.
\end{lemma}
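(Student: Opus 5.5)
The plan is a syntactic translation defined by induction on the structure of $\varphi$: every atomic occurrence of $\simd_\Delta$ is replaced by an $\FO(\sigma)$ formula that defines the same relation on $\graphstruct{\Gset}_\Delta$, and all other constructors are left unchanged. Since $\adj$ and $\suc$ are interpreted identically in $\graphstruct{\Gset}_\Delta$ and in its reduct $\graphstruct{\Gset}$, equivalence on atoms propagates through $\neg$, $\wedge$ and $\exists x$ by a routine induction, with the valuation $\nu_0$ fixed. Hence $\Aset\models\varphi$ if and only if $\Aset\models\varphi'$, where $\varphi'$ only uses the symbols $\adj$ and $\suc$. Here $\Aset$ is read as a structure of the form $\graphstruct{\Gset}_\Delta$, which is the intended meaning of $\sigma_\Delta$-structure.

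The key step is to define $\simd_\Delta(x,x')$. I will reuse the shortcut $\suc_{\leq i}$ from the beginning of Section~\ref{section.first_amt} and set
\[
(\simd_\Delta(x,x'))' := \suc_{\leq\Delta}(x,x')\vee\suc_{\leq\Delta}(x',x).
\]
To check correctness, recall that $\suc$ only links $(v,t)$ to $(v,t+1)$. So a chain of $k\le\Delta$ consecutive $\suc$-steps from $x$ to $x'$ exists exactly when $x=(v,t)$ and $x'=(v,t+k)$. Together with the symmetric disjunct and the case $x=x'$, this matches $\simd_\Delta^{\graphstruct{\Gset}_\Delta}$ precisely. The intermediate existential variables in $\suc_k$ are first-order, so $\varphi'$ is indeed in $\FO$. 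They should be chosen fresh so that they do not capture free variables of $\varphi$.

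The only point needing care is the size bound $|\varphi'|\le|\varphi|\cdot\Delta$. Written naively, $\suc_{\leq\Delta}$ is a disjunction of chains of every length up to $\Delta$, which has size $\Theta(\Delta^2)$. I would therefore first replace it by a linear-size formula built from nested existentials, each step allowing either an equality or a $\suc$-move:
\[
\exists y_1\dots\exists y_{\Delta-1}.~\bigwedge_{j}(y_j=y_{j+1}\vee\suc(y_j,y_{j+1})),
\]
with $y_0=x$ and $y_\Delta=x'$. This has size $O(\Delta)$.

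Even so, the symmetric disjunction and the constant overhead per step mean that a literal bound of exactly $\Delta$ per atom may fail. If the stated constant is meant literally, I would count carefully or accept $O(|\varphi|\cdot\Delta)$, which suffices for all later uses. Each non-$\simd_\Delta$ symbol keeps size $1$, and each $\simd_\Delta$ atom of size $1$ becomes a formula of size $O(\Delta)$. Summing over the symbols of $\varphi$ gives the linear bound, and I expect this bookkeeping to be the only real obstacle.
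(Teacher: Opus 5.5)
Your proof is correct and is essentially the argument the paper has in mind: the paper gives no proof, calling the lemma straightforward after noting that $\simd_\Delta$ can be expressed with $\Delta$ existentially quantified variables and $\suc$, which is exactly your substitution (your linear-size chain plus the symmetric disjunct is a sensible way to avoid the quadratic size of $\suc_{\leq\Delta}$). Your caution about the constant is also justified: with the paper's size measure the literal bound already fails for $\Delta=1$ and the atomic formula $\simd_1(x,x')$, since no single atom over $\{\adj,\suc\}$ with equality defines $\simd_1$, so $O(|\varphi|\cdot\Delta)$ is the correct bound, and it is all that the later theorems use.
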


Since we address logical structures, a proper definition of their twin-width is needed.
Given a temporal graph $\Gset=(V,E_0,\dots, E_{\tau-1})$, a time vertex $a\in V\times \llbracket 0, \tau-1\rrbracket$, and a strictly positive natural number $r$, 
$\localstatic{r}(a)$ is a logical structure, more general than a graph. We first need to define the twin-width of our logical structures. 
Let $\graphstruct{\Gset}$ be a $\sigma$-structure. In a contraction sequence $G_n,\dots, G_1$ of $\graphstruct{\Gset}$, 
each $G_i=(V_i,E_i^{\adj}\uplus E_i^{\suc}\uplus R_i)$ is now a structure over the signature $\{\suc,\adj\}$, along with $R_i$ the set of edges relating two non-homogeneous vertices:
In a contraction replacing a 
pair of vertices $u,v\in G_{i+1}$ by $w$, for any other vertex $x$ in $G_i$, $wx\in E_i^{\adj}$ if $ux, vx\in E_{i+1}^{\adj}$ and $wx\in E_i^{\suc}$ if $ux, vx\in E_{i+1}^{\suc}$. Otherwise, $wx\in R_i$. Again,
$wx\notin E_i^{\suc}\cup E_i^{\adj}\cup R_i$ if both $ux, vx\notin E_i^{\suc}\cup E_i^{\adj}\cup R_i$ and in every other cases, $wx\in R_i$. This means that, when contracting two vertices, that are both
related to the same vertex $x$, but \emph{not with the same relation}, the new vertex is related to $x$ with a red edge. 

Since $\ttwww(\Gset)= \twww(\expG{\Gset})$ and $\Gset_{\rightarrow}$ does not distinguish between internal snapshot edges and edges between two successive snapshots, we obtain $\ttwww(\Gset)\leq \twww(\graphstruct{\Gset})$. 
In reality, because of its specific structure, the increased twin-width in the static expansion graph is very limited.
The following property is crucial in proving Theorem~\ref{th:amt-tww}.
\begin{theorem}\label{th:tww-structure}
Let $\Gset=(V,E_0,\dots, E_{\tau-1})$ be a temporal graph. Then $\ttwww(\Gset)\leq \tww{\graphstruct{\Gset}}\leq \ttwww(\Gset)+2$, and any $d$-contraction of
$\Gset$ gives a $d$+2-contraction of $\graphstruct{\Gset}$.
\end{theorem}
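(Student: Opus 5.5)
The lower bound is immediate from the remark just before the statement. Any contraction sequence of $\graphstruct{\Gset}$ is also a contraction sequence of $\expG{\Gset}$, because $\expG{\Gset}$ is obtained by forgetting the distinction between $\adj$ and $\suc$. When the two relations are forgotten, every pair that stays black in the structure sequence also stays black in the graph sequence. So red degrees can only decrease, which gives $\ttwww(\Gset)\leq\tww{\graphstruct{\Gset}}$.

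For the upper bound, I take a $d$-contraction $(G_i)_{n\geq i>0}$ of $\expG{\Gset}$ and apply the very same sequence of vertex merges to $\graphstruct{\Gset}$. This gives a sequence $(H_i)$ on the same vertex partitions. The edge sets agree as sets of pairs: a pair of parts is related in $H_i$ iff it is related in $G_i$. The only discrepancy is the colour. A pair can be black in $G_i$ but red in $H_i$, namely when two merged vertices both see some part $x$, but one via a $\suc$-edge and the other via an $\adj$-edge. The goal is to show that each part $P$ of $H_i$ has at most two such \emph{newly red} neighbours. Then its red degree is at most $d+2$, which proves the claim and also shows that the given $d$-contraction is turned into a $(d+2)$-contraction.

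The key step is to characterise when a pair $\{P,Q\}$ of parts can be black in $G_i$ yet red in $H_i$. Black in $G_i$ means complete in $\expG{\Gset}$: every $p\in P$ and $q\in Q$ are adjacent. Red in $H_i$ means that some of these edges are temporal edges and some are local edges, since $\tempE\cap\adjE=\emptyset$. A temporal edge $pq$ forces $p,q$ to be avatars $(u,t),(u,t\pm1)$ of the same untime vertex. I would exploit this rigidity. If $P$ contains a vertex $(u,t)$ that is joined to all of $Q$ using at least one temporal edge, then completeness forces $|Q|$ to be small and $Q\subseteq\{(u,t-1),(u,t+1)\}\cup N_t$. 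The temporal edges at any single vertex go only to the at most two vertices $(u,t-1)$ and $(u,t+1)$. A counting argument should then show that the newly red neighbours $Q$ of $P$ each contain, or are, one of these temporal neighbours. Alternatively it may show that the pair $P,Q$ is $\{\{(u,t)\},\{(u,t+1)\}\}$-like, i.e.\ both parts have size at most 2 and are locally confined.

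The main obstacle is making this bound at most 2 independent of $|P|$. A large part $P$ could a priori have many newly red neighbours, each coming from a different vertex's temporal neighbours. I would handle this by showing that if $|P|\geq 2$ and $Q$ is complete to $P$ using a temporal edge $(u,t)(u,t+1)$, then every other $p'\in P$ must be adjacent to $(u,t+1)$. Such a $p'$ is either $(u,t+2)$ or a local neighbour of $(u,t+1)$. Symmetric constraints on $Q$ then pin down both parts to lie inside $\{u\}\times\{t,t+1,t+2\}$ together with local neighbourhoods in a single snapshot. I expect this to leave at most two candidate parts $Q$ per $P$, corresponding roughly to the ``earlier'' and ``later'' temporal directions. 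If the case analysis does not close cleanly, I would first try an alternative: strengthen the statement to an invariant carried along the sequence, namely that each part has at most one newly red neighbour in each time direction, and verify that each merge preserves it.
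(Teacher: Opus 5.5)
\textbf{Gap: the bound of two newly red neighbours per part is not proved.} Your lower bound is correct. So is the set-up of the upper bound: replay the same merges on $\graphstruct{\Gset}$, and observe that a pair black in $G_i$ but red in $H_i$ is complete in $\expG{\Gset}$ and crossed by a temporal edge. This is exactly how the paper begins. But the claim that each part has at most two newly red neighbours carries the whole theorem, and you leave it open. You say a counting argument ``should'' work and that you ``expect'' two candidates. You also name the unresolved case yourself, namely distinct newly red parts $Q$ attached through distinct vertices of a large $P$, without ruling it out.

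Two smaller points. Your intermediate claim that completeness forces $|Q|$ to be small is false: $P=\{(u,t)\}$ and $Q=\{(u,t+1)\}\cup(N_{E_t}(u)\times\{t\})$ form a complete pair with mixed relation types, and $Q$ is unbounded. The fallback invariant is unnecessary, because black versus red at step $i$ depends only on the current partition (complete versus mixed). Moreover, ``one newly red neighbour per time direction'' is not well defined for parts spanning several snapshots.

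\textbf{What closes it: a structural lemma on complete pairs crossed by a temporal edge.} Suppose $(u,t)\in P$, $(u,t+1)\in Q$, and $P,Q$ are complete in $\expG{\Gset}$. Then $Q\subseteq\{(u,t-1),(u,t+1)\}\cup(N_{E_t}(u)\times\{t\})$ and $P\subseteq\{(u,t),(u,t+2)\}\cup(N_{E_{t+1}}(u)\times\{t+1\})$. Checking adjacency between any $p\in P\setminus\{(u,t)\}$ and $q\in Q\setminus\{(u,t+1)\}$ leaves only $p=(v,t+1)$ and $q=(v,t)$ for one common $v$. Hence either one side is a singleton, or $\{P,Q\}=\{\{(u,t),(v,t+1)\},\{(u,t+1),(v,t)\}\}$ with $uv\in E_t\cap E_{t+1}$ (the ``square'').

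The count then follows in two cases.
\begin{itemize}
\item If $P=\{x\}$, every newly red $Q$ contains one of the at most two temporal neighbours of $x$. Since parts are disjoint, there are at most two such $Q$.
\item If $|P|\geq 2$, every newly red $Q$ is either the square partner, or a singleton $\{y\}$ with $y$ complete to $P$ and joined to $P$ by a temporal edge. Fix one such $y_1=(a,s)$, so $P\subseteq N(y_1)$.
\begin{itemize}
\item If $P$ contains both $(a,s-1)$ and $(a,s+1)$, their only common neighbour is $(a,s)$, so $y_1$ is the unique candidate.
\item Otherwise $P$ contains, say, $(a,s+1)$ and some $(c,s)$ with $c\neq a$. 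Every candidate $y$ is a common neighbour of these two vertices, hence lies in $\{(a,s),(c,s+1)\}$. The square partner is exactly this pair, so when it is a part it excludes both singletons.
\end{itemize}
\end{itemize}

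The paper instead assumes three distinct $w_1,w_2,w_3$ and derives a contradiction by case analysis on the witnesses in $u(\Gset)$. Either route works, but one of them has to be carried out for your proof to be complete.
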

\begin{figure}
    \centering 
      \begin{tikzpicture}[
        arrow/.style={->, dashed, shorten >=3pt, shorten <=3pt, >=stealth},
        element/.style={inner sep=1pt, font=\footnotesize},
        container/.style={draw, circle, fill=white, minimum size=1.8cm}, 
        thick_edge/.style={thick},
        container_edge/.style={red, ultra thick, opacity=0.5},
        node distance=0.5cm
    ]

        \node[container] (U) at (0,0) {};
        \node[left=0.2cm of U, font=\small] {$u$}; 
        
        \node[element] (u1_t1) at (0, 0.25) {$(u_1, t_1)$};
        \node[element] (u2_t1_minus) at (0, -0.25) {$(u_2, t_1-1)$};

        \node[container] (W1) at (5.0, 2.2) {};
        \node[right=0.2cm of W1, font=\small] {$w_1$};
        \node[element] (u1_t1_minus) at (5.0, 2.2) {$(u_1, t_1-1)$};

        \node[container] (W2) at (5.0, 0) {};
        \node[right=0.2cm of W2, font=\small] {$w_2$};
        \node[element] (u2_t1) at (5.0, 0) {$(u_2, t_1)$};

        \node[container] (W3) at (5.0, -2.2) {};
        \node[right=0.2cm of W3, font=\small] {$w_3$};
        \node[element] (u1_t1_plus) at (5.0, -2.2) {$(u_1, t_1+1)$};

        \begin{scope}[on background layer]
            \draw[container_edge] (U.center) to[bend right=15] (W3.center);
        \end{scope}

        
        \draw[arrow] (u1_t1_minus) -- (u1_t1);
        \draw[arrow] (u2_t1_minus) -- (u2_t1);
        
        \draw[arrow] (u1_t1) to[bend right=10] (u1_t1_plus);

        \draw[thick_edge] (u1_t1_minus) -- (u2_t1_minus);
        \draw[thick_edge] (u1_t1) -- (u2_t1);

    \end{tikzpicture}
   \caption{The nodes $u(\Gset)$, $w_1(\Gset)$, $w_2(\Gset)$ and $w_3(\Gset)$. Thick black edges show the relation $\adj$, and dashed arrows the relation $\suc$. Red edge shows
   non homogeneity in the contraction sequence of $\expG{\Gset}$.} \label{fig:twinwidth}
    \end{figure}
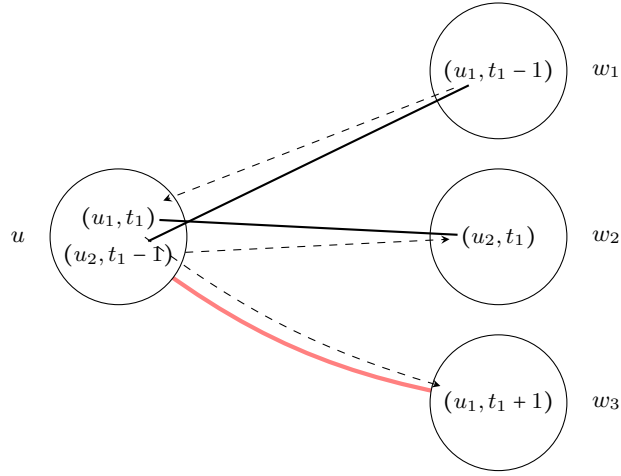

\begin{proof}
Let $d$ be a natural number, and $G_n=(V_n, E_n, R_n)$,\dots, $G_1=(V_1, E_1, R_1)$ a $d$-contraction sequence for $\expG{\Gset}$. We define $G'_n=(V_n, {E'}_n^{\adj}, {E'}_n^{\suc}, R'_n),\dots, G'_1=(V_1,{E'}_1^{\adj}, {E'}_1^{\suc}, R'_1)$ the contraction sequence for $\graphstruct{\Gset}$ following the same contractions than $G_n,\dots, G_1$.
It is clear that $|\{v\in V_i\mid xv\in R_i\}|\leq |\{v\in V_i\mid xv\in R'_i\}|$. 
We show that, for all $1\leq i\leq n$, for all $x\in V_i$, $|\{v\in V_i\mid xv\in R'_i\}|\leq |\{v\in V_i\mid xv\in R_i\}|+2$, i.e., no more than two edges incident to $x$ are black in $G_i$ and red in $G'_i$. 
For $v\in V_i$, we let $v(\Gset)$ denote the set of vertices of $V$ that are contracted in $v$ in $G_i$ (or in $G'_i$). 

We first observe that, if $vw\in E_i$ and $vw\in R'_i$, then there is $(u,t)\in v(\Gset)$ and $(u,t')\in w(\Gset)$ with $t'\in \{t-1,t+1\}$. Indeed, if $vw\in E_i$, then either there is no edge between $v(\Gset)$
and $w(\Gset)$, but in that case $vw\notin R'_i$, or every vertex of $v(\Gset)$ is linked to a vertex of $w(\Gset)$ in $\expG{\Gset}$. In that case, if there is no $(u,t)\in v(\Gset)$ and $(u,t')\in w(\Gset)$ with $t'\in \{t-1,t+1\}$, it implies that there is $0\leq t\leq \tau-1$ such that $v(\Gset)\subseteq V\times \{t\}$ and $w(\Gset)\subseteq V\times\{t\}$, i.e. all the time vertices in $v(\Gset)$ and $w(\Gset)$
belong to the same snapshot (otherwise, since two time vertices in two snapshots that are not successive are never linked, we would have vertices in $v(\Gset)$ not linked to some vertices of $w(\Gset)$).
If all the vertices in $v(\Gset)$ and $w(\Gset)$ are in the same snapshots, and are all linked together, they are all linked by the relation $\adj$, and $vw\in {E'}_i^{\adj}$, contradicting the fact that $vw\in R'_i$. 

Now let $u\in V_i$ and suppose there exists $w_1, w_2, w_3\in V_i$ all distinct and such that $uw_1, uw_2, uw_3\in E_i$ and $uw_1, uw_2, uw_3\in R'_i$. Then, following the previous observation,
there exist $(u_1,t_1), (u_2,t_2), (u_3,t_3)\in u(\Gset)$ and $(u_1,t'_1), (u_2,t'_2), (u_3, t'_3)\in w(\Gset)$ with $t'_i\in\{t_1-1, t_i+1\}$ for $i\in \{1,2,3\}$. 

\begin{itemize}
\item If $(u_1,t_1)=(u_2,t_2)=(u_3,t_3)$. Suppose, wlog, that $(u_1,t_1-1)\in w_1(\Gset)$ (by the observation above). Then, $(u_1, t_1+1)\in w_2(\Gset)$, because we know that $(u_1, t'_1)\in w_2(\Gset)$, and $(u_1, t_1-1)\in w_1(\Gset)$ (by construction, $w_1(\Gset)\cap w_2(\Gset)=\emptyset$). In that case, we should have $(u_1, t_1-1)\in w_3(\Gset)$ or $(u_1, t_1+1)\in w_3(\Gset)$,
but since $w_1, w_2$ and $w_3$ are all distinct, none of these possibilities is possible. Hence, we reach a contradiction, and $(u_1,t_1), (u_2,t_2)$ and $(u_3,t_3)$ are not all equal.

\item Assume then, wlog, that $(u_1,t_1)\neq (u_2,t_2)$.

\begin{itemize}
\item If $t_1=t_2$, then $u_1\neq u_2$, since $(u_1, t')\in w_1(\Gset)$ with $t'\in\{t_1-1, t_1+1\}$, suppose wlog that $(u_1, t_1-1)\in w_1(\Gset)$. Then neither $((u_2,t_2),(u_1, t_1-1))$
nor $((u_1,t_1-1),(u_2, t_2))$ belong to
$\adjE\cup\tempE$ because $(u_2,t_2)$ and $(u_1, t_1-1)$ being in two different snapshots, they cannot be adjacent, and $u_1\neq u_2$ implies that
$(u_2,t_2)$ is not a time successor of $(u_1,t_1-1)$. Then there is a vertex in $u(\Gset)$ and another in $w_1(\Gset)$ that are not linked in $\expG{\Gset}$, a contradiction with $uw_1\in E_i$. 

\item If $u_1=u_2$ (then $t_1\neq t_2$). Again, suppose wlog that $(u_1, t_1-1)\in w_1(\Gset)$. Since $uw_1\in E_i$, then there is an edge from $(u_2,t_2)=(u_1, t_2)$ to $(u_1, t_1-1)$. As we know that
$t_2\neq t_1$, either $t_2=t_1-1$, but it is impossible since $(u_1, t_1-1)$ is in $w_1(\Gset)$, it cannot be in $u(\Gset)$, or $t_2=t_1-2$ and $(u_2,t_2)=(u_1, t_1-2)$ and $(u_1, t_1-1)$ are linked by 
$\tempE$ in $\expG{\Gset}$.  Then, $(u_2, t'_2)\in w_2(\Gset)$, with $t'_2\in\{t_2-1, t_2+1\}=\{t_1-3, t_1-1\}$. But this again leads to a contradiction, because if $(u_2, t_1-3)\in w_2(\Gset)$, $(u_1,t_1)$
and $(u_2, t_1-3)$ are not linked in $\expG{\Gset}$ and this contradicts $uw_2\in E_i$, and if $(u_2, t_1-1)\in w_2(\Gset)$, since $u_2=u_1$, we have $(u_1, t_1-1)\in w_1(\Gset)\cap w_2(\Gset)$, which is impossible.  

\item If $u_1\neq u_2$ and $t_1\neq t_2$, we suppose that $(u_1, t_1-1)\in w_1(\Gset)$. If $t_2\neq t_1-1$, then there is no edge between $(u_1, t_1-1)$ and $(u_2, t_2)$, hence $uw_1\notin E_i$. 
Then $t_2=t_1-1$.  Now, $(u_2, t'_2)\in w_2(\Gset)$, with $t'_2\in \{t_2-1, t_2+1\}$. If $t'_2=t_2-1$, we have that $(u_2, t_1-2)\in w_2(\Gset)$. But this implies that there is no edge between 
$(u_1,t_1)$ and $(u_2, t_1-2)$ in $\expG{\Gset}$, a contradiction with the assumption that $uw_2\in E_i$. Then $t'_2=t_2+1=t_1$ and $(u_2, t_1)\in w_2(\Gset)$. 
\begin{itemize}
\item If $(u_3,t_3)\neq (u_2,t_2)$ and 
$(u_3,t_3)\neq (u_1, t_1)$, 
if $u_3\notin\{u_1,u_2\}$, it cannot be linked in $\expG{\Gset}$ both with $(u_1, t_1-1)$ and $(u_2, t_1)$ because these two vertices are in two different snapshots, and if $u_3\notin\{u_1,u_2\}$
the only possible edge between $(u_3,t_3)$ and $(u_1, t_1-1)$ and $(u_2, t_1)$ are adjacent edges in the same snapshot. This contradicts then the assumption that $uw_1$ and $uw_2$ are in $E_i$.
Then $u_3\in\{u_1,u_2\}$ and $t_3\in \{t_1, t_1-1\}$, because it is the only possibility for $(u_3,t_3)$ to be linked to both $(u_1, t_1-1)$ and $(u_2, t_2+1)=(u_2,t_1)$ in $\expG{\Gset}$. A case analysis
show that there is no possibility for both $u_3\in\{u_1, u_2\}$ and $t_3\in \{t_1, t_1-1\}$. 
\item Then $(u_3,t_3)=(u_1,t_1)$ or $(u_3,t_3)=(u_2,t_2)$. If $(u_3,t_3)=(u_1,t_1)$, then, by the observation made at the beginning of the proof, $(u_1, t_1+1)\in w_3(\Gset)$. But then 
$(u_2,t_2)=(u_2,t_1-1)$ is not linked to $(u_1, t_1+1)$ in $\expG{\Gset}$, which means that $uw_3\in R_i$, contradicting our assumption. This is illustrated on Figure~\ref{fig:twinwidth}. If $(u_3,t_3)=(u_2,t_2)$, similarly, $(u_2, t_2-1)=(u_2, t_1-2)\in 
w_3(\Gset)$ and $(u_1, t_1)$ could not be linked with $(u_2, t_1-2)$, hence $uw_3\notin E_i$.
\end{itemize}
\end{itemize} 
\end{itemize}
Each case yielded a contradiction, hence we can conclude that one cannot find $w_1,w_2,  w_3$, three distinct vertices in $V_i$ such that $uw_1, uw_2$ and $uw_3\in E_i$, and $uw_1, uw_2$ and
$uw_3\in R'_i$. Hence, any $d$-contraction sequence in $\expG{\Gset}$ is a $(d+2)$-contraction sequence in $\graphstruct{\Gset}$. 
\end{proof}

\begin{proof}[Proof of Theorem~\ref{th:amt-tww}]
Lemma~\ref{lem:fo-fodelta} allows to turn $\varphi$ into $\varphi'$, a formula over the signature $\{\suc,\adj\}$. Theorem~\ref{th:tww-structure} allows to 
turn the $d$-contraction sequence of $\expG{\Gset}$ into a $(d+2)$-contraction sequence of $\graphstruct{\Gset}$. Now, one can decide if $\graphstruct{\Gset}\models\varphi'$ in $O(\tau\cdot |V|\cdot g(|\varphi'|,d+2))$ for some computable function $g$, thanks to~\cite{BKTW22}(Theorem 7.1), which means that one
can decide if $\graphstruct{\Gset}_\Delta\models\varphi$
in $O(||\Gset|| \cdot f(|\varphi|, d))$, with $f$ a computable function that depends on $|\varphi|$ and $d$. 
\end{proof}

\subsection{Derivative of Temporal Graphs}
\label{subsection.derivative}
We have shown for $\MSOone$ and $\FOone^\Delta$ that they are tractable on input a temporal graph $\Gset$ of bounded expanded tree-width $\tw_\rightarrow(\Gset)=\tw(\Gset_\rightarrow)$ or bounded expanded twin-width $\twww_\rightarrow(\Gset)=\twww(\Gset_\rightarrow)$, respectively.
This is in contrast to many other versions of temporal width.
While their value is still in $O(n)$ (precisely, at most $2n$) the expanded width parameters are defined over $\Gset_\rightarrow$, a very large graph.
In particular, it is very easy to result in a grid as minor of $\Gset_\rightarrow$, as long as the snapshot graphs contain long paths.
Striving to control this kind of situation, we focus on smaller slices of $\Gset_\rightarrow$, called differentials, and examine their width parameters instead.
Doing so, we bound the potential grid-like structure, while maintaining relevant information on the evolution of the structure of the graph with time. 
\begin{definition}[Differential]
\emph{Let $\mathcal G=(G_t)_{0\leq t<\tau}$ be a temporal graph, $dt=\Delta$ an integer and $t$ a time instant when $0\leq t\leq\tau-\Delta$.
The \textit{differential of $\mathcal G$ at $t$ over $dt=\Delta$} is defined as $\mathcal G^{t,\Delta}_\rightarrow$ where $\mathcal G^{t,\Delta}=(G_{x})_{t\leq x\leq t+\Delta-1}$.
It is precisely the static expansion graph of the $\Delta$ snapshots of $\mathcal G$ starting at $t$ and ending at $t+\Delta-1$.}
\end{definition}

For temporal graph $\mathcal G$ we define $\dtw_\Delta(\mathcal G)$  and $\dtww_\Delta(\mathcal G)$, respectively the \textit{$\Delta$-differential tree-width  and $\Delta$-differential twin-width of $\mathcal G$ over $dt=\Delta$}, as the maximum tree-width, respectively maximum twin-width, of the differentials of $\mathcal G$ at any time $t$ over $dt=\Delta$.

We describe in the sequel the Gaifman Locality property and use it to prove our last meta-theorems.
Recall that the \emph{Gaifman graph} of a temporal graph $\Gset=(V,E_0,\dots, E_{\tau-1})$ with respect to $\FOone^\Delta$ is
the graph $\Gaif(\graphstruct{\Gset}_\Delta)=(V\times \llbracket 0, \tau -1\rrbracket, E_\Gset)$ with
 $E_\Gset=\{(u,t)(v,t)\mid uv\in E_t\}\cup\{(u,t)(u,t+1)\mid u\in V, 0\leq t <\tau - 1\}\cup
\{(u,t)(u,t+\delta)\mid u\in V, 0\leq t < \tau-\delta,1\leq\delta\leq\Delta\}$.

Given $v,v'\in V_\Gset$, we let $\dist^\Gset(v,v')$ be the shortest path distance from $v$ to $v'$ in $\Gaif(\graphstruct{\Gset}_\Delta)$ if such a path exists, and $\dist^\Gset(v,v')=\infty$ otherwise.
Let $r\in\mathbb{N}$ be an integer and $a\in V\times\llbracket 0, \tau-1\rrbracket$ a time vertex.
The $r$-neighbourhood of $a$ is defined as $\neigh{r}(a)=\{b\in V\times\llbracket0,\tau-1\rrbracket\mid \dist^{\Gset}(a,b)\leq r\}$.
It contains time vertices in a ball of radius $r$ around $a$.
Figure~\ref{fig:neighbourood} exemplifies $r$-neighbourhood.
Its formalism allows to define an $r$-local formula from $\FOone^\Delta$: it is a formula whose truth value depends only on the substructure induced by the $r$-neighbourhood around $a$.
\begin{definition}[$r$-local formula]
\emph{A formula $\varphi(x)$  is \emph{$r$-local} if for every
structure $\Aset$ over a domain $D$ and $a \in D$,
$\Aset \models_{\nu_0[x\leftarrow a]} \varphi(x) \quad \text{iff} \quad \struct{\neigh{r}(a)}^{\Aset} \models_{\nu_0[x\leftarrow a]} \varphi(x)$.}
\end{definition}
\begin{definition}[basic $r$-local sentence]
\emph{A \emph{basic $r$-local sentence} is a first-order formula of the following form:
$\exists x_1\exists x_2 \ldots \exists x_k.
\displaystyle\bigwedge_{1 \leq i < j \leq k} \delta_{2r}(x_i, x_j) \land
\displaystyle\bigwedge_{i = 1}^k \vartheta(x_i)$,
where for all $i$, $\vartheta(x_i)$ is a first-order $r$-local formula
and $\delta_{2r}(x,y)$ is a first-order formula s.t. $\Aset\models_\nu \delta_{2r}(x,y)$
if and only if the distance between $\nu(x)$ and $\nu(y)$ in the Gaifman graph of $\Aset$
is greater than $2r$.}
\end{definition}
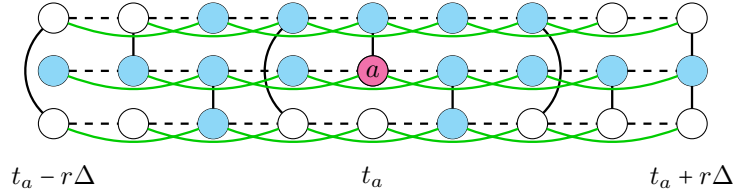
\begin{figure}
    \centering
    \def\hsep{1.5}
    \def\vsep{1}
    \begin{tikzpicture}[scale=0.7, dot/.style={draw,circle,minimum size=0.4cm,inner sep=0pt,outer sep=0pt}, voisin/.style={fill=cyan!40,circle,minimum size=0.4cm,inner sep=0pt,outer sep=0pt}]
        \path \foreach \i in {0,...,2}
				\foreach \j in {0,...,8}
                        {coordinate [dot] (u\i\j) at (\hsep*\j,\vsep*-\i)};

        \node[draw,circle,minimum size=0.4cm,inner sep=0pt,outer sep=0pt,fill=magenta!70]
            (v14) at (\hsep*4, \vsep*-1) {$a$};
        

        \node[voisin] (v10) at (\hsep*0, -\vsep*1) {};

        \node[voisin] (v11) at (\hsep*1, -\vsep*1) {};

        \node[voisin] (v02) at (\hsep*2, -\vsep*0) {};
        \node[voisin] (v12) at (\hsep*2, -\vsep*1) {};
        \node[voisin] (v22) at (\hsep*2, -\vsep*2) {};

        \node[voisin] (v03) at (\hsep*3, -\vsep*0) {};
        \node[voisin] (v13) at (\hsep*3, -\vsep*1) {};

        \node[voisin] (v04) at (\hsep*4, -\vsep*0) {};

        \node[voisin] (v05) at (\hsep*5, -\vsep*0) {};
        \node[voisin] (v15) at (\hsep*5, -\vsep*1) {};
        \node[voisin] (v25) at (\hsep*5, -\vsep*2) {};

        \node[voisin] (v06) at (\hsep*6, -\vsep*0) {};
        \node[voisin] (v16) at (\hsep*6, -\vsep*1) {};

        \node[voisin] (v17) at (\hsep*7, -\vsep*1) {};

        \node[voisin] (v18) at (\hsep*8, -\vsep*1) {};

		\draw[line width=0.2ex, black]
        
        (u00) to[bend right=45] (u20)
		(u01)--(u11)
		(u14) -- (u04) 
        (u12) -- (u22)
        (u23) to[bend left=45] (u03)
        
        (u17)--(u27)
        (u15)--(u25)

        (u06) to[bend left=45] (u26)
        
        (u08) -- (u18) -- (u28);    


        \draw[dashed, line width=0.2ex]
    
        \foreach \i in {0, 1, 2}
        \foreach \j in {0,...,7}
        {(u\i\j)--(u\i\the\numexpr\j+1)};

        \draw[green!80!black, line width=0.2ex]
            \foreach \i in {0, 1, 2}
                \foreach \j in {0,...,6}
                    {(u\i\j) to[bend right=20] (u\i\the\numexpr\j+2)};

        \node (tinf) at (0, \vsep*-3) {$t_a - r \Delta$};
        \node (tmax) at (12.0, \vsep*-3) {$t_a + r \Delta$};
        \node (ts) at (6.0, \vsep*-3) {$t_a$};
	\end{tikzpicture}
    \caption{(blue vertices) The $r$-neighbourhood $\neigh{r}(a)$ of $a$, for $\Delta = 2$ and $r = 2$;
    (black edges) The edges of the original graph $\Gset$;
    (dashed edges) The edges which represent $\suc$;
    (green edges) The edges which represent $\simd_\Delta$. In $G^r(a)$, green edges are not considered.}
    \label{fig:neighbourood}
\end{figure}

We will make use of the Gaifman locality theorem on first order logic:
\begin{theorem}[Gaifman Locality Theorem, \cite{G82,L04}]\label{th:gaifman}
Every first-order formula $\varphi(x)$ over a relational signature $\sigma$ is equivalent to a boolean combination of $r$-local formulas $\psi(x)$, and basic $r$-local sentences, with $r\leq 7^{\qr(\varphi)}$. Moreover, the transformation is effective. 
\end{theorem}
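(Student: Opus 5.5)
This is the classical Gaifman Locality Theorem, cited from \cite{G82,L04}. I would not reprove it in the paper; I would recall the standard argument via Ehrenfeucht--Fra\"iss\'e games, which goes as follows.

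\textbf{Step 1: finitely many local types.} Fix the quantifier rank $q=\qr(\varphi)$. Up to logical equivalence there are only finitely many first-order formulas of rank at most $q$ over the finite relational signature $\sigma$. In particular, for each radius $r$ there are finitely many $r$-local rank-$q$ types $\psi(x)$, obtained by relativising every quantifier to the $r$-ball around $x$. For each such type and each $k\le q$ I would write the basic $r$-local sentence saying ``there are at least $k$ elements, pairwise at distance $>2r$, each satisfying this local type''. Distance at most $2r$ is first-order definable from the Gaifman relation, so these sentences are genuine first-order sentences.

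\textbf{Step 2: the key combinatorial claim.} I would show that if two pairs $(\Aset,a)$ and $(\mathcal B,b)$ satisfy the same $r$-local formulas of bounded rank at $a$ and $b$, and the same basic $r$-local sentences of the above form, then Duplicator wins the $q$-round Ehrenfeucht--Fra\"iss\'e game on $(\Aset,a)$ versus $(\mathcal B,b)$. This works with radii shrinking geometrically round by round, which is where $7^{q}$ comes from.

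The invariant maintained is that after $i$ rounds the $r_i$-neighbourhoods of the chosen tuples are isomorphic up to rank $q-i$. When Spoiler picks a new element, there are two cases. If it lies close to the already chosen tuple, Duplicator answers inside the corresponding neighbourhood, using local equivalence. If it lies far away, Duplicator uses the counting information from the basic local sentences to find a matching far element of the same local type in the other structure. Counting only up to $q$ suffices because at most $q$ elements are ever picked.

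\textbf{Step 3: conclude and make it effective.} By Step 2, $\varphi(x)$ is determined by the finite list of local types and basic sentences, so it is equivalent to a Boolean combination of them. Effectiveness holds because the candidate list is computable and the equivalence can be checked syntactically through the game characterisation; alternatively, one can cite the constructive proof of \cite{L04}.

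The main obstacle is the bookkeeping of the shrinking radii in the far-element case of Step 2. I would simply defer to \cite{G82,L04} for it.
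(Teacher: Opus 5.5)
Citing \cite{G82,L04} is exactly what the paper does: the theorem is quoted there, not proved, so as a plan your proposal agrees with the paper. The sketch you attach, however, has a genuine gap in the far case of Step~2, and this is precisely the case that produces the constant $7$.

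Let the chosen tuples be $\bar a,\bar b$ of length $m$ (counting the free variable) and let the next radius be $\rho$. Suppose Spoiler picks $c$ with $\dist(c,\bar a)>2\rho+1$, and let $\tau$ be its $\rho$-local type. Counting realizations of $\tau$ that are pairwise more than $4\rho+2$ apart only settles the move when $\mathcal B$ has at least $m+1$ of them, since then one of them is far from $\bar b$. If both structures have the same number $e\le m$, counting says nothing about where those realizations sit relative to the tuple.

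For instance, take a long path with a single $P$-vertex $p$, and let $a,b$ be the vertices at distance $2$ and $1$ from $p$. They have the same radius-$0$ type, and the structure trivially agrees with itself on every basic local sentence. Yet only from $a$ is there a $P$-element at distance $>1$. What separates $a$ from $b$ is their type at radius $2$.

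So Duplicator must use the local type of the \emph{already chosen tuple} at a larger radius:
\begin{enumerate}
\item Suppose every realization in $\mathcal B$ lay within $2\rho+1$ of $\bar b$.
\item Then ``there are $e$ pairwise far realizations within $2\rho+1$ of the tuple'' is a local property of $\bar b$, so it also holds at $\bar a$.
\item Maximality of $e$ then forces $\dist(c,\bar a)\le 6\rho+3$.
\item The $(7\rho+3)$-local statement ``some realization of $\tau$ lies at distance in $(2\rho+1,6\rho+3]$ from the tuple'' holds at $\bar a$, hence transfers to $\bar b$. This is a contradiction.
\end{enumerate}
This step requires $r_i\ge 7r_{i+1}+3$, and with $r_q=0$ it gives $r_0=(7^q-1)/2\le 7^q$. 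The near case and the many-realization case only need $r_i\ge 3r_{i+1}+1$. The factor $7$ is therefore exactly the trace of the subcase your description omits.

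Two smaller points.
\begin{itemize}
\item Since $x$ is free, the tuple before the last round already has $q$ elements. The basic local sentences must therefore count up to $q+1$, not $q$.
\item The Ehrenfeucht--Fra\"iss\'e characterisation is semantic and gives no syntactic test of equivalence, so it cannot justify effectiveness. Effectiveness comes from one of two sources. One is Gaifman's syntactic induction on $\varphi$. The other is to enumerate, in parallel, the finitely many candidate Boolean combinations $\beta$ and proofs of $\forall x\,(\varphi\leftrightarrow\beta)$. This search terminates by the completeness theorem: the game argument works over arbitrary structures, so it guarantees that for some $\beta$ the equivalence is valid, and valid sentences are recursively enumerable.
\end{itemize}
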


This locality property of first-order logic allows us to consider a local substructure of the original temporal graph. Given $\Gset=(V,E_0,\dots, E_{\tau-1})$ and $a=(v,t)$ with $v\in V$ and $0\leq t\leq \tau-1$, we define
$\localstatic{r}(a)=\struct{\neigh{r}(a)}^{\graphstruct{\Gset}}$, the substructure of $\graphstruct{\Gset}$ on the domain $\neigh{r}(a)$.

The following property about tree-width hints that we can use Courcelle's Theorem to obtain easily a meta-theorem for $\dtw_\Delta(\Gset)$. 
We aim at proving a stronger result, using twin-width instead.
If the temporal graph has a bounded $\Delta$-differential twin-width, for some $\Delta$,
given a temporal vertex $a$, the twin-width of
$\localstatic{r}(a)$, the substructure of $\graphstruct{\Gset}$ around the neighbourhood of $a$, is also strongly related to some differential twin-width of $\Gset$, as follows.
\begin{lemma}\label{lem:tw}
Given a temporal graph $\Gset=(V,E_0,\dots, E_{\tau-1})$, a time vertex $a\in V\times \llbracket 0, \tau-1\rrbracket$, and $r$ a strictly positive natural number, 
$\tw(\localstatic{r}(a))\leq \dtw_{2r\Delta+1}(\Gset)$, $\tww{\localstatic{r}(a)}\leq \dtww_{2r\Delta+1}(\Gset)+2$, and $|V(\localstatic{r}(a))|\leq |V| \cdot (2r\Delta+1)$.
\end{lemma}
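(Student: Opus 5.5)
The plan is to first confine the $r$-neighbourhood to a time window, and then to compare $\localstatic{r}(a)$ with the differential covering that window.

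\textbf{Time window.} Write $a=(v,t_a)$. Each edge of $\Gaif(\graphstruct{\Gset}_\Delta)$ joins time vertices whose time indices differ by at most $\Delta$. The local edges $\adjE$ differ by $0$, the $\suc$ edges by $1$, and the $\simd_\Delta$ edges by at most $\Delta$. By induction on the distance, every $b\in\neigh{r}(a)$ therefore satisfies $|t_b-t_a|\leq r\Delta$. So $\neigh{r}(a)\subseteq V\times I$, where $I=\llbracket t_a-r\Delta,\,t_a+r\Delta\rrbracket\cap\llbracket0,\tau-1\rrbracket$ has at most $2r\Delta+1$ elements. This already gives the bound $|V(\localstatic{r}(a))|\leq |V|\cdot(2r\Delta+1)$.

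\textbf{Choosing the differential.} Next I choose a start time $t_0$ with $0\leq t_0\leq \tau-(2r\Delta+1)$ such that $I\subseteq\llbracket t_0,t_0+2r\Delta\rrbracket$. To do this, shift the window so that it stays inside $\llbracket 0,\tau-1\rrbracket$. If $\tau<2r\Delta+1$ there is no differential of that size; in that degenerate case I will interpret $\dtw_{2r\Delta+1}$ and $\dtww_{2r\Delta+1}$ as the widths of $\Gset_\rightarrow$ itself, and note this convention.

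Let $H=\mathcal G^{t_0,2r\Delta+1}_\rightarrow$. The substructure $\localstatic{r}(a)$ lives over the signature $\{\adj,\suc\}$, since the green $\simd_\Delta$ edges are not considered. Its underlying Gaifman graph is exactly the subgraph of $\Gset_\rightarrow$ induced by $\neigh{r}(a)$: local edges inside one snapshot and temporal edges between consecutive snapshots. Because $\neigh{r}(a)\subseteq V\times\llbracket t_0,t_0+2r\Delta\rrbracket$, this is also the subgraph of $H$ induced by $\neigh{r}(a)$.

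\textbf{Tree-width.} Tree-width is monotone under taking subgraphs, so
\[\tw(\localstatic{r}(a))\leq\tw(H)\leq\dtw_{2r\Delta+1}(\Gset).\]

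\textbf{Twin-width.} Here I argue in two steps.
\begin{enumerate}
\item Apply Theorem~\ref{th:tww-structure} to the temporal graph $\mathcal G^{t_0,2r\Delta+1}$. It says that $\twww$ of its $\{\adj,\suc\}$-structure is at most $\twww(H)+2$, which is at most $\dtww_{2r\Delta+1}(\Gset)+2$.
\item The structure $\localstatic{r}(a)$ is an induced substructure of that $\{\adj,\suc\}$-structure. Twin-width does not increase under induced substructures: restrict a contraction sequence to the chosen vertices, skip the contractions that involve removed vertices, and observe that red degrees can only drop. Hence $\tww{\localstatic{r}(a)}\leq \dtww_{2r\Delta+1}(\Gset)+2$.
\end{enumerate}

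\textbf{Main obstacle.} The delicate step is point 2: heredity of twin-width for these two-relation structures with the paper's definition of red edges. It requires checking that, in the restricted sequence, a pair that is homogeneous in the big structure remains homogeneous with the same relation. This holds because restriction only removes vertices from the parts. The boundary case of small $\tau$ also needs the convention fixed above.
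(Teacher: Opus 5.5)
Your proof is correct and follows the paper's route. You bound the time span of $\neigh{r}(a)$ by $r\Delta$ on each side. You then view $\localstatic{r}(a)$ as an induced substructure of the $\{\adj,\suc\}$-structure of a differential over $dt=2r\Delta+1$, and conclude by monotonicity of tree-width and heredity of twin-width combined with Theorem~\ref{th:tww-structure}.

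Two of your steps go beyond the paper. First, you shift the window into $\llbracket 0,\tau-1\rrbracket$ and fix a convention for $\tau<2r\Delta+1$; the paper simply uses the differential at $t_a-r\Delta$, which need not exist near the boundaries. Second, you justify twin-width heredity for these two-relation structures, which the paper takes for granted.
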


\begin{proof}
We start by proving that $|V(\localstatic{r}(a))|\leq |V| \cdot (2r\Delta+1)$. By definition, the nodes in $V(\localstatic{r}(a))$ are those in $\neigh{r}(a)\subseteq V\times\llbracket 0, \tau-1\rrbracket$. So obviously, $|V(\localstatic{r}(a))|\leq |V| \cdot \tau$. However, for all $b\in\neigh{r}(a)$, it must be that $\dist^\Gset(a,b)\leq r$. Then, the furthest snapshot to which $b$ can belong
to can be reached by a sequence of $r$ $\simd_\Delta$ edges in the Gaifman graph. If $a=(v,t)$, it means that $b = (v, t + \delta)$ with $-r\Delta \le \delta \le r\Delta$
Then, if $a=(v,t)$, 
$\neigh{r}(a)\subseteq V\times \llbracket t-r\Delta, t+r\Delta\rrbracket$, hence $|V(\localstatic{r}(a))|\leq |V| \cdot (2r\Delta+1)$.

Consider now $\Gset^{t_a - r \Delta,2r\Delta + 1}_\rightarrow$, the differential of $\Gset$ at $t_a-r\Delta$. It follows from the definitions that $\Gaif(\localstatic{r}(a))$ is an induced subgraph of 
$\Gset^{t_a - r \Delta,2r\Delta + 1}_\rightarrow$, hence $\tw(\localstatic{r}(a))\leq \tw(\Gset^{t_a - r \Delta,2r\Delta + 1}_\rightarrow)$. It follows that $\tw(\localstatic{r}(a)) \leq \dtw_{2r\Delta +1}(\Gset)$. Moreover, $\localstatic{r}(a)$ is an induced substructure of $\graphstruct{\Gset^{t_a - r \Delta,2r\Delta + 1}}$, 
hence $\tww{\localstatic{r}(a)}\leq \tww{\graphstruct{\Gset^{t_a - r \Delta,2r\Delta + 1}}}\leq \tww{\Gset^{t_a - r \Delta,2r\Delta + 1}}+2\leq \dtww_{2r\Delta+1}(\Gset)+2$, by Theorem~\ref{th:tww-structure}.
\end{proof}

For each $\varphi\in \FOone^\Delta$, we let $\Delta_\varphi=2r\Delta+ 1$, with $r$ the integer coming from
Gaifman Locality Theorem (Theorem~\ref{th:gaifman}).

\begin{theorem}\label{th:fot1ttwww}
Given a $\FOone^\Delta$ formula $\varphi$, a temporal graph $\Gset$ such that there is some $d$ such that
$\dtww_{\Delta_\varphi}(\Gset)\leq d$ and given, for each differential $\mathcal G^{t,\Delta_\varphi}_\rightarrow$, a $d$-contraction sequence, 
we can decide if
	$\graphstruct{\Gset}_\Delta \models \varphi$ in
	$O(|\Gset|^3\cdot f(\Delta, d,
	|\varphi|))$, with $f$ a computable
	function only depending on $\Delta,	|\varphi| \text{ and } \allowbreak d$.
	\end{theorem}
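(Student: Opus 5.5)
We follow the standard Gaifman-locality scheme for model checking on locally well-behaved classes, using the differentials as the local pieces. First, by Theorem~\ref{th:gaifman}, we effectively transform $\varphi$ into an equivalent Boolean combination of basic $r$-local sentences, with $r\leq 7^{\qr(\varphi)}$. This transformation depends only on $\varphi$, so its cost is absorbed in $f$. Since $\varphi$ is a sentence, it suffices to evaluate each basic $r$-local sentence $\exists x_1\dots\exists x_k.\bigwedge_{i<j}\delta_{2r}(x_i,x_j)\wedge\bigwedge_i\vartheta(x_i)$ on $\graphstruct{\Gset}_\Delta$ and then combine the truth values. Distances are measured in $\Gaif(\graphstruct{\Gset}_\Delta)$, which includes the $\simd_\Delta$ edges, exactly as in the definition of $\neigh{r}(a)$.

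The first main step is to compute, for every time vertex $a=(v,t)$, whether $\graphstruct{\Gset}_\Delta\models\vartheta(a)$. By $r$-locality this equals $\struct{\neigh{r}(a)}\models\vartheta(a)$. We compute $\neigh{r}(a)$ by a BFS, which is polynomial. By Lemma~\ref{lem:tw}, $\neigh{r}(a)$ lies inside the differential $\Gset^{t-r\Delta,\Delta_\varphi}_\rightarrow$ (clipped at the boundaries $0$ and $\tau-1$). We restrict the given $d$-contraction sequence of that differential to the induced subgraph $\neigh{r}(a)$; restricting a sequence to an induced subgraph does not increase red degree. Theorem~\ref{th:tww-structure} then turns it into a $(d+2)$-contraction sequence of the $\{\suc,\adj\}$-structure. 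Using Lemma~\ref{lem:fo-fodelta}, we rewrite $\vartheta$ over $\{\suc,\adj\}$; $\simd_\Delta$ is definable from $\suc$ inside the window, and within the neighbourhood the needed $\suc$-chains remain present because the window is a full block of consecutive snapshots. We then apply the FO model-checking algorithm of~\cite{BKTW22} in time $g(|\vartheta|,d+2)\cdot|V|\cdot\Delta_\varphi$. Doing this for all $|V|\tau$ time vertices gives the set $P$ of vertices satisfying $\vartheta$ in time $O(|\Gset|^2\cdot f)$.

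The second step decides whether $P$ contains $k$ vertices that are pairwise at distance greater than $2r$. We use the classical greedy argument. Repeatedly pick a vertex of $P$ and delete its $2r$-ball, up to $k$ times. If we obtain $k$ picks, the answer is yes. Otherwise $P$ is covered by fewer than $k$ balls of radius $2r$. Each ball lies within a time window of width $4r\Delta+1$, so all candidates lie in at most $k$ such windows. The question then reduces to finding a $2r$-scattered $k$-set among these few candidate regions. We express that as a first-order query with $P$ as a colour, and evaluate it on the union of a bounded number of windows, each covered by $O(1)$ differentials whose contraction sequences are given. The BFS computations for balls cost $O(|\Gset|)$ each, over $O(|\Gset|)$ starting points, which fits the $O(|\Gset|^3)$ bound.

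The main obstacle is this last step. The union of a bounded number of windows has bounded width only if the windows can be glued. Disjoint windows can be handled by a disjoint union, where twin-width is the maximum. Overlapping windows need care: if two windows overlap, their union is a time interval of length at most $k(4r\Delta+1)$, which is not directly bounded by $\dtww_{\Delta_\varphi}$. Our first attempt will therefore avoid twin-width at this stage and argue combinatorially. Inside a cluster of overlapping windows we brute-force the choice of which cluster and which local $\vartheta$-witness types are used. Distances greater than $2r$ between points in the same cluster are checked by BFS over the $O(|\Gset|)$ candidates with recursion depth $k$. This is polynomial for fixed $k$, and $k$ is bounded by $|\varphi|$. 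If this fails, the fallback is to choose $r$ larger in the window definition so that each cluster fits into one differential.
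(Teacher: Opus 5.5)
Your first step matches the paper's argument. It consists of the Gaifman normal form, then deciding $\vartheta(a)$ for every time vertex $a$ by twin-width model checking on the differential around $a$. The contraction sequence is transported with Theorem~\ref{th:tww-structure} and $\simd_\Delta$ is eliminated with Lemma~\ref{lem:fo-fodelta}.

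One small repair is needed there. $\neigh{r}(a)$ need not contain the intermediate copies $(u,s)$ lying between two of its elements $(u,t_1),(u,t_2)$ with $|t_1-t_2|\le\Delta$, so the $\suc$-chain rewriting is unsound on $\neigh{r}(a)$ itself. Evaluate $\vartheta'$ on the whole window $\graphstruct{\Gset^{t_a-r\Delta,\Delta_\varphi}}$ instead; by $r$-locality this gives the same truth value.

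The genuine gap is the second step: deciding whether the set $P$ of $\vartheta$-witnesses contains $k$ elements pairwise at distance $>2r$.
\begin{itemize}
\item Greedy selection only certifies a positive answer.
\item In the remaining case your ``first attempt'' branches over candidates to depth $k$. A cluster may contain up to $k|V|(4r\Delta+1)$ candidates, and guessing witness types does not shrink this. So the cost is $|\Gset|^{\Theta(k)}$: polynomial for each fixed $\varphi$, but not $O(|\Gset|^3\cdot f(\Delta,d,|\varphi|))$ as claimed.
\item The fallback of enlarging $r$ is not allowed. The hypothesis bounds $\dtww_{\Delta_\varphi}$ only for the fixed length $\Delta_\varphi=2r\Delta+1$. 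It says nothing about the longer differentials on which an evaluation over the union of the $<k$ balls would have to run; a single $2r$-ball already spans $4r\Delta+1$ snapshots.
\end{itemize}
As written, you obtain an XP algorithm, not the stated bound.

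The paper's sketch does not supply this ingredient either. It dismisses the step as ``a classical graph search analysis in linear time''. Its tree-width analogue, Lemma~\ref{lem:local-sen}, only accounts $O(k\cdot(|V|\tau)^2\cdot|\Gset|)$ for pairwise BFS computations, which implicitly treats the choice of $k$ scattered witnesses as greedy.

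You were right to flag this. It is a $k$-independent-set problem in the distance-$\le 2r$ graph on $P$, and pairwise distances alone do not solve it. To close the gap you would need one of the following:
\begin{itemize}
\item an FPT procedure for this scattered-set subproblem that uses only differentials of length $\Delta_\varphi$; or
\item a hypothesis on differentials whose length grows with $k$. This mirrors the Frick--Grohe treatment of locally bounded tree-width, where the local width bound is invoked at a radius depending on $k$.
\end{itemize}
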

	
	\begin{proof}[Sketch of proof]
	We first decompose $\varphi$ in basic $r$-local sentences, with $r\leq 7^{\qr(\varphi)}$. We can decide whether $\graphstruct{\Gset}_\Delta$ is a model
	of each of this basic $r$-local sentence in the following way (detailed right below): for each $r$-local formula $\vartheta(x)$, for each $a=(v_a,t_a)\in V\times \llbracket 0, \tau-1\rrbracket$, we compute $\localstatic{r}(a)$. 
	By Lemma~\ref{lem:tw}, $\tww{\localstatic{r}(a)}\leq \dtww_{\Delta_\varphi}(\Gset)+2$.
	Moreover, Theorem~\ref{th:tww-structure} states that the $d$-contraction 
	sequence for $\Gset^{t_a - r \Delta,\Delta_\varphi}_\rightarrow$ gives a ($d$+2)-contraction sequence for $\graphstruct{\Gset^{t_a - r \Delta,\Delta_\varphi}}$,
	from which we can deduce a $d$+2-contraction sequence for its substructure $\localstatic{r}(a)$. Then, the transformation of $\vartheta$ into $\vartheta'$ via Lemma~\ref{lem:fo-fodelta} and Theorem 7.1 of~\cite{BKTW22}
	allows to conclude if $\localstatic{r}(a)\models\vartheta(a)$ in $O(||\localstatic{r}(a)||\cdot f_a(\Delta\cdot|\vartheta(a)|,d+2))$. It remains to find $k$ elements
	$a_1,\dots, a_k$ that are at distance more that $2r$ from each other to determine the truth value or the basic $r$-local sentence. It can be done through a 
	classical graph search analysis in linear time. Boolean combinations of the basic $r$-local formulas give the final answer.
	\end{proof}

Above Theorem~\ref{th:fot1ttwww} implies the same result when replacing twin-width with tree-width.
We also give an independent proof for the latter parameter, as follows.
In order to obtain an AMT for the $\Delta$-differential tree-width, we first show that the tree-width of $\localstatic{r}(a)$ is bounded by the $\Delta'$-differential tree-width of $\Gset$, for a $\Delta'$ we will
make explicit. We will then be able to use $\localstatic{r}(a)$ as input of the Model-Checking problem for MSO over graphs and apply Courcelle's theorem.

Lemma~\ref{lem:tw} paves the way to use Courcelle's theorem on $\localstatic{r}(a)$.

 \begin{lemma}\label{lem:local-sen}
	Given a basic $r$-local sentence in $\FOone^\Delta$,  $\varphi = \exists x_1\exists x_2 \ldots \exists x_k.~
\displaystyle\bigwedge_{1 \leq i < j \leq k} \delta_{2r}(x_i, x_j) \allowbreak \land
\displaystyle\bigwedge_{i = 1}^k \vartheta(x_i)$ and $\Gset = (V, E_0, \ldots, E_{\tau-1})$ a temporal graph,
we can decide if $\graphstruct{\Gset}_\Delta \models \varphi$ in $O(\allowbreak f(\Delta,\allowbreak \dtw_{2r\Delta + 1}(\Gset),
|\varphi|) \cdot |\Gset|^3)$, where $f$ is a calculable function
depending only on $\Delta$, $\dtw_{2r \Delta + 1}(\Gset)$ and $\varphi$.
\end{lemma}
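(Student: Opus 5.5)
The plan has two phases. First, for every time vertex $a=(v_a,t_a)$ we decide locally whether $\graphstruct{\Gset}_\Delta\models\vartheta(a)$. Second, we search for $k$ pairwise far-apart witnesses among the time vertices that pass the local test.

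\textbf{Local phase.} Fix a time vertex $a$. Since $\vartheta$ is $r$-local, $\graphstruct{\Gset}_\Delta\models\vartheta(a)$ holds iff the substructure of $\graphstruct{\Gset}_\Delta$ induced by $\neigh{r}(a)$ satisfies $\vartheta(a)$. By Lemma~\ref{lem:fo-fodelta}, $\vartheta$ can be replaced by a formula $\vartheta'$ over $\{\adj,\suc\}$ of size at most $|\vartheta|\cdot\Delta$. We then evaluate $\vartheta'$ on $\localstatic{r}(a)$.

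One subtlety must be handled here. When $\simd_\Delta$ is unfolded into chains of $\suc$, the intermediate variables must stay inside $\neigh{r}(a)$. They do, because the intermediate time vertices between two $\simd_\Delta$-related vertices both in the ball lie on the same $\suc$-chain between them. I would nevertheless relativise the chain witnesses if necessary, or simply work in the slightly larger window $\llbracket t_a-r\Delta,t_a+r\Delta\rrbracket$, whose width is already covered by Lemma~\ref{lem:tw}.

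The structure $\localstatic{r}(a)$ is computed by a BFS in $\Gaif(\graphstruct{\Gset}_\Delta)$ of radius $r$. By Lemma~\ref{lem:tw}, it has at most $|V|(2r\Delta+1)$ vertices and tree-width at most $\dtw_{2r\Delta+1}(\Gset)$. A tree decomposition of that width can be computed in linear time with Bodlaender's algorithm~\cite{B96}. Courcelle's Theorem (Theorem~\ref{th:courcelle}), with $\vartheta'$ viewed as an $\MSO$ formula with one free variable (or with $a$ marked by a fresh unary predicate, which leaves the tree-width unchanged), then decides $\localstatic{r}(a)\models\vartheta'(a)$ in time $g(\dtw_{2r\Delta+1}(\Gset),\Delta|\vartheta|)\cdot|V|(2r\Delta+1)$, up to the size of the edge set of the ball. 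Summing over the $|V|\cdot\tau\le|\Gset|$ time vertices gives at most $O(f\cdot|\Gset|^2)$ for this phase. Let $P$ be the set of time vertices that pass the test.

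\textbf{Global phase.} We must decide whether $P$ contains $k$ vertices that are pairwise at distance greater than $2r$ in $\Gaif(\graphstruct{\Gset}_\Delta)$. I would use the standard greedy argument from Frick--Grohe style proofs. Pick a vertex of $P$, delete its $2r$-ball from $P$, and repeat.
\begin{itemize}
\item If this produces $k$ vertices, we answer yes.
\item Otherwise, all of $P$ lies within distance $4r$ of fewer than $k$ chosen centres.
\end{itemize}
In the second case, $P$ is contained in the union of fewer than $k$ balls of radius $4r$. Any solution lies in this union, so we can search it exhaustively. Each ball has at most $|V|(8r\Delta+1)$ vertices, so a brute-force search over $k$-tuples costs $|\Gset|^k$, which is too much. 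Instead I would follow the classical reduction: express ``there are $k$ pairwise $2r$-distant elements of $P$'' as an $\MSO$ (indeed $\FO$) sentence, with $P$ given as a unary predicate. This sentence is evaluated on the union of the balls, whose connected components each have bounded temporal span $O(kr\Delta)$. Its tree-width is therefore bounded by $\dtw$ of a larger window. Alternatively, we can split the union into clusters according to distance and combine the answers by a simple dynamic programming over the clusters.

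Distance computations between candidate vertices are done by BFS from each vertex, costing $O(|\Gset|^2)$ in total. This is consistent with the claimed $|\Gset|^3$ bound, which leaves slack for naive BFS from all vertices together with pairwise checks.

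\textbf{Main obstacle.} I expect the main difficulty to be the global phase. The bound is stated in terms of $\dtw_{2r\Delta+1}$ only, whereas the clustering argument naturally needs windows of width $O(kr\Delta)$. My first attempt would therefore avoid clustering. I would compute pairwise distances among the candidates by BFS ($O(|\Gset|^3)$ overall, matching the stated bound) and then exploit the time structure. Two candidates whose times differ by more than $2r\Delta$ are automatically at distance greater than $2r$. This should allow a left-to-right greedy or dynamic programming over time, which only inspects windows of width $2r\Delta+1$ and searches for an independent set in the distance-$\le 2r$ conflict graph restricted to each window.
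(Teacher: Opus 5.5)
The proposal does not prove the lemma. The global step, deciding whether $P=\llbracket\vartheta\rrbracket$ contains $k$ elements pairwise at distance $>2r$ in $\Gaif(\graphstruct{\Gset}_\Delta)$, is left open, and none of the routes you sketch works with the parameter $\dtw_{2r\Delta+1}$.

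\textbf{Why this step is the crux.} It is a $k$-independent-set problem in the conflict graph ``distance $\le 2r$'' on $P$. Greedy selection alone is therefore unsound: a candidate in conflict with two mutually far candidates defeats it for $k=2$. The Frick--Grohe repair you describe needs bounded tree-width on time windows of width $\Theta(kr\Delta)$, which is already wider than $4r\Delta+1$ around a single centre.

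\textbf{Your worry is justified.} Let $m=2r\Delta+1$. Let snapshots $0$, $m-1$, $2m-2$ be three perfect matchings whose union is a cubic graph of tree-width $\Omega(|V|)$ (e.g.\ the colour classes of a bipartite cubic expander), and let all other snapshots be edgeless. Every differential of width $m$ is then a forest or a disjoint union of cycles, so $\dtw_m(\Gset)\le 2$. Contracting the time paths of the differential of width $2m-1$ gives the expander, so $\dtw_{2m-1}(\Gset)=\Omega(|V|)$.

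\textbf{The sweep over windows of width $2r\Delta+1$ does not escape this.} A naive left-to-right dynamic programme has to remember which candidates of the last $2r\Delta$ snapshots were chosen, giving $|V|^{O(k)}$ states. A conflict between two candidates of one window may only be witnessed by paths leaving it by up to $r\Delta$ snapshots on either side. It therefore cannot be evaluated inside that window, and the enlarged window of width $4r\Delta+1$ has uncontrolled tree-width by the example above. Precomputing all distances by BFS just hands you an explicit graph in which you must find a $k$-independent set, with no width bound to exploit.

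\textbf{Comparison with the paper.} Your local phase coincides with the paper's: locality of $\vartheta$, Lemma~\ref{lem:fo-fodelta}, Lemma~\ref{lem:tw}, and Courcelle's Theorem at each of the $|V|\tau$ time vertices, $O(|\Gset|^2)$ overall. For the global phase the paper does only your ``first attempt''. It computes pairwise distances by BFS and asserts that $k$ pairwise far elements of $\llbracket\vartheta\rrbracket$ can then be found in time $O(k\cdot(|V|\tau)^2\cdot|\Gset|)$, without any selection argument. So the missing idea is not supplied there either.

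\textbf{What does go through.} Strengthen the parameter to $\dtw_L(\Gset)$ for some $L=O(kr\Delta)$, which is still a function of $\varphi$ and $\Delta$. Run the greedy. If it stops with fewer than $k$ centres $p_i$, merge the intervals $\llbracket t_{p_i}-2r\Delta,t_{p_i}+2r\Delta\rrbracket$ into clusters lying more than $2r\Delta$ apart in time, so elements of distinct clusters never conflict. Widen each cluster by $r\Delta$ on both sides. On it, use Courcelle's Theorem with $P$ as a unary predicate to compute the largest $\ell\le k$ such that the cluster contains $\ell$ pairwise far elements of $P$. Then compare the sum of these values with $k$.

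\textbf{A correction to your local phase.} The intermediate time vertices of a $\simd_\Delta$-pair inside $\neigh{r}(a)$ need not lie in $\neigh{r}(a)$. Take $\Delta=r=2$, $a=(w,0)$, a common $G_0$-neighbour $x$ of $w$ and $u$, and $uw\in E_2\setminus(E_0\cup E_1)$. Then $(u,0),(u,2)\in\neigh{2}(a)$ but $(u,1)\notin\neigh{2}(a)$. So you do need your fallback. Evaluating $\vartheta'$ on the whole window $\llbracket t_a-r\Delta,t_a+r\Delta\rrbracket$ is correct by $r$-locality applied to the window structure. Evaluating it on $G^r(a)$, as the paper does, is not justified.
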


\begin{proof}
	To prove this result, we propose the following algorithm: for every $a\in V \times \llbracket0, \tau -1\rrbracket$, 
	we compute $\localstatic{r}(a)$. We check whether $\graphstruct{\Gset}_\Delta\models\vartheta(a)$, which is equivalent to 
	$\struct{\neigh{r}(a)}^{\graphstruct{\Gset}_\Delta}\models\vartheta(a)$ by $r$-locality of $\vartheta$. By Lemma~\ref{lem:fo-fodelta},
	it is equivalent to $\struct{\neigh{r}(a)}^{\graphstruct{\Gset}}\models\vartheta'(a)$, i.e. $\localstatic{r}(a)\models\vartheta'(a)$.
	Thanks to Courcelle's theorem (Theorem ~\ref{th:courcelle}), this can be done in $O(|\localstatic{r}(a)| \cdot
		g_a(|\vartheta'(a)|, \tw(\localstatic{r}(a))))$
	for $g_a$ a calculable function depending only on $|\vartheta'(a)|$
	and $\tw(\localstatic{r}(a))$. By Lemmas~\ref{lem:fo-fodelta} and \ref{lem:tw}, it means that it can be done in 
	$O(|\localstatic{r}(a)| \cdot h_a(|\vartheta(a)|\cdot \Delta, \dtw_{2r\Delta+1}(\Gset)))$ for a computable function $h_a$.
	Hence, since for every $a\in V\times\llbracket 0,\tau-1\rrbracket$, $|V(\localstatic{r}(a))|\leq |V|\cdot 2r\Delta+1$, 
	the set $\llbracket \vartheta \rrbracket = \{a\in V \times \llbracket0, \tau -1\rrbracket\mid \graphstruct{\Gset}_\Delta\models\vartheta(a)\}$
	can be computed in time $O(|V|\cdot\tau\cdot |V|\cdot (2r\Delta+1)\cdot h(|\varphi|\cdot\Delta,  \dtw_{2r\Delta+1}(\Gset)))$, i.e.
	in time $O(|\Gset|^2\cdot h(|\varphi|\cdot\Delta,  \dtw_{2r\Delta+1}(\Gset)))$.
	For $\Gset$ to satisfy $\varphi$, we need to find $k$ distinct elements from $\llbracket \vartheta \rrbracket$ that are pairwise distant
	from more than $2r$ in $\Gaif(\Gset)$. To check whether two elements $a$ and $b$ from $\llbracket \vartheta\rrbracket$, we can use 
	a classical graph search algorithm, hence it can be done in time $O(|V(\Gaif(\Gset))|+|E(\Gaif(\Gset))|)$, i.e. in $O(|\Gset|)$. Hence, 
	finding $k$ such elements can be done in time $O(k\cdot(|V|\cdot\tau)^2\cdot|\Gset|)$. 
	Combining the two steps of the algorithm, the time complexity is in $O(|\Gset|^3\cdot h(|\varphi|, \dtw_{2r\Delta+1}(\Gset)))$. 
\end{proof}

We finally apply Theorem~\ref{th:gaifman} to obtain the following algorithmic meta-theorem: for $\varphi$ a formula in $\FOone^\Delta$,
we let $\Delta_\varphi=2r\Delta+1$, with $r\leq 7^{\qr(\varphi)}$ the integer from Theorem~\ref{th:gaifman}. 

\begin{theorem} \label{th:fot1ttw}
	Given a $\FOone^\Delta$ formula $\varphi$ and a temporal graph $\Gset$, we can decide if
	$\graphstruct{\Gset}_\Delta \models \varphi$ in
	$O(|\Gset|^3\cdot f(\Delta, \dtw_{\Delta_\varphi}(\Gset),
	|\varphi|))$, with $f$ a computable
	function only depending on $\Delta,	|\varphi| \text{ and } \allowbreak
	\dtw_{\Delta_\varphi}(\Gset)$.
\end{theorem}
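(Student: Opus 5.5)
The plan is to reduce the statement to Lemma~\ref{lem:local-sen} through Gaifman's Locality Theorem (Theorem~\ref{th:gaifman}). Since $\varphi$ is a sentence of $\FOone^\Delta=\FO(\sigma_\Delta)$ over a relational signature, Theorem~\ref{th:gaifman} effectively rewrites it as an equivalent Boolean combination $\beta(\psi_1,\dots,\psi_m)$ of basic $r$-local sentences $\psi_1,\dots,\psi_m$, with $r\leq 7^{\qr(\varphi)}$. A sentence has no free variable, so the $r$-local formulas $\psi(x)$ of the theorem do not occur at top level. The number $m$, the sizes $|\psi_j|$ and the radius $r$ are bounded by a computable function of $|\varphi|$ alone, because the transformation is effective. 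I would fix a uniform radius $r$ for all $\psi_j$. This is harmless: an $r'$-local formula with $r'\le r$ is also $r$-local, and the radius used in the distance constraints $\delta_{2r}$ can be taken as the one produced by the theorem. Hence $\Delta_\varphi=2r\Delta+1$ is well defined and depends only on $\Delta$ and $|\varphi|$.

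The second step evaluates each $\psi_j$ on $\graphstruct{\Gset}_\Delta$ with Lemma~\ref{lem:local-sen}. Each evaluation costs $O(f_j(\Delta,\dtw_{2r\Delta+1}(\Gset),|\psi_j|)\cdot|\Gset|^3)$. Inside that lemma, Lemma~\ref{lem:tw} bounds $\tw(\localstatic{r}(a))$ by $\dtw_{\Delta_\varphi}(\Gset)$, and Courcelle's Theorem (Theorem~\ref{th:courcelle}) is applied to the translated formula from Lemma~\ref{lem:fo-fodelta}. Unlike the twin-width version, no decomposition is needed in the input: Courcelle's theorem computes a tree decomposition of bounded width itself, using Bodlaender's algorithm, in linear time for fixed width. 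The parameter is then monotone in the right direction, since $\dtw_{2r_j\Delta+1}\le\dtw_{\Delta_\varphi}$ when $r_j\le r$. This holds because a differential over a shorter window is a subgraph of one over a longer window. Edge cases occur when the window exceeds the lifetime, so that $\tau<\Delta_\varphi$. I would treat them by adopting the convention that $\dtw_{\Delta'}=\tw(\Gset_\rightarrow)$ when $\Delta'\ge\tau$. In that case the whole structure is a single window and Theorem~\ref{th:mso1ttwww}'s argument (Courcelle directly) applies.

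Finally, the truth value of $\varphi$ is $\beta$ applied to the $m$ computed bits. The total time is $\sum_j O(f_j\cdot|\Gset|^3)$, which equals $O(|\Gset|^3\cdot f(\Delta,\dtw_{\Delta_\varphi}(\Gset),|\varphi|))$ with $f=m\cdot\max_j f_j$. This $f$ is computable, because $m$ and the $\psi_j$ are computable from $\varphi$.

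I expect the main obstacle to be the bookkeeping around locality radii rather than the algorithm itself. The delicate points are ensuring that the single $r$ from Theorem~\ref{th:gaifman} governs every local part, so that $\Delta_\varphi$ dominates all windows used, and checking that the $r$-neighbourhood is taken in the Gaifman graph of $\graphstruct{\Gset}_\Delta$, which includes the $\simd_\Delta$ edges. That Gaifman graph is exactly what confines $\neigh{r}(a)$ to the window $\llbracket t_a-r\Delta,t_a+r\Delta\rrbracket$ in Lemma~\ref{lem:tw}.
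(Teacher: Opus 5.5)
Your proposal is correct and is essentially the paper's proof: you apply Theorem~\ref{th:gaifman} to rewrite $\varphi$ as a Boolean combination of basic $r$-local sentences with $r\le 7^{\qr(\varphi)}$, decide each one with Lemma~\ref{lem:local-sen}, and combine the resulting bits, where the number of sentences depends only on $\varphi$. Your extra bookkeeping (a uniform radius, monotonicity of $\dtw$ in the window length, and a convention for $\tau<\Delta_\varphi$) only spells out points the paper leaves implicit, and the expanded-tree-width theorem you meant to cite for the short-lifetime case is Theorem~\ref{th:mso1ttw}.
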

\begin{proof}
	Let $\varphi$ be a $\FOone^\Delta$ formula
	and $\Gset$ be a temporal graph. By Theorem~\ref{th:gaifman}, we can decompose
	$\varphi$ into an equivalent boolean combination of basic $r$-local sentences,
	such that $r \leq 7^{\qr(\varphi)}$.
	For each of these basic
	$r$-local sentences, Lemma~\ref{lem:local-sen}
	exhibits a fixed-parameter tractable algorithm to decide if $\Gset$ satisfies the formula. The size
	of the Boolean combination only depends on $\varphi$: 
	we can hence check if $\graphstruct{\Gset}_\Delta \models \varphi$ by combining the results of Lemma~\ref{lem:local-sen}, all in time in
	$O(|\Gset|^3\cdot f(\Delta, \dtw_{\Delta_\varphi}(\Gset),
	|\varphi|))$, with $f$ a computable
	function only depending on $\Delta,	|\varphi| \text{ and } \allowbreak
	\dtw_{\Delta_\varphi}(\Gset)$.
\end{proof}

\begin{remark}
Our definition of graphs with bounded $\Delta$-differential tree-width is somewhat reminiscent to the notion of locally bounded tree-width~\cite{FrickGrohe99, FG06}. 
However, our definition is tailored for temporal graphs and is not
captured by the notion of locally bounded tree-width. In particular, for every $\Delta$, it is possible to design a class of temporal graphs whose $\Delta$-differential tree-width is bounded but the class
of its static expansion graphs is not
locally bounded tree-width. Conversely, the class of temporal graphs in which each snapshot is a $n\times n$ grid has not a $\Delta$-differential tree-width bounded, for any $\Delta$, but the class
of its static expansion graphs has locally-bounded tree-width~\cite{SchmitzBuffiere25}.
\end{remark}

\section{Examples of properties expressed in $\MSOtwo$}\label{subsec:mso-ex}
We show in this section examples of classical properties that can be expressed on temporal graphs using \MSOtwo. 
 
\paragraph*{Temporal Matching}
Let $\Gset=(V,E_0\dots, E_{\tau-1})$ be a temporal graph. Two edges are \emph{dependent} if they are incident to a same time vertex. A \emph{temporal matching} is then a set of independent
edges, meaning that no two edges are incident to the same time vertex. The formula $\tempmatch(X):=\forall e\forall e'.~ (e\in X\wedge e'\in X\wedge e\neq e'\rightarrow \neg (\exists x.~ \inc(e,x)\wedge \inc(e',x)))$ expresses the fact that a set $X$ of temporal edges is a temporal matching. 

Sometimes, one may ask for a stronger requirement for two edges to be independent: they cannot be incident to two time vertices that represent the same untime vertex in two snapshots too close in time,
typically in an interval smaller than some given $\Delta$. The formula becomes then $\tempmatch_\Delta(X):=\forall e\forall e'.~ (e\in X\wedge e'\in X\wedge e\neq e'\rightarrow \neg (\exists x \exists x'.~\inc(e,x)\wedge \inc(e',x')\wedge \suc_{\leq\Delta}(x,x')))$.

A similar yet more involved notion has been described in~\cite{BB18,BBR20,MMNZZ20,MMNZZ23}. Let $\Gset=(V,E_0,\dots, E_{\tau-1})$ be a temporal graph, and $\gamma$ an integer, representing a minimum duration. A \emph{$\gamma$-edge} is a set of edges between two time vertices, starting at some instant $t$
and repeating $\gamma$ times, i.e., if $uv\in E_i$ for $t\leq i<t+\gamma$, then $\{(uv,t),\dots, (uv,t+\gamma-1)\}$ is a $\gamma$-edge in $\Gset$. Two $\gamma$-edges $e_1$ and $e_2$ are \emph{dependent} if there
exist a vertex $v$ and a snapshot $t$ such that $(i)$ there exists $u\in V$ and $(uv,t)\in e_1$ and $(ii)$ there exists $w\in V$ and $(wv,t)\in e_2$. Note that $u$ and $w$ are not necessarily distinct. 
A \emph{$(k,\gamma)$-temporal matching} is a set of $k$ independent $\gamma$-edges.

We express the fact that a set of edges is a $\gamma$-edge by identifying an ``initial'' edge in the set, one that appears in the earlier snapshot, and requiring that all the edges appearing in the set are 
incident to the same untime vertices, and appear within a timeframe of size $\gamma$, and finally, that all the edges in the required time interval actually exist and belong to the set.

Namely, 
\begin{align*}
\gammaedge(X)&:= \exists e_1\exists u_1\exists v_1.~(e_1\in X\wedge \inc(e_1,u_1)\wedge \inc(e_1,v_1))\\
&\wedge (\forall e.~ (e\in X\rightarrow (\exists u\exists v.\inc(e,u)\wedge\inc(e,v)\wedge \suc_{\leq\gamma}(u_1,u)\wedge \suc_{\leq\gamma}(v_1,v))))\\
&\wedge \forall u\forall v.~ (((\suc(u_1,u)\wedge \suc(v_1,v))\vee\bigvee_{i=2}^\gamma(\suc_i(u_1,u)\wedge\suc_i(v_1,v))))\rightarrow\\
 &\exists e.~(\inc(e,u)\wedge\inc(e,v)\wedge e\in X))
\end{align*}

We can then define $k-\tempmatch(X_1,\dots,X_k):= \bigwedge_{1\leq i\leq k} (\gammaedge(X_i)) \wedge \forall e\forall e'.~ ((\bigvee_{i\neq j} (e\in X_i\wedge e'\in X_j)) \rightarrow \neg\exists x.~(\inc(e,x)
\wedge \inc(e',x)))$

\paragraph*{Cut of size 1}
Given a temporal graph $\Gset=(V,E_0,\dots, E_{\tau-1})$ and two vertices $s,z\in V$, we ask whether there exists a vertex $x\in V$ such that all the possible temporal
paths from $s$ to $z$ visits the vertex $x$. A temporal path from $s$ to $z$ is a sequence 
$(v_0v_1,t_1)(v_1v_2,t_2)\dots (v_{\ell-1}v_\ell,t_\ell)$ such that for all $1\leq i\leq\ell$, $0\leq t_i<\tau$, $t_{i-1}\leq t_i$ and $v_{i-1}v_{i}\in E_{i}$.
We say that two edges $e$ and $e'$ are \emph{time-respecting adjacent} if there are $u,v,w\in V$, $e=uv\in E_t$, $e'=vw\in E_{t'}$ and $t\leq t'$. Note that this is not a symmetric relation, due to 
elapsing of time. 
We define $\timeadj(e,e'):= \exists w\exists w'.~\inc(e,w)\wedge \inc(e',w')\wedge \sucstar(w,w')$ to express the fact that $e$ and $e'$ are time-respecting adjacent.

If we want to express that $X$ is closed by time-respecting adjacency restricted to a set $Y$, meaning that we require that if an edge belongs to $X\cap Y$,
any time-respecting adjacent edge belonging to $Y$ should be included in $X$, we use $\closedtempsuc(X,Y):=\forall u\forall v.~ (u\in X\wedge u\in Y\wedge\timeadj(u,v)\wedge v\in Y)\rightarrow v\in X$. This allows to express $\temppath(Y,x,y):=\exists e_1\exists e_2.~ (\inc(e_1,x)\wedge \inc(e_2,y)\wedge e_1\in Y\wedge e_2\in Y \wedge \forall X.~ (e_1\in X\wedge
\closedtempsuc(X,Y)\rightarrow e_2\in X))$, meaning that there is a temporal path from $x$ to $y$ that visits only edges in $Y$. 

Given additional constants $s$ and $z$, we define the following \MSOtwo{} formula:
$cut_{s,z}:=\exists c \forall x\forall y\forall Y.~(\app(s,x)\wedge \app(z,y)\wedge \temppath(Y,x,y))\rightarrow (\exists c' \exists e.~(\app(c,c')\wedge \inc(e,c')\wedge e\in Y))$. This formula states the existence
of a vertex $c$ such that every time there exists a temporal path from the vertex $s$ to the vertex $z$, $c$ appears on this path.

\paragraph*{$\FOone^\Delta$ expression for temporal cliques}
In $\FOone^\Delta$, we can express for instance the existence of a $\Delta$-clique of size $k$ in a temporal graph, for a given $\Delta$. A set of vertices is a $\Delta$-clique if there exists a point in time where every pair of its elements is connected by
an edge \emph{within a timeframe of $\Delta$-snapshots}. This can be done with the formula 
\begin{align*}
\varphi_{\textrm{clique}}&:=\exists x_{11}\dots\exists x_{1\Delta}\exists x_{21}\dots\exists x_{2\Delta}\dots\exists x_{k1}\dots\exists x_{k\Delta}.~ \bigwedge_{(i,j)\neq (i',j')}  \neg (x_{ij}=x_{i'j'})
\wedge\\
& \bigwedge_{1\leq i \leq k} \bigwedge_{1\leq m<n \leq \Delta}\simd_{\Delta}(x_{im}, x_{in})\wedge \bigwedge_{1\leq i<j\leq k}\bigvee_{1\leq m \leq \Delta} \adj(x_{im},x_{jm}).
\end{align*}

\section{Conclusion}
\label{section.conclusion}
We obtain the first two algorithmic meta-theorems for temporal graphs that do not take the lifetime as a parameter: every problem expressible in Monadic Second Order logic, resp.\ an expressive fragment of First Order logic, over temporal graphs can be solved in polynomial time on temporal graphs whose expanded tree-width, resp.\ expanded twin-width, is bounded. 
We also initiate the study of temporal graph differentials defined using a local time window of a given size.
To exemplify the usefulness of their introduction, we show another algorithmic meta-theorem for temporal graphs whose differentials have bounded twin-width (or bounded tree-width), for properties given in the same above fragment of First Order logic.
We believe it is an important question to model temporality via differentials:
How to quantify their growth (second derivative)?
How about geometric cases of temporal graphs defined by intersection models in a similar way as unit disk graphs?
In this case, would the positions of the vertex set's embedding help with, or be detrimental to, the definition of derivative?
Another line of research would be to enrich a bit the logic. What predicates can we add that would allow to keep FPT algorithms, \textit{e.g.} in the spirit of~\cite{KKMT19}?

\bibliography{references.bib}

\end{document}